\documentclass[12pt]{article}

\usepackage{hyperref}
\usepackage[sort&compress,numbers]{natbib}
\usepackage{doi}
\usepackage[margin=0.8in]{geometry}

\usepackage{authblk}

\usepackage{amsmath, amssymb, amsthm, mathtools, mathrsfs}
\usepackage{cmll}

\newcommand*{\vcbox}[1]{\begingroup
\setbox0=\hbox{#1}\parbox{\wd0}{\box0}\endgroup}

\def\Me{\mathcal M}
\def\Ce{\mathcal C}
\def\Pe{\mathcal P}
\def\Te{\mathcal T}

\def\TT{\mathbb T}
\def\Fe{\mathcal F}
\def\Reg{\mathcal R}
\def\supp{\operatorname{supp}}
\def\vtl{\vartriangleleft}
\def\permut{\mathscr S}

\def\Min{\operatorname{Min}}
\def\Tr{\operatorname{Tr}}

\newtheorem{theorem}{Theorem}[section]
\newtheorem{lemma}[theorem]{Lemma}
\newtheorem{coro}[theorem]{Corollary}
\newtheorem{prop}[theorem]{Proposition}

\theoremstyle{definition}
\newtheorem{defi}[theorem]{Definition}

\theoremstyle{remark}
\newtheorem{remark}{Remark}

\newtheorem{exm}{Example}
\newtheorem*{obs}{Observation}

\title{Order structure and signalling in  higher order quantum maps}
\author{Anna Jen\v cov\'a\footnote{email: anna.jencova@mat.savba.sk}}
\affil{\small Mathematical Institute, Slovak Academy of Sciences, \v{S}tef\'anikova 49,
814 73 Bratislava, Slovakia}
\date{}

\begin{document}

\maketitle

\abstract{
We study the signalling structure of higher order quantum maps from an
order-theoretic perspective, building on the combinatorial characterization of higher
order types by Bisio and Perinotti. We have shown in a previous work that types are
represented by boolean functions called type functions, and that each such function is
characterized by a related structure poset.  We  characterize the distributive lattice generated by all
type functions with fixed indices of input and output systems -- whose elements we call regular subtypes
-- by a monotonicity condition. Unlike the set of type functions,
the lattice of regular subtypes  is closed under the one-way signalling product, moreover, it is generated by a specific family of causally ordered types.  We then study signalling relations for maps belonging to a regular subtype, showing that the no-signalling conditions  between an input and an output system are determined by a single evaluation of the corresponding 
function. For higher order types specifically, we show that all signalling relations can
be read off directly from the structure  poset via a rank
parity condition. Finally, we study relations between the structure 
poset of a type and its normal forms, that is, expressions of the type in terms of causally ordered types.
We illustrate construction of normal forms on some examples, demonstrating the possibility
that the normal form can be systematically derived from maximal chains of the poset and
signalling relations between them.}

\section{Introduction}\label{sec:intro}

Quantum channels are the fundamental objects in quantum theory. All the basic operations,  such
as state preparations, measurements and state transformations are given by quantum channels
with some specific properties. To describe manipulations of quantum channels, quantum
supermaps (or superchannels) were introduced as transformations of 'higher order', whose inputs and outputs
are quantum channels \cite{chiribella2008transforming}. Pursuing this idea further leads to establishing the hierarchy  of Higher order quantum maps, 
where each level consists of transformations that map between transformations  on lower levels, satisfying certain
admissibility conditions. In fact, all the higher order operations are themselves quantum channels
with multipartite input and output spaces, satisfying certain restrictions.
This  hierarchy  provides a  suitable framework for  description 
of very general quantum protocols and their behaviour  in an operational way.

An important instance of a higher order map is a process that transforms finite sets  of
quantum channels into another channel. A process of this type, more precisely its Choi
operator, is called a process matrix \cite{oreshkov2012quantum}.  Such a transformation may have a definite causal order, which
means that the  input channels are processed in some fixed order. In that case, it is a  quantum
comb \cite{chiribella2008quantum,chiribella2009theoretical}, consisting of a sequence of quantum channels connected through  ancillary
inputs and outputs. The picture below shows a two-slot process matrix and a corresponding
3-comb.

\begin{minipage}{0.45\textwidth}
\centering
\includegraphics[scale=0.9]{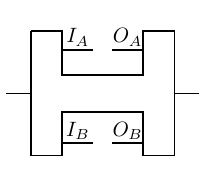}
\end{minipage}%
\begin{minipage}{0.4\textwidth}
\centering
\includegraphics[scale=0.9]{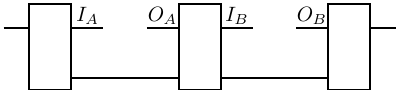}
\end{minipage}%

However, there are process matrices where the causal order is indefinite, such as the
quantum switch \cite{chiribella2013quantum} which acts on a pair of channels and produces a coherent mixture of their
composition in different orders. Such operations may give  advantages in some communication
and information processing tasks. This motivates  a deeper study of higher order
maps and their causal structure. See the
recent review \cite{taranto2025higher} for more details on higher  order quantum
processes,  related approaches, constructions, applications  and examples.

One of the basic approaches developed to study higher order maps is the theory of types by
Bisio and Perinotti \cite{perinotti2017causal, bisio2019theoretical, apadula2024nosignalling}, closely related to the approaches via
category theory \cite{kissinger2019acategorical,simmons2022higher, simmons2024acomplete}. Using the Choi representation, it was shown in
\cite{bisio2019theoretical} that the set of all channels of a given type can be
described as the set of positive operators of a fixed trace satisfying some linear
constraints. This lead to the  characterization of higher order types  by  superoperator
projections in \cite{hoffreumon2026projective,milz2024characterising}. As observed in
\cite{hoffreumon2026projective}, the projections corresponding to types form a Boolean
algebra that can be endowed with further operations of no-signalling, full signalling  and
one-way signalling
compositions, so that the structure of types can be studied by applying the algebraic
properties  of these operations. Similar constructions were found in the categorical
approach of \cite{simmons2022higher}, where the higher order theories were shown to have the
structure of a BV-logic.

The present paper is a continuation of (a part of) a  previous work \cite{jencova2024onthestructure},
inspired mainly by the approaches in \cite{perinotti2017causal, bisio2019theoretical} and
\cite{hoffreumon2026projective}. There it was pointed out that the combinatorial characterization
of types described in \cite{bisio2019theoretical,apadula2024nosignalling} leads to the identification of types
with certain boolean functions, called the type functions. The type functions naturally
live in a boolean algebra formed by (an interval in) the set of all boolean functions with
pointwise defined operations,
where the compositions analogous to those in \cite{hoffreumon2026projective} can
be conveniently introduced and have similar properties. From any type function, given a 
 list of elementary quantum systems on which the maps should act, one can construct the set of all higher order
quantum operations of the given type, and the algebraic operations correspond to
manipulations of such sets (intersections, affine combinations, tensor products, sets of
all effects, etc.) 

Apart from studying the algebraic structure of the set of all types, the type function can
be used to reveal the structure of the corresponding type. Using the M\"obius transform,
we relate to each type function a certain poset with vertices labelled by indices of the
elementary systems involved in the type. This poset (which we call the reduced structure
poset of a type function in the present work) uniquely represents the corresponding  type
function. Furthermore, this poset is a chain if and only if the corresponding type is causally ordered (that is, a quantum
comb). For a general type function, the maximal chains of the poset show the causal orders
involved in the corresponding  type and can be used to construct its 'normal form' as introduced in
\cite{hoffreumon2026projective}, expressing each type in terms of causally ordered ones.

In this work we extend these results in several ways. We first study  the set $\Te_{n,O}$ of all
type functions with fixed input and output indices. Apart from some trivial cases, this
set is not a lattice, and is strictly contained in the interval between the type functions of constant channels and the
type function of all channels with the given input/output structure. We study the distributive
lattice generated by $\Te_{n,O}$,  whose elements  we call the
regular subtypes. These are boolean functions describing sets of channels  that may be obtained
by taking affine combinations of channels that belong to one or more higher order types
with a given input/output structure. 

We show that while the set of type functions is not closed under the
causal (i.e. one-way signalling) products, the set of regular subtypes is.  We further
characterize regular subtypes  by a certain monotonicity condition.  For any regular
subtype, we prove a simple conditions that determines the no-signalling relations that must
be satisfied by any channel in the corresponding set.
We then turn to the special case of higher order
types,  where we show that the no-signalling relations can be seen from  the order
structure of the corresponding reduced structure poset. 
Finally, we show that the  normal form of a type can be constructed from a set of causally
ordered types whose cardinality is upper bounded by the number of maximal chains 
in the reduced structure poset of the type.  

The remainder of this paper is structured as follows. In Section \ref{sec:types}, we review the theory
of types of \cite{bisio2019theoretical} and the projective characterization of
\cite{hoffreumon2026projective,milz2024characterising}. In Section \ref{sec:functions}, we introduce the
type functions and the related constructions. In particular, we observe that the M\"obius
transform of a type function provides a translation between the combinatorial description
of types and the projective characterizations provided in \cite{milz2024characterising}. 
Section \ref{sec:subtypes} is devoted to characterization of regular subtypes, the signalling relations and
the normal form are studied in Section \ref{sec:signaling}. Some basic results on boolean functions are
summarized in Appendix \ref{app:fn}. More technical proofs regarding the properties of
the structure posets are given in Appendix \ref{app:structure}. In Appendix
\ref{app:normal}, we demonstrate on examples that a normal form can be constructed just by
considering the maximal chains of the reduced poset and the signalling relations. A more thorough  investigation of the relation between the reduced structure poset and normal forms of a type function  is left for future work.

\section{The type structure of higher order maps} \label{sec:types}

\subsection{Types}

 Types of higher order quantum maps and their combinatorial characterization
 were introduced in \cite{perinotti2017causal, bisio2019theoretical}. This framework
 starts with elementary types $A$, $B$, etc, representing basic quantum systems with
 underlying (finite dimensional) Hilbert spaces $\mathcal H_A$, $\mathcal H_B$, etc. 
The higher layers are defined inductively: for types denoted by $x$ and $y$, we define the
type $z\equiv x\to y$ as the
type of maps transforming operations of type $x$ to operations of type $y$. For example,
$A\to B$ specifies the type of quantum channels from system $A$ to system $B$, $(A_1\to
A_2)\to (A_3\to A_4)$ is the type of superchannels, transforming channels $A_1\to A_2$
into channels $A_3\to A_4$, etc. Each type is therefore composed over a sequence of
elementary types $A_1,\dots, A_n$ which will be assumed all different (but they might be
isomorphic). Trivial elementary types will be denoted by $I$ (that is,
$\mathcal H_{I}\equiv \mathbb C$). 

Using the Choi representation, the set of all maps  of a given type $x$ can be
identified with a subset of positive operators on the tensor product of all involved
elementary quantum systems. It was proved in
\cite{perinotti2017causal,bisio2019theoretical} that this set is determined by a trace
normalization condition and linear constraints. More precisely, let $T_1(x)$ be the set of the Choi operators of 
(deterministic) maps  of type $x$, constructed over elementary types $A_1,\dots, A_n$, and
let $\mathcal H_x:=\otimes_i \mathcal H_{A_i}$. Then we have \cite[Prop.~1]{bisio2019theoretical}
\begin{equation}\label{eq:BPtypes}
R\in T_1(x) \iff R\ge 0,\ R=\lambda_xI_{\mathcal H_x}+X_x,\qquad X_x \subseteq \Delta_x,
\end{equation}
where $\lambda_x>0$ is given by the construction of the type $x$, and $\Delta_x$ is a
certain subspace of traceless hermitian operators in $B(\mathcal H_x)$. This subspace is
determined as follows:  For each of the elementary systems $A_i$, the space $B_h(\mathcal
H_{A_i})$ of hermitian operators on  $\mathcal H_{A_i}$ has an orthogonal decomposition
(with respect to the Hilbert-Schmidt inner product) into multiples  of identity
and the  traceless part\footnote{Note that in
\cite{perinotti2017causal,bisio2019theoretical,apadula2024nosignalling}, the indices 0 and 1
in the definition of the spaces are interchanged. Unfortunately, these indices were flipped in
\cite{jencova2024onthestructure}. We keep this flipped notation for consistency.}:
\[
B_h(\mathcal H_{A_i})=L_{i,0}\oplus L_{i,1},\qquad L_0=\mathbb RI,\ L_1=\{X\colon
\mathrm{Tr}(X)=0\}.
\]
We therefore have an orthogonal decomposition of $B_h(\mathcal H_x)$:
\[
B_h(\mathcal H_x)=\bigotimes_{i=1}^n B_h(\mathcal H_{A_i})=\bigoplus_{s\in\{0,1\}^n} L_s,\qquad L_s:=L_{1,s_1}\otimes \dots\otimes
L_{n,s_n}.
\]
By \cite[Cor.~3]{bisio2019theoretical}, there is some subset of strings $D_x\subseteq \{0,1\}^n$ such
that $\Delta_x$ is the direct sum of subspaces  
\begin{equation}\label{eq:BPcombinatorial}
\Delta_x=\bigoplus_{s\in D_x} L_s.
\end{equation}

\subsection{Projective characterization}

The characterization of deterministic maps in \eqref{eq:BPtypes} and
\eqref{eq:BPcombinatorial} can be written in the form
\begin{equation}\label{eq:BPsubspace}
R\in T_1(x) \iff R\in S_x\cap B(\mathcal H_x)^+,\ \Tr[R]=c_x:=\dim(\mathcal H_x)\lambda_x, 
\end{equation}
where $S_x:=\Delta_x\oplus \mathbb R I_{\mathcal H_x}=\bigoplus_{s\in D_x\cup\{\theta\}}
L_s$, here $\theta=\theta_n$ is the string of zeros (see Appendix \ref{app:fn} for more
notations and definitions). Therefore, the type $x$ can be characterized by the linear subspace $S_x$ and a 
positive constant $c_x$. Replacing the subspace $S_x$ by the orthogonal projection $P_x$
onto it, we arrive at the basic idea of the projective characterization of higher order maps in
\cite{hoffreumon2026projective, milz2024characterising}. The projection $P_x$ can be rather
straightforwardly expressed as the sum of orthogonal projections onto $L_s$:
\begin{equation}\label{eq:BPprojection}
P_x=\sum_{s\in D_x\cup\{\theta\}} P_s,\qquad P_s=P_{1,s_1}\otimes \dots \otimes P_{n,s_n},
\quad P_{i,0}(X)=\frac{\Tr[X]}{\dim(\mathcal H_{A_i})} I_{\mathcal H_{A_i}},\
P_{i,1}=P_{i,0}^\perp.
\end{equation}

A general framework describing completely positive
transformations that map between sets of positive operators given by affine restrictions
 was introduced in \cite{milz2024characterising}. For the special case of higher order
 types, the corresponding
projections are expressed in terms of linear combinations (with coefficients in
$\{0,1,-1\}$) of trace-and-replace projections that (up to a
permutation of the spaces) can be written in the form
\begin{equation}\label{eq:MQprojection}
\Pi_T:= \bigotimes_{i\in T}P_{i,0}\bigotimes_{j\in T^C} id_{B_h(\mathcal H_{A_j})},\qquad T\subseteq [n].
\end{equation}
See \cite[Examples 1-4, 6-8]{milz2024characterising}. In general, this can be seen by iterating the general
from of such projections in \cite[Thm.~2]{milz2024characterising}, but also from \eqref{eq:BPprojection}.

It was  observed in \cite{hoffreumon2026projective} that all the projections corresponding
to higher order types form a boolean algebra, with a natural interpretation of the
corresponding algebraic operations and order structure. Moreover, further operations on these algebras are introduced, representing no-signalling, full signalling and  one-way signalling compositions of types, resulting in
a structure of a BV-logic, see also \cite{simmons2022higher}. This approach can
be extended to any families of mutually commuting projections representing basic systems.
In \cite{jencova2024onthestructure} it was observed that the algebraic structure of the higher order types can be 
obtained from the combinatorial representation of types in \cite{bisio2019theoretical}.
The corresponding constructions are explained in the next section.

\section{Type functions}\label{sec:functions}

In the framework of higher order types, there are two more operations defined using the trivial type
$I$: the dual type $\bar x$ and the tensor product of types $x\otimes y$  defined as
 \cite{bisio2019theoretical}
\[
\bar x\equiv x\to I,\qquad x\otimes y\equiv \overline{x\to \bar y}.
\]
The dual defines the type of \emph{effects} over $x$, describing transformations that map higher
order maps of type $x$ to probabilities (or to 1, in the deterministic case). The tensor product corresponds to
\emph{no-signalling composition} of two maps of types $x$ resp.  $y$.

We then have $\bar{\bar x}=x$ and $x\to y= \overline{x\otimes \bar y}$, it is therefore
clear that we can use $\otimes$ and $\bar{\cdot}$ instead of $\to$ in the construction of
types over a set of elementary types. In \cite{jencova2024onthestructure}, the higher
order types were constructed in the setting of the category of affine subspaces, where the
tensor product and the dual come from a *-autonomous structure on the category. We will
adopt this approach in the more restricted quantum setting here, but note that exactly the
same constructions apply for higher order maps of the classical theory, see
\cite[Sec.~2.2.1]{jencova2024onthestructure}.

 To avoid redundancies, we will assume that all the involved elementary types are nontrivial.
Any type is then uniquely given as a linear term whose variables are the elementary types $A_1,\dots,
A_n$ and the operations are $\otimes$ and $\bar{\cdot}$, e.g. 
$x=\overline{(A_{i_1}\otimes \bar{A}_{i_2})}\otimes A_{i_3}\otimes
(\overline{A_{i_4}\otimes A_{i_5}}\dots)$.
Notice that the  constant $c_x$ in \eqref{eq:BPsubspace} is also determined from this term as
$c_x:=\Pi_{i\in I_x}\dim(\mathcal H_{A_i})$, where
$I_x\subseteq [n]$ is the  set of indices of the elementary types that are subject to
taking the dual an odd number of times. Such indices will be called \emph{inputs} of $x$, the
indices in $O_x:=[n]\setminus I_x$ will be called \emph{outputs} of $x$. The intuition
behind this terminology is that any higher order map can be seen as a channel from the
input system $\otimes_{i\in I_x}\mathcal H_{A_i}$ to  the output $\otimes_{j\in O_x}
\mathcal H_{A_j}$.

In \cite{jencova2024onthestructure}, the types were related to certain boolean functions. We will use the notation
\begin{equation}\label{eq:Fen}
\Fe_n:=\{f:\{0,1\}^n\to \{0,1\}\colon f(\theta)=1\}.
\end{equation}
Some operations on $\Fe_n$ and their properties  are collected in Appendix \ref{app:fn}. In
particular, $\Fe_n$ is a boolean algebra with a bottom element denoted as $p_n$, top element
$1_n$ (the string of 1's), operations $\vee$ and $\wedge$ defined as pointwise maximum and minimum, and with a
complement given as $f^*=1-f+p_n$. Moreover, a tensor product $\otimes :\Fe_m\times \Fe_n\to \Fe_{m+n}$ is
defined in a natural way. 

Each type $x$ can be related to a unique element of $\Fe_n$ as follows.  Consider
the subset of strings $D_x$ corresponding to the type $x$ as in \eqref{eq:BPcombinatorial}
and let $f$ be  the characteristic function of $D_x\cup\{\theta\}$, that is,
$f:\{0,1\}^n\to \{0,1\}$ and 
\[
f(s)=\chi_{D_x\cup\{\theta\}}(s):=\begin{dcases} 1 & \text{if } s\in D_x\cup\{\theta\}\\
 0 &\text{otherwise}.
\end{dcases}
\]
Then it is clear that $f\in \Fe_n$ and we have 
\begin{equation}\label{eq:function_subspace}
S_x= S_f\equiv \bigoplus_{s\in \{0,1\}^n} f(s)L_s.
\end{equation}
Note that the function $f$ only contains information about the \emph{type structure} of $x$
(see \cite{bisio2019theoretical}). The  particular elementary types $A_1,\dots,A_n$ that are
contained in $x$ are an additional information needed for the construction of the subspace
$S_f$, and we should actually write $S_f=S_f(A_1,\dots,A_n)$ and $L_s=L_s(A_1,\dots,A_n)$. 
For simplicity, we will suppress this notation.

So assume that we are given a collection of elementary types $A_1,\dots, A_n$, with
corresponding Hilbert spaces $\mathcal H_i:=\mathcal H_{A_i}$ and $\mathcal H:=\otimes_i
\mathcal H_i$. 
By orthogonality of the subspaces $L_s$, it is clear that the map $f\mapsto S_f$ is
injective. It is also clear that $S_f$ is a linear subspace in $B_h(\mathcal H)$ for
any $f\in \Fe_n$, and this subspace always contains a multiple of the identity (since
$f(\theta)=1$). Moreover, we have
\begin{equation}\label{eq:Sf}
S_{f\wedge g}=S_f\wedge S_g,\ S_{f\vee g}=S_f\vee S_g,\ S_{f^*}=S_f^\perp\vee \mathbb R
I_{\mathcal H},\ S_{f\otimes
g}=S_f\otimes S_g,
\end{equation}
here $\wedge$ and $\vee$ on the right hand sides of the equalities are the usual lattice
operations on linear subspaces,  and
$\otimes$ is the tensor product.
Replacing the subspace $S_f$ by the corresponding projection $P_f=\sum_s
f(s)P_s$, we obtain a collection of mutually commuting orthogonal projections
$\{P_f\}_{f\in \Fe_n}$, such that $f\mapsto P_f$ is a
representation of the boolean algebra $\Fe_n$ in the lattice of projections on
$B_h(\mathcal H)$, and this representation respects the tensor product structure on both
sides. 

In \cite{jencova2024onthestructure}, functions $f\in \Fe_n$ such that $S_f$ is a subspace corresponding to a higher
order type were called type functions. Using \eqref{eq:BPcombinatorial} and
\eqref{eq:BPsubspace}, we obtain the following definition.

\begin{defi}\label{defi:type_functions}
A function $f\in \Fe_n$ is a \emph{type function} if $\supp(f)=D_x\cup\{\theta\}$ for some
type $x$, here $\supp(f):=\{s\in \{0,1\}^n\colon f(s)=1\}$. 
 The set of all type functions in $\Fe_n$ is denoted as $\Te_n$.

\end{defi}

As noted in \cite{jencova2024onthestructure}, we have  $\Te_1=\Fe_1=\{1_1,p_1\}$ but
$\Te_n\subsetneq \Fe_n$ for $n\ge 2$, so that not all $f$ correspond to higher order
types

The type function related to a type $x$ can be constructed from its expression as a term over its elementary 
types $A_1,\dots, A_n$ as follows. Note that for an elementary type $A$ and
arbitrary types $x$, $y$ with corresponding type functions $f$ and $g$, we have
(\cite[Prop.~3.2]{jencova2024onthestructure}) 
\begin{equation}\label{eq:terms}
A\equiv 1_1,\qquad \bar x\equiv f^*=1-f+p_n,\qquad (x\otimes y)\equiv (f\otimes g).
\end{equation}
The related type function of a term $x$ is then obtained by applying the rules in \eqref{eq:terms}.
We can see from this construction that any type function $f\in 
\Te_n$ is, up to a  permutation, either a product of two type functions with smaller $n$, or a complement of such a product. 
This fact is frequently used in inductive proofs of properties of type functions.

Let $f\in \Te_n$ and let $x$ be the corresponding type. We will denote $I_f:=I_x$,
$O_f:=O_x$ the
corresponding sets of input and output indices. These sets can be  easily obtained from
$f$. Indeed, let us 
define the strings $e^i\in \{0,1\}^n$, $i\in [n]$,  given as 
\begin{equation}\label{eq:ei}
e^i_j=\delta_{i,j},  \qquad i,j=1,\dots n. 
\end{equation}
If we need to stress the value of $n$, we use the notation $e^{i:n}$.

\begin{prop}\label{prop:inout}\cite[]{jencova2024onthestructure} Let $f\in \Te_n$ and let $i\in [n]$. Then 
 $i\in I_f$ if and only if $f(e^i)=0$. 
\end{prop}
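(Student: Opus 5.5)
We prove the statement by structural induction on the term defining the type $x$, using the rules \eqref{eq:terms}. Any type is built from elementary types by tensor products and duals, so it suffices to check three things. First, the claim holds for an elementary type. Second, it is preserved under taking the dual. Third, it is preserved under the tensor product, with indices permuted as needed. Throughout, $i\in I_x$ means that $A_i$ lies under an odd number of bars.

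For the base case $n=1$ with $x=A$, the type function is $1_1$. Then $1_1(e^1)=1$, and index $1$ is an output, as claimed.

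For the dual, let $\bar x$ have type function $f^*=1-f+p_n$. For $n\ge 1$ and any $i$, the string $e^i\neq\theta$, so $p_n(e^i)=0$ (recall $p_n$ is the indicator of $\theta$). Hence $f^*(e^i)=1-f(e^i)$. On the other side, taking the dual adds one bar over every elementary type, so it swaps inputs and outputs: $I_{\bar x}=O_x$. Both sides of the equivalence therefore flip, and the claim for $x$ gives the claim for $\bar x$.

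For the tensor product, let $x\otimes y$ have function $f\otimes g\in\Fe_{m+n}$, defined by $(f\otimes g)(s,t)=f(s)g(t)$. Take $i\le m$. Then $e^{i:m+n}=(e^{i:m},\theta_n)$, so $(f\otimes g)(e^i)=f(e^{i:m})g(\theta_n)=f(e^{i:m})$, using $g(\theta)=1$. The tensor product introduces no new bars, so $i\in I_{x\otimes y}$ iff $i\in I_x$, and the induction hypothesis for $x$ finishes this case. The case $i>m$ is symmetric, using $f(\theta_m)=1$ and the hypothesis for $y$. If the term places indices in a permuted order, we first relabel, since permutations act on strings and on index sets compatibly.

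The main obstacle is only bookkeeping. One must confirm that the pointwise product is indeed the tensor product on $\Fe$ as defined in Appendix \ref{app:fn}, and that $p_n$ vanishes on every $e^i$. Both are immediate from the definitions, so the induction closes.
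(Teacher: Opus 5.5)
Your proof is correct. The base case $1_1(e^1)=1$ checks out. The dual step is sound, since $p_n(e^i)=0$ gives $f^*(e^i)=1-f(e^i)$ and dualizing swaps $I_{\bar x}=O_x$. The tensor step is sound, since $e^{i:m+n}=e^{i:m}\theta_n$ and $g(\theta_n)=1$. As a by-product, your argument shows that $I_x$ depends only on the type function, so $I_f:=I_x$ is well defined.

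The paper gives no proof of this statement; it quotes it from \cite{jencova2024onthestructure}. Structural induction over terms via \eqref{eq:terms} is the scheme the paper itself uses for properties of type functions.

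For comparison, there is a one-line route once you have the sandwich $p_{I_x}\le f\le p^*_{O_x}$, i.e.\ the fact that every type function is a subtype \cite[Lemma 4.1]{jencova2024onthestructure}. Since $p_{I}(e^i)=p^*_{O}(e^i)=1$ exactly when $i\in O$, the two bounds coincide at $e^i$ and force $f(e^i)=0$ iff $i\in I_x$. This is the observation made right after Definition \ref{defi:subtype}. That route is shorter but presupposes the sandwich, which has to be established first (e.g.\ by a similar induction). Your argument needs nothing beyond \eqref{eq:terms}.
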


This result  can be straightforwardly extended to define a decomposition of $[n]$ into the
'input/output indices' relative to  any $f\in \Fe_n$:
\begin{equation}\label{eq:inout}
I_f:=\{i\in [n]\colon f(e^i)=0\},\qquad O_f:=\{j\in [n]\colon f(e^j)=1\}.
\end{equation}
For $f,g\in \Fe_n$, we clearly have $I_f=I_g$ if and only if $O_f=O_g$, and in this case,
$I_{f\wedge g}=I_{f\vee g}=I_f$.

Given any $f\in \Fe_n$ and elementary types $A_1,\dots,A_n$, we may now define 
\[
T_1(f):= \{C\in S_f\cap B(\mathcal H)^+\colon \Tr[C]=c_f:=\Pi_{i\in I_f} \dim(\mathcal H_i)\},
\]
which can always be interpreted as  convex sets of Choi operators of some completely
positive maps $B(\mathcal H_{I_f})\to B(\mathcal H_{O_f})$. These maps are not necessarily
trace preserving, but such sets always contain the constant channel sending all states to
the maximally mixed state in $B(\mathcal H_{O_f})$ (corresponding to the element
$c_f\dim(\mathcal H)^{-1}I_{\mathcal H}\in T_1(f)$).

From \eqref{eq:BPsubspace} and \eqref{eq:function_subspace}, we see that if $f$ is the
type function of a type $x$, we have $T_1(f)=T_1(x)$.

\begin{lemma}\label{lemma:Tf} Let $f,g\in \Fe_n$, $O_f=O_g$. Then
\begin{align*}
T_1(f\wedge g)&=T_1(f)\cap T_1(g)\\
T_1(f\vee g)&=\{C\in B(\mathcal H)^+\colon
C=uC_1+(1-u)C_2,\ C_1\in T_1(f),\ C_2\in T_1(g),\ u\in \mathbb R\}.
\end{align*}
In particular, $T_1(f)\subseteq  T_1(g)$  if and only if $f\le g$. 

\end{lemma}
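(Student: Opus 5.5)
The plan is to work directly with the subspaces $S_f$ and the relations \eqref{eq:Sf}. Since $O_f=O_g$, we also have $I_f=I_g=I_{f\wedge g}=I_{f\vee g}$. Hence all four sets $T_1(f),T_1(g),T_1(f\wedge g),T_1(f\vee g)$ use the same trace normalization $c:=c_f=c_g$. This shared constant is what makes the statements reduce to linear algebra of subspaces.

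The first equality is then immediate. By \eqref{eq:Sf}, $S_{f\wedge g}=S_f\cap S_g$, so a positive operator $C$ with $\Tr[C]=c$ lies in $S_{f\wedge g}$ if and only if it lies in both $T_1(f)$ and $T_1(g)$.

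For the second equality, the inclusion $\supseteq$ follows from affinity. If $C=uC_1+(1-u)C_2\ge 0$ with $C_i$ as stated, then $C\in S_f+S_g=S_{f\vee g}$ and $\Tr[C]=uc+(1-u)c=c$, so $C\in T_1(f\vee g)$.

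The converse is the main step, because the decomposition must use \emph{positive} elements $C_1,C_2$, not just arbitrary elements of $S_f$ and $S_g$. I would use the identity-containing trick. Let $C\in T_1(f\vee g)$ and decompose it along the orthogonal sum $S_{f\vee g}=\bigoplus_{s}(f\vee g)(s)L_s$. Write $C=\frac{c}{d}I+X+Y$, where $d=\dim\mathcal H$, $X\in\bigoplus_{s\neq\theta,\,f(s)=1}L_s$ and $Y\in\bigoplus_{s\ne\theta,\,f(s)=0,\,g(s)=1}L_s\subseteq S_g$; both $X$ and $Y$ are traceless. Now choose $t>0$ large enough that $\frac{c}{d}I\pm tX\ge 0$ and $\frac{c}{d}I\pm tY\ge 0$, which is possible because the identity is in the interior of the positive cone. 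Set $C_1=\frac{c}{d}I+tX\in T_1(f)$ and $C_2=\frac{c}{d}I+tY\in T_1(g)$. Then
\[
\tfrac{1}{2t}C_1+\tfrac{1}{2t}C_2+\bigl(1-\tfrac1t\bigr)\tfrac{c}{d}I=\tfrac12\bigl(C+\ldots\bigr)
\]
is not yet of the required two-term form, so I would rather take $C_1=\frac{c}{d}I+2X$ and $C_2=\frac{c}{d}I+2Y$ with $u=\tfrac12$, which gives exactly $C=\tfrac12C_1+\tfrac12C_2$. These need not be positive, however. The fix is to allow a general real $u$: set $C_1=\frac{c}{d}I+\frac1uX$ and $C_2=\frac{c}{d}I+\frac{1}{1-u}Y$, so that $uC_1+(1-u)C_2=C$ for any $u\in(0,1)$. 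Positivity still fails for small $u$ or small $1-u$. A cleaner route is to absorb $C$ itself. Take $C_2:=\frac{c}{d}I-\epsilon Y\in T_1(g)$, which is positive for small $\epsilon>0$, and solve $C=uC_1+(1-u)C_2$ for $C_1$ with $u=1+\epsilon$, $1-u=-\epsilon$. Then
\[
C_1=\tfrac{1}{1+\epsilon}\bigl(C+\epsilon C_2\bigr)=\tfrac{1}{1+\epsilon}\Bigl(\tfrac{c}{d}(1+\epsilon)I+X+Y-\epsilon^2 Y\Bigr).
\]
This still contains a $Y$ component, so the construction has to be adjusted so that the $Y$ parts cancel exactly. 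Choosing $C_2=\frac{c}{d}I-\frac{1}{\epsilon}Y$ with $u=1+\epsilon$ makes them cancel, but then $C_2$ is not positive for small $\epsilon$. The honest resolution is to take $u$ negative and large instead. Put $C_1=\frac{c}{d}I+\lambda X$ and $C_2=\frac{c}{d}I+\mu Y$ with small $\lambda,\mu>0$, which keeps both positive. The required equalities are $u\lambda=1$ and $(1-u)\mu=1$, and these have no common solution with $\lambda,\mu>0$. So this pure two-term positive form fails in general, and a successful choice must put the positive $C$ itself into one of the terms. I expect this positivity bookkeeping to be the main obstacle. The statement only requires $C\in B(\mathcal H)^+$, with $u$ an arbitrary real and $C_1,C_2$ in the $T_1$ sets, and the real content is that $T_1(f)$ and $T_1(g)$ affinely span $S_{f\vee g}$ at trace $c$. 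I would prove that affine-span statement directly: $T_1(f)$ affinely spans $\{Z\in S_f:\Tr Z=c\}$ because it contains a neighbourhood of $\frac{c}{d}I$ there. I would then handle the combination with finitely many terms reduced to two by an explicit small perturbation argument around $\frac{c}{d}I$.

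The final claim then follows easily. If $f\le g$, then $S_f\subseteq S_g$ and the traces agree, so $T_1(f)\subseteq T_1(g)$. Conversely, suppose $f(s)=1$ and $g(s)=0$ for some $s$. Pick a nonzero $X\in L_s$, which is traceless since $s\ne\theta$. Then $\frac{c}{d}I+\epsilon X\in T_1(f)$ for small $\epsilon>0$, but it is not in $S_g$ by orthogonality, so it is not in $T_1(g)$.
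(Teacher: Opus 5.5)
The first equality, the inclusion $\supseteq$ for $T_1(f\vee g)$, and the final equivalence $T_1(f)\subseteq T_1(g)\iff f\le g$ are fine. Your direct argument for the last one via a perturbation in $L_s$ is even more explicit than the paper's route through $T_1(f)=T_1(f\wedge g)$.

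\textbf{The gap: the inclusion $T_1(f\vee g)\subseteq\{uC_1+(1-u)C_2\}$ is not proved.} This is the only nontrivial part of the lemma. Your write-up ends by asserting that the natural two-term construction ``fails in general'' and that $C$ itself must appear in one of the terms. It then replaces the argument with an unexecuted plan: an affine-span claim plus an unspecified perturbation reducing many terms to two. That conclusion is false, and it comes from an unnecessary sign restriction.

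Positivity of $C_1=\frac{c}{d}I+\lambda X$ and $C_2=\frac{c}{d}I+\mu Y$ only requires $|\lambda|$ and $|\mu|$ to be small, not $\lambda,\mu>0$. Likewise, your formula $C_1=\frac{c}{d}I+\frac1u X$, $C_2=\frac{c}{d}I+\frac{1}{1-u}Y$ gives $uC_1+(1-u)C_2=C$ for every $u\notin\{0,1\}$, not only for $u\in(0,1)$. Take $|u|$ large, for instance $u<0$ as you yourself suggested before imposing $\lambda,\mu>0$. Then both $\frac1u$ and $\frac1{1-u}$ are small, so $C_1\in T_1(f)$ and $C_2\in T_1(g)$ are positive, and the proof closes. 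The missing idea is exactly what the lemma is about: the combination is affine, and $u$ must be allowed to lie outside $[0,1]$.

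For comparison, the paper does essentially this with different bookkeeping. It writes $C=X_1+X_2$ with $X_1\in S_f$, $X_2\in S_g$, and picks $\mu>0$ with $-\mu I\le X_1,X_2\le \mu I$. Then $C=(X_1+\mu I)-(\mu I-X_2)$ is a difference of positive elements of $S_f$ and $S_g$. Normalizing gives $C=uC_1-vC_2$ with $u,v>0$, and the trace condition forces $u-v=1$. Either version works, but as submitted your proposal does not establish the second equality.
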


Notice that the set $T_1(f\vee g)$ consists of positive operators that are \emph{affine}
combinations of Choi matrices of types $f$ and $g$, which is not necessarily expressible
as a convex combination. An important example are some instances of \emph{causally nonseparable} processes
that are expressible as an affine combination, but not as a convex combination,  of causally ordered processes
\cite{araujo2015witnessing,oreshkov2016causal}.

\begin{proof} By \eqref{eq:Sf}, we have $S_{f\wedge g}=S_f \wedge S_g=S_f\cap S_g$ and
$S_{f\vee g}=S_f\vee S_g$.
The first equality is now immediate from the definition and $I_f=I_g$. For the
second equality, it is easy to see that all elements of the form on the right are
contained in $T_1(f\vee g)$. For the converse, note that  any $C\in T_1(f\vee g)$ is a positive operator of the form
$C=X_1+X_2$, where $X_1\in S_f$ and $X_2\in S_g$. Let $\mu>0$ be such that $-\mu
I_{\mathcal H}\le X_1,X_2\le \mu I_{\mathcal H}$, then $\tilde X_1:=X_1+\mu I_{\mathcal
H}\in S_f\cap B(\mathcal H)^+$, $\tilde X_2:=\mu I_{\mathcal H}-X_2\in S_g\cap B(\mathcal
H)^+$ and $C=\tilde X_1-\tilde X_2$. By normalization, we see that there are some $u,v>0$ 
and $C_1\in T_1(f)$, $C_2\in T_1(g)$ such that $C=uC_1-vC_2$. Using the trace condition,
 we obtain that we must have $u-v=1$, so that $-v=1-u$. 

For the  last assertion, we first observe that $T_1(f)$ generates $S_f$ (as also seen in the
argument above), and since the map $f\mapsto S_f$ is injective, we have $T_1(f)=T_1(g)$ if and only if $f=g$. The statement  is now straightforward from the first equality.

\end{proof}

\subsection{The M\"obius transform} \label{sec:mobius}

An important example of a type function is the following, see \cite[Ex.~4.1]{jencova2024onthestructure}. For any subset  $T\subseteq [n]$, put
\[
p_T(s)=\Pi_{i\in T}(1-s_i),\qquad s\in \{0,1\}^n.
\]
It is clear that $p_T\in \Fe_n$, and the input/output indices of $p_T$ are $I=T$, $O=T^C$. 
This function corresponds to the type of constant channels: 
given elementary types $A_1,\dots, A_n$, the deterministic maps of this type are those that
map any state $\rho$ on $\mathcal H_T:=\otimes_{i\in T}\mathcal H_i$ to some fixed state 
on $\mathcal H_{T^C}$. In particular, for $T=\emptyset$, we obtain $p_\emptyset=1_n$,
corresponding to state preparation channels, and for $T=[n]$ we get $p_{[n]}=p_n$,
describing the trace. For any $T\subseteq [n]$, the corresponding projection onto the
subspace $S_{p_T}$ is 
$P_{p_T}=\Pi_T$ given in \eqref{eq:MQprojection}. For the  complement $p_T^*$, the set 
$T_1(p_T^*)$ consists of 
all channels $B(\mathcal H_{T^C})\to B(\mathcal H_T)$. 

With some abuse of notations,  we
will use the notation $I\to O$ for the type structure of channels where the input/output
spaces comprise of elementary types indexed by $i\in I$/$j\in O$, so that the
corresponding type function is $p_O^*$.  For a given set of elementary
types $A_1,\dots, A_n$, we obtain the type  $\otimes_{i\in I} A_i \to
\otimes_{j\in O}A_j$.  Similar notations will
be used for other type structures of higher order maps.

Any function $f\in \Fe_n$ (in fact, any function $\{0,1\}^n\to \mathbb R$) can be uniquely 
expressed as a (real) linear combination of the functions $p_T$, $T\subseteq [n]$. 
The M\"obius transform of a  function $f$ can be defined as the
unique assignment $[n]\supseteq T\mapsto \hat f_T$ of the corresponding coefficients, that
is, 
\begin{equation}\label{eq:mobius}
\hat f_T:= \sum_{\substack{s\in \{0,1\}^n\\
s_j=1 \forall j\in T^C}}(-1)^{\sum_{j\in T}s_j}f(s),\qquad    f=\sum_{T\subseteq [n]} \hat f_T p_T.
\end{equation}
The properties of $\hat f$ for type functions were explored in
\cite{jencova2024onthestructure} and will be summarized in sections below. In particular,
it was proved that we always have $\hat f_T\in \{-1,0,1\}$ for a type function $f$. Given
the uniqueness of the expression of $f$ in terms of $p_T$, and the fact that
$P_{p_T}=\Pi_T$, we arrive at the following observation.

\begin{obs} The M\"obius transform provides a transition between the Bisio-Perinotti
\cite{bisio2019theoretical} and the Milz-Quintino \cite{milz2024characterising} characterizations of
quantum higher order types.

\end{obs}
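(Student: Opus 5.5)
The Observation has two parts. First, every projection of the form $P_f$ with $f$ a type function is a $\{0,1,-1\}$-linear combination of the trace-and-replace projections $\Pi_T$. Second, the coefficients of this combination are exactly the M\"obius coefficients $\hat f_T$, computed from the Bisio--Perinotti support set $D_x\cup\{\theta\}$.

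My plan is to make the map $g\mapsto P_g:=\sum_s g(s)P_s$ meaningful for arbitrary real functions $g:\{0,1\}^n\to\mathbb R$. It is then linear in $g$, and I will check that it sends $p_T$ to $\Pi_T$. The Observation should follow by applying this linear map to the expansion \eqref{eq:mobius}.

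The first step is to verify $P_{p_T}=\Pi_T$ directly; this is the one computation that carries content. For each $i$ we have $id_{B_h(\mathcal H_{A_i})}=P_{i,0}+P_{i,1}$. Expanding the tensor product in \eqref{eq:MQprojection}, every factor $i\in T^C$ splits as $P_{i,0}+P_{i,1}$, while every factor $i\in T$ stays $P_{i,0}$. This gives
\[
\Pi_T=\sum_{s\colon s_i=0\ \forall i\in T}P_s=\sum_s p_T(s)P_s=P_{p_T},
\]
since $p_T(s)=\prod_{i\in T}(1-s_i)$ is exactly the indicator of the condition $s_i=0$ for all $i\in T$.

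Next, let $f$ be the type function of $x$, so that $f=\chi_{D_x\cup\{\theta\}}$. By \eqref{eq:BPprojection} and \eqref{eq:function_subspace}, the Bisio--Perinotti projection is $P_x=\sum_s f(s)P_s=P_f$. Writing $f=\sum_T\hat f_Tp_T$ as in \eqref{eq:mobius} and using linearity of $g\mapsto P_g$ together with the first step, we get
\[
P_x=\sum_{T\subseteq[n]}\hat f_T\,\Pi_T .
\]
This is an expansion of $P_x$ in the trace-and-replace projections, which is the Milz--Quintino form. For type functions the coefficients lie in $\{-1,0,1\}$, as quoted from \cite{jencova2024onthestructure}.

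It remains to show that the translation is canonical, so that it also runs in the other direction. The projections $\Pi_T$ are linearly independent, because the functions $p_T$ are linearly independent and the map $g\mapsto P_g$ is injective (the $P_s$ are mutually orthogonal and nonzero). Hence any expression of $P_x$ as a combination of the $\Pi_T$ has coefficients $\hat f_T$. Conversely, the inverse M\"obius transform recovers $f$, and with it $D_x$, from those coefficients.

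I expect no genuine obstacle. The only care needed is in the first step, namely in the bookkeeping for the tensor expansion and in matching the flipped $0/1$ convention for the spaces $L_{i,0}$ and $L_{i,1}$.
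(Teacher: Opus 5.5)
Your proposal is correct and matches the paper's own justification: linearity of $g\mapsto P_g=\sum_s g(s)P_s$, the identity $P_{p_T}=\Pi_T$, and uniqueness of the expansion $f=\sum_T\hat f_T p_T$ together give $P_f=\sum_T \hat f_T\,\Pi_T$. You also spell out two details the paper leaves implicit: the tensor-expansion check of $P_{p_T}=\Pi_T$, and the linear independence of the $\Pi_T$, which uses that the elementary systems are nontrivial so that every $P_s\neq 0$.
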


As shown in \cite[]{jencova2024onthestructure}, the M\"obius transform helps us to characterize an important higher order type. 
A type function is  called a \emph{chain type} if there is a chain of subsets
$S_0\subsetneq \dots\subsetneq S_N\subseteq [n]$ with $N$ even, such that 
\begin{equation}\label{eq:chain}
f=\sum_{i=0}^N (-1)^i p_{S_i}.
\end{equation}
The corresponding type is a quantum comb (\cite{chiribella2009theoretical}), with the successive input and output spaces
given as in the following diagram

\begin{center}
\includegraphics{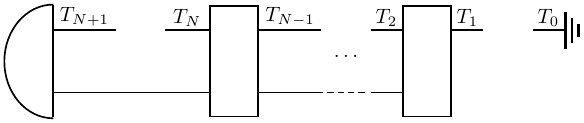}
\end{center}
where  $T_0:=S_0$, $T_i:=S_i\setminus S_{i-1}$, $i=1,\dots,N$,
$T_{N+1}=[n]\setminus S_N$. Note that the sets $T_0$ and $T_{N+1}$ may be empty, the
corresponding spaces in the diagram are then  trivial. As can be also seen from the
diagram, the input and output indices in this case are given as 
\[
I_f=\bigcup_{k=0}^{\frac N2} T_{2k},\qquad  O_f=\bigcup_{k=0}^{\frac N2} T_{2k+1}.
\]
In the case $n\le 3$ all type functions are chain types, so that all higher order types over at
most three elementary types are combs.

\subsection{The causal product} \label{sec:causal}

Let $f_i\in \Fe_{n_i}$, $i=1,2$, $n=n_1+n_2$  and let us fix the decomposition
$[n]=[n_1]\oplus[n_2]$, with the corresponding decomposition of strings
$s=s^{1}s^{2}$ (see Appendix \ref{app:fn}). We define the causal product as
\begin{equation}\label{eq:causalp}
f_1\vtl f_2:=(f_1-p_{n_1})\otimes 1_{n_2}+p_{n_1}\otimes f_2 
\end{equation}
using the tensor product of boolean functions. On strings, the resulting function acts as 
\begin{equation}\label{eq:causalp_s}
(f_1\vtl f_2)(s)=\begin{dcases} f_1(s^1) &\text{if } s^1\ne \theta\\
f_2(s^2) & \text{otherwise}.
\end{dcases}
\end{equation}
In particular, this implies that $f_1\vtl f_2\in \Fe_{n_1+n_2}$. We similarly  define 
\[
f_2\vtl f_1:= f_1\otimes p_{n_2}+1_{n_1}\otimes (f_2-p_{n_2})
\]
with a formula similar to \eqref{eq:causalp_s}. Note that this definition is with respect
to the fixed decomposition $[n]=[n_1]\oplus[n_2]$ and concatenation of strings $s=s^1s^2$,
and the function $f_2$ always acts on the second part $s^2$.

As explained in
\cite{jencova2024onthestructure}, the causal product $f_1\vtl f_2$ of type functions   can be interpreted as a composition
of types in such a way that that the map corresponding to $f_1$ is in the causal future of
the map corresponding to $f_2$ (similarly for $f_2\vtl f_1$). This definition was inspired by the 'prec' connector
$\preceq$ in
the  projective characterization of \cite{hoffreumon2026projective}, and we have
\[
P_{f_1\vtl f_2}=P_{f_1}\succeq P_{f_2},\qquad P_{f_2\vtl f_1}=P_{f_1}\preceq P_{f_2}.
\]
Let us summarize some properties of the causal product that will be needed in the sequel.
\begin{lemma}\label{lemma:causalp}\cite[Lemmas 4.11 and 4.12]{jencova2024onthestructure} 
\begin{itemize}
\item[(i)] $(f_1\vtl f_2)^*=f_1^*\vtl f_2^*$,
\item[(ii)] $(f_1\vtl f_2)\vtl f_3=f_1\vtl(f_2\vtl f_3)=:f_1\vtl f_2\vtl f_3$,
\item[(iii)] $f_1\otimes f_2=(f_1\vtl f_2)\wedge (f_2\vtl f_1)$,
$f_1\parr f_2=(f_1\vtl f_2)\vee (f_2\vtl f_1)$. 
\item[(iv)] $(f_1\vee f_2)\vtl (f_3\vee f_4)=(f_1\vtl f_3)\vee (f_2\vtl f_4)=(f_1\vtl
f_4)\vee (f_2\vtl f_3)$. A similar statement holds for $\wedge$. 
\end{itemize}
\end{lemma}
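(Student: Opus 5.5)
The four items are identities between functions in $\Fe_{n}$, so I will prove each by evaluating both sides on an arbitrary string and using the pointwise description \eqref{eq:causalp_s} of the causal product. Throughout, $s=s^1s^2$ (or $s^1s^2s^3$ for (ii)) is split according to the fixed decomposition of $[n]$. The two cases to distinguish are $s^1\neq\theta$ and $s^1=\theta$. Two facts about $\Fe_n$ from Appendix \ref{app:fn} will be used. First, $f^*(s)=1-f(s)$ for $s\ne\theta$ and $f^*(\theta)=1$. Second, $(f\otimes g)(s^1s^2)=f(s^1)g(s^2)$.

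For (i), I take $s^1\neq\theta$. Then $(f_1\vtl f_2)(s)=f_1(s^1)$, and since $s\ne\theta$ we get $(f_1\vtl f_2)^*(s)=1-f_1(s^1)=f_1^*(s^1)$, which equals $(f_1^*\vtl f_2^*)(s)$. If $s^1=\theta$, both sides reduce to $f_2^*(s^2)$; when $s=\theta$, both sides equal $1$.

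For (ii), I evaluate both sides on $s^1s^2s^3$. If $s^1\ne\theta$, both sides give $f_1(s^1)$. If $s^1=\theta$ and $s^2\neq\theta$, both give $f_2(s^2)$. Otherwise both give $f_3(s^3)$. The only point to check is that $s^1s^2=\theta$ exactly when both $s^1$ and $s^2$ are $\theta$, which is immediate.

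Item (iii) is where I expect the main care to be needed. I compute the product $(f_1\otimes f_2)(s)=f_1(s^1)f_2(s^2)$ case by case.
\begin{itemize}
\item If both $s^1\ne\theta$ and $s^2\ne\theta$, then $(f_1\vtl f_2)(s)=f_1(s^1)$ and $(f_2\vtl f_1)(s)=f_2(s^2)$. Their minimum is $f_1(s^1)f_2(s^2)$, as required.
\item If $s^1=\theta$, then $f_1(s^1)=1$ and both causal products equal $f_2(s^2)$.
\item The case $s^2=\theta$ is symmetric.
\end{itemize}
For the $\parr$ identity I need the definition $f_1\parr f_2=(f_1^*\otimes f_2^*)^*$. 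I then apply the $\wedge$ identity to $f_1^*,f_2^*$, use (i) together with De Morgan in the boolean algebra $\Fe_n$, and finally use $f^{**}=f$.

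For (iv), the key observation is that for $s^1\neq\theta$ the left side is $f_1(s^1)\vee f_2(s^1)$, while for $s^1=\theta$ it is $f_3(s^2)\vee f_4(s^2)$. On the right, $(f_1\vtl f_3)\vee(f_2\vtl f_4)$ gives $f_1(s^1)\vee f_2(s^1)$ when $s^1\neq\theta$ and $f_3(s^2)\vee f_4(s^2)$ otherwise. So the pairing of $f_1,f_2$ with $f_3,f_4$ is irrelevant, which explains the second equality. The $\wedge$ version follows by replacing max with min throughout.
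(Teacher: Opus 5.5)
Your proof is correct. The paper gives no argument for this lemma; it only cites \cite[Lemmas 4.11 and 4.12]{jencova2024onthestructure}. Your case-by-case evaluation via \eqref{eq:causalp_s}, splitting on whether $s^1=\theta$ and using $f^*(s)=1-f(s)$ for $s\neq\theta$, is a complete, self-contained verification of all four items.

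There is one small step you pass over. The $\parr$ identity uses (i) also for the reversed product, i.e.\ $(f_2\vtl f_1)^*=f_2^*\vtl f_1^*$. This is not literally item (i), because $f_2\vtl f_1$ is defined by the separate formula $f_1\otimes p_{n_2}+1_{n_1}\otimes(f_2-p_{n_2})$. It does follow from the same two-case computation with $s^1$ and $s^2$ exchanged. Alternatively, you can check the $\parr$ identity pointwise directly, as you did for $\otimes$. It is worth stating this explicitly.

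For comparison, (i) can also be obtained algebraically from \eqref{eq:causalp}. Expand bilinearly and use $1_n=1_{n_1}\otimes 1_{n_2}$, $p_n=p_{n_1}\otimes p_{n_2}$ and $1_{n_1}-f_1=f_1^*-p_{n_1}$. This gives $(f_1\vtl f_2)^*=(f_1^*-p_{n_1})\otimes 1_{n_2}+p_{n_1}\otimes f_2^*=f_1^*\vtl f_2^*$ in one line. Your pointwise route is just as short and handles all four items uniformly.
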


We will also use the following extension of Lemma \ref{lemma:causalp} (iii). 

\begin{lemma}\label{lemma:causalp_ext} Let $f_1,g_1\in \Fe_{n_1}$, $f_2,g_2\in \Fe_{n_2}$
and assume  that $f_1\le g_1$ and $f_2\le g_2$. Let us fix the decomposition
$[n]=[n_1]\oplus [n_2]$. Then we have
\[
f_1\otimes f_2= (f_1\vtl g_2)\wedge (f_2\vtl g_1).
\]

\end{lemma}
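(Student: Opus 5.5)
The plan is to verify the identity pointwise on strings $s=s^1s^2\in\{0,1\}^{n_1}\times\{0,1\}^{n_2}$. I will use the string formula \eqref{eq:causalp_s} for $f_1\vtl g_2$ and its analogue for $f_2\vtl g_1$. By the convention fixed after \eqref{eq:causalp_s}, the latter reads
\[
(f_2\vtl g_1)(s)=\begin{dcases} f_2(s^2) & \text{if } s^2\neq\theta,\\ g_1(s^1) & \text{otherwise}.\end{dcases}
\]
I also need the tensor product of boolean functions, which I take from Appendix \ref{app:fn} to be $(f_1\otimes f_2)(s)=f_1(s^1)f_2(s^2)=f_1(s^1)\wedge f_2(s^2)$. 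Finally, $\wedge$ is the pointwise minimum. No properties of type functions are needed, so the statement holds for arbitrary elements of $\Fe_{n_1}$ and $\Fe_{n_2}$.

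I will split into four cases according to whether $s^1$ and $s^2$ equal $\theta$.
\begin{itemize}
\item If $s^1\neq\theta$ and $s^2\neq\theta$, the right-hand side is $f_1(s^1)\wedge f_2(s^2)$, which equals the left-hand side directly.
\item If $s^1=\theta$ and $s^2\neq\theta$, the first factor gives $g_2(s^2)$ and the second gives $f_2(s^2)$. Their minimum is $f_2(s^2)$ because $f_2\le g_2$. The left-hand side is $f_1(\theta)f_2(s^2)=f_2(s^2)$, using $f_1(\theta)=1$.
\item If $s^1\neq\theta$ and $s^2=\theta$, the factors are $f_1(s^1)$ and $g_1(s^1)$, with minimum $f_1(s^1)$ since $f_1\le g_1$. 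This agrees with $f_1(s^1)f_2(\theta)=f_1(s^1)$.
\item If $s=\theta$, both sides equal $1$, since every element of $\Fe$ takes value $1$ at $\theta$.
\end{itemize}

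There is no real obstacle here. The only points needing care are the orientation convention for $f_2\vtl g_1$, where $f_2$ always acts on $s^2$, and the place where each hypothesis is used: $f_2\le g_2$ in the second case and $f_1\le g_1$ in the third. Taking $g_i=f_i$ recovers Lemma \ref{lemma:causalp} (iii), which serves as a consistency check.
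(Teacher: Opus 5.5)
Your proof is correct and complete. Your reading of $f_2\vtl g_1$ (value $f_2(s^2)$ if $s^2\neq\theta$, otherwise $g_1(s^1)$) agrees with the definition $f_2\vtl g_1=g_1\otimes p_{n_2}+1_{n_1}\otimes(f_2-p_{n_2})$. Each hypothesis $f_i\le g_i$ enters exactly in the case you name.

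The paper takes a partly different route.
\begin{itemize}
\item It proves $f_1\otimes f_2\le (f_1\vtl g_2)\wedge(f_2\vtl g_1)$ structurally, not pointwise. First, Lemma \ref{lemma:causalp}~(iv) gives monotonicity of $\vtl$, e.g.\ $f_1\vtl f_2=(f_1\vtl f_2)\wedge(f_1\vtl g_2)\le f_1\vtl g_2$. Then Lemma \ref{lemma:causalp}~(iii) writes $f_1\otimes f_2=(f_1\vtl f_2)\wedge(f_2\vtl f_1)$.
\item Only the reverse inequality is checked on strings. The paper shows that $(f_1\otimes f_2)(s)=0$ forces the right-hand side to vanish, and remarks that this direction needs no assumption on $g_1,g_2$.
\end{itemize}

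Your approach is more elementary and self-contained, since it uses only the string formulas for $\vtl$ and $\otimes$. Because $g_i=f_i$ is a special case, your argument also reproves Lemma \ref{lemma:causalp}~(iii), which the paper imports as input. The paper's approach instead presents the lemma as a formal consequence of the algebraic rules for the causal product. It also separates cleanly which inequality depends on $f_i\le g_i$. Your case analysis shows the same thing implicitly: when $s^1=\theta$ or $s^2=\theta$, the minimum is automatically at most the $f$-term, and the hypothesis is needed only for the lower bound.
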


\begin{proof} Using Lemma \ref{lemma:causalp} (iv), we have 
\[
f_1\vtl f_2=f_1\vtl (f_2\wedge g_2)=(f_1\vtl f_2)\wedge (f_1\vtl g_2)\le (f_1\vtl g_2)
\]
and similarly $f_2\vtl f_1\le f_2\vtl g_1$. Using part (iii) of the lemma, we get
\[
f_1\otimes f_2=(f_1\vtl f_2)\wedge (f_2\vtl f_1)\le (f_1\vtl g_2)\wedge (f_2\vtl g_1).
\]
For the opposite inequality, it is enough to show that $(f_1\otimes f_2)(s)=0$ implies
that $(f_1\vtl g_2)(s)\wedge (f_2\vtl g_1)(s)=0$. So let  $s=s^1s^2$ be a string such that
$(f_1\otimes f_2)(s)=0$. If  $f_1(s^1)=0$, then $s^1\ne \theta$, so that $f_1\vtl
g_2(s)=f_1(s^1)=0$. In the case that $f_1(s^1)=1$ we must have $f_2(s^2)=0$, which similarly
implies that $f_2\vtl g_1(s)=0$. This shows that $(f_1\vtl g_2)(s)\wedge (f_2\vtl
g_1)(s)=0$.

Note that we did not need any assumptions on $g_1$ and $g_2$ for this inequality.

\end{proof}

The next result shows that the causal product of type functions is not necessarily a type
function.

\begin{prop}\label{prop:causal_product_type} Let $f\in \Te_n$, $g\in \Te_m$.
Then 
$f\vtl g$ is a type function if and only if $f$ or $g$ is a chain type.  If both $f$ and
$g$ are chain types, then $f\vtl g$ is a chain type as well.

\end{prop}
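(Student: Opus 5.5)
The plan is to handle the chain case by an explicit M\"obius computation, the sufficiency direction by reducing to causal products with a single block, and the necessity direction by a support argument on the M\"obius transform.

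\emph{Step 1: both $f$ and $g$ chain types.} Write $f=\sum_{i=0}^N(-1)^ip_{S_i}$ with $S_i\subseteq[n]$, and $g=\sum_{j=0}^M(-1)^jp_{T_j}$ with $T_j\subseteq[m]$. Using $p_S\otimes 1_m=p_S$ and $p_n\otimes p_T=p_{[n]\cup T}$ (with $T$ shifted into the second block) in \eqref{eq:causalp}, I expect to get
\[
f\vtl g=\sum_{i=0}^N(-1)^ip_{S_i}-p_{[n]}+\sum_{j=0}^M(-1)^jp_{[n]\cup T_j}.
\]
The sets involved form a chain $S_0\subsetneq\dots\subsetneq S_N\subseteq[n]\subseteq[n]\cup T_0\subsetneq\dots$, and the signs alternate. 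The only possible coincidences are $S_N=[n]$ and $T_0=\emptyset$; in each case the two equal sets carry opposite signs and cancel. After removing these cancelled pairs, the remaining sum still has alternating signs, an even length and starts with $+$, so it has the form \eqref{eq:chain}.

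\emph{Step 2: $f$ or $g$ is a chain type $\Rightarrow$ $f\vtl g\in\Te$.} A chain type is an iterated causal product of single blocks $1_k$ or $p_k$ (the comb diagram). Hence, by associativity (Lemma~\ref{lemma:causalp}(ii)), it suffices to show that $a\vtl g$ and $g\vtl a$ are type functions whenever $g\in\Te_m$ and $a\in\{1_k,p_k\}$; we then induct on the number of blocks.

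\begin{itemize}
\item For $a=1_k$: from \eqref{eq:causalp_s}, $g\vtl 1_k\le 1_k\vtl g$ pointwise, since the two agree when $s^1=\theta$ or $s^2=\theta$, and otherwise the left side is $g(s^2)\le1$. Lemma~\ref{lemma:causalp}(iii) then gives $1_k\vtl g=1_k\parr g$.
\item For $a=p_k$: dually, $p_k\vtl g\le g\vtl p_k$, so $p_k\vtl g=p_k\otimes g$.
\item Symmetrically, $g\vtl 1_k=g\parr 1_k$ and $g\vtl p_k=g\otimes p_k$ after reversing the roles.
\end{itemize}

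Tensor products of type functions are type functions by \eqref{eq:terms}, and so are pars $h_1\parr h_2=(h_1^*\otimes h_2^*)^*$. This completes the sufficiency direction.

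\emph{Step 3: neither $f$ nor $g$ is a chain type $\Rightarrow$ $f\vtl g\notin\Te$.} By \eqref{eq:terms}, a type function on at least two indices is, up to permutation, either $h_1\otimes h_2$ with both factors on nonempty index sets, or the complement of such a product. By Lemma~\ref{lemma:causalp}(i), $(f\vtl g)^*=f^*\vtl g^*$, and complements of non-chain types are non-chain, since the chain types are closed under $^*$. It therefore suffices to show that $f\vtl g$ is not a tensor product $h_1\otimes h_2$ over any nontrivial partition $[n+m]=A\sqcup B$.

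For a product, $\widehat{h_1\otimes h_2}_{U\cup V}=(\hat h_1)_U(\hat h_2)_V$, so the M\"obius support is a product set. Consequently, if $X,Y$ are in the support, then so is the mixed set $(X\cap A)\cup(Y\cap B)$. On the other hand, by the formula of Step 1, every set in the support of $\widehat{f\vtl g}$ is either contained in $[n]$ (coming from $\hat f$) or contains $[n]$ (coming from $\hat g$). Since $f$ is not a chain type, $\supp\hat f$ contains two incomparable sets, and similarly $\supp\hat g$ contains two incomparable sets $Z,W$. The plan is to mix a set $X\subsetneq[n]$ from the $f$-part with a set $[n]\cup Z$ from the $g$-part across the partition $A\sqcup B$. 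With a suitable choice among the incomparable pairs, the mixed set should be neither $\subseteq[n]$ nor $\supseteq[n]$, which is a contradiction.

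\emph{Main obstacle.} I expect the main difficulty to be Step 3. One must choose the pairs so that the mixed set genuinely straddles $[n]$ for every possible partition $A\sqcup B$, including partitions that do not separate $[n]$ from its complement. One must also check that the cancellation at $p_{[n]}$ in the M\"obius formula does not remove the sets used in the mixing.
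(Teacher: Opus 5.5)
Steps 1 and 2 are essentially fine. A general chain type is an iterated causal product of blocks only after a permutation of $[n]$. In your last bullet the identities are swapped: with $g$ on the first block, $g\vtl 1_k=g\otimes 1_k$ and $g\vtl p_k=g\parr p_k$. This is harmless, since both products preserve type functions.

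The genuine gap is in Step 3. The statement you reduce to, that $f\vtl g$ is never a tensor product over a nontrivial partition, is false, so no choice of mixing pairs can establish it. Suppose $j$ is a free output of $g$. Then $g$ does not depend on $s_j$, hence neither does $f\vtl g$, and $f\vtl g=(f\vtl g')\otimes 1_1$ up to a permutation, where $g'$ is the restriction of $g$ to $s_j=0$. Concretely, $f=f_{ns}$ and $g=f_{ns}\otimes 1_1$ are non-chain type functions, and $f\vtl g=(f_{ns}\vtl f_{ns})\otimes 1_1$ with respect to $[9]=[8]\oplus[1]$. Dually, a free input $i$ of $f$ gives $f\vtl g=p_1\otimes(f'\vtl g)$. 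For a partition isolating such an index, one of your two mixed sets is always contained in $[n]$ and the other always contains $[n]$. This is exactly the obstacle you anticipated, and it cannot be overcome.

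What is missing is an induction on $n+m$ that strips such indices; your mixing idea in fact shows they are the only obstruction. Take $X\in\supp\hat f$ with $X\neq[n]$ and $Z\in\supp\hat g$ with $Z\neq\emptyset$. Both exist because $f,g$ are not chains, and neither is affected by the cancellation at $p_{[n]}$. Require both $(X\cap A)\cup(([n]\cup Z)\cap B)$ and $(([n]\cup Z)\cap A)\cup(X\cap B)$ to be comparable with $[n]$. This gives two implications: $Z\cap B\neq\emptyset$ implies $[n]\cap A\subseteq X$, and $Z\cap A\neq\emptyset$ implies $[n]\cap B\subseteq X$. Since $X\not\supseteq[n]$, all such $Z$ lie on one side, say in $B$. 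Then every index of $A$ is a free input of $f$ or a free output of $g$.

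Hence if $f\vtl g$ is a product, or via Lemma \ref{lemma:causalp}(i) if $f^*\vtl g^*$ is, you can restrict at $s_i=0$ for such an index. This gives a smaller instance $f'\vtl g$ or $f\vtl g'$ whose factors are still non-chain, because their M\"obius supports are order-isomorphic to the original ones. To make the induction run you also need a lemma that is not in the paper: restricting a type function to $s_i=0$ yields a type function. This follows by induction over terms, since restriction commutes with $\otimes$ and ${}^*$ and sends the leaf $1_1$ or $p_1$ to the unit. With this repair your route becomes a valid and more elementary alternative. The paper instead uses the ordinal-sum description of $\Pe^0_{f\vtl g}$ in Lemma \ref{lemma:pf0_constr}(iv), together with Lemma \ref{lemma:ypsilon} and Corollary \ref{coro:xfree}.
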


\begin{proof}  The 'if' part, and the last statement were proved in
\cite[Prop.~4.15]{jencova2024onthestructure}. The
'only if' part is proved in Appendix \ref{app:structure}, using the structure posets of a
type introduced in the next section.

\end{proof}

\subsection{The structure  posets related to a type}\label{sec:structure}

In \cite{jencova2024onthestructure}, we have used the M\"obius transform of a type
function $f$  to obtain certain posets with labelled vertices. These posets uniquely
represent the type and show some inherent orderings of the involved elementary systems.  

\begin{defi}\label{defi:structure poset}
Let $f\in \Te_n$ and let $T\mapsto \hat f_T$ be the M\"obius transform. The
\emph{(full) structure 
poset} $\Pe_f$ of $f$ is the subposet in the lattice $2^n$ of all subsets of $[n]$, consisting of
all $T\subseteq [n]$ such that $\hat f_T\ne 0$. For $T\in \Pe_f$, the set of \emph{labels}
of $T$ is defined as 
\[
L_T:=T\setminus \cup \{S\in\Pe_f \colon S\subsetneq T\},
\]
(that is, the labels of $T$ are all indices in $T$ that are not contained in any 
element of $\Pe_f$ that is strictly below $T$). If needed, we will use the superscript
$L^f_T$ to indicate that the label set is connected to $f$. 
The \emph{reduced structure  poset} of $f$ is the subposet $\Pe_f^0\subseteq \Pe_f$, consisting of
$\emptyset$ (if present in $\Pe_f$) and  
all $T\in \Pe_f$ such  that $L_T\ne\emptyset$.

\end{defi}

The posets are  represented by their Hasse diagrams, which are defined as graphs whose
vertices are elements of the respective poset and  there is an edge between vertices $p$ and
$q$ if $p$ is covered by $q$ (that is, $p\le q$ and if $p\le r\le q$, then $p=r$ or
$q=r$). Moreover,  if $p\lneq r$, then $r$ is drawn above
$p$.  Two posets are isomorphic if and only if they have the same Hasse diagrams. We will
also label the vertices with their respective labels.  Some examples are presented further below.

We collect some  general properties of the structure posets  that were proved in
\cite{jencova2024onthestructure}.
First of all, the structure poset $\Pe_f$
is a graded poset of even rank, which is denoted by $r(f)$. This means that every maximal
chain (i.e. a maximal totally
ordered subposet) in $\Pe_f$ has the same length equal to $r(f)$. Moreover, there  is a rank function
$\rho:\Pe_f\to [0,r(f)]$, assigning to each element its rank. The rank function is
strictly  increasing along any chain, with minimal elements having rank 0. Moreover, for the type
function $f$, we have 
\begin{equation}\label{eq:structure_poset}
f=\sum_{T\in \Pe_f} (-1)^{\rho(T)}p_T.
\end{equation}
This equality shows that the type
function $f$ is fully determined by its structure poset. Furthermore, by uniqueness of the
M\"obius transform, this shows that $\hat f_T$ has values in $\{-1,0,1\}$.

It was shown in
\cite[Thm.~4.22]{jencova2024onthestructure} that any type function is fully determined  by its
reduced poset as well, but in general this relation is not as straightforward as 
\eqref{eq:structure_poset}. In particular, the function $f$ is a chain type  iff $\Pe_f=\Pe_f^0$ iff 
$\Pe_f$ is totally ordered  iff maps of type $f$ are causally ordered (that is,  combs), see also
\eqref{eq:chain} and the paragraphs around it.

We now define the rank of an index with respect to a type function.
For $i\in [n]$, let us denote
\[
\mathbb T_i^f:=\{T\in \Pe_f\colon i\in L_T\}.
\]
All elements $T\in \mathbb T^f_i$ have the same rank, which is called the \emph{rank
 of
the index $i$ in $f$} and is denoted by $r_f(i)$. If $\mathbb T_i^f=\emptyset$, we set $r_f(i)=r(f)+1$.
The index $i\in [n]$ is an input if and only if $r_f(i)$ is even. Indices in $\cap\Pe_f$
(which are precisely those indices that label all the minimal elements in $\Pe_f$) are called \emph{free inputs} of $f$. The indices not contained in any label have odd rank
$r(f)+1$, these are called the \emph{free outputs}. The set of free inputs/outputs of $f$ will be
denoted by $I^F_f/O^F_f$. The free inputs and outputs together are called the \emph{free
indices} of $f$. \cite[Ex.~4.4]{jencova2024onthestructure} shows how the free indices are
manifesting in the higher order maps of type $f$.

\begin{exm}[Simplest types]\label{exm:hasse} The simplest examples in $\Te_n$ are states
corresponding to the state space in $B(\mathcal H)$, with type 
function $f\equiv 1$ and $\Pe_f=\Pe_f^0=\{\emptyset\}$, and the trace, with $f=p_n$ and
$\Pe_f=\Pe_f^0=\{[n]\}$. For the  type of constant channels with inputs in $I\subseteq
[n]$, we have $f=p_I$ and the corresponding posets have again a single element
$\Pe_f=\Pe_f^0=\{I\}$, note that this type has only free outputs, so that $O_f^F=O=I^C$. The labelled Hasse
diagrams are equally simple:

\begin{center}
\vcbox{\includegraphics[scale=0.9]{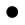}}
\vcbox{$\qquad$}
\vcbox{\includegraphics[scale=0.9]{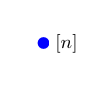}}
\vcbox{$\qquad$}
\vcbox{\includegraphics[scale=0.9]{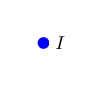}}
\end{center}
\end{exm}
Here, as in all the diagrams below, we use blue color for the input vertices, that is, labelled
vertices with even rank, and red color for the output vertices (labelled vertices with odd
rank). The vertex corresponding to $\emptyset$ is left black. Vertices that do not belong to
the reduced poset are gray, together with the adjacent edges not belonging to $\Pe_f^0$. 

\begin{exm}[Channels and combs]\label{exm:combs}  As we have seen,  channels $I\to O$ have type
function $f=p_O^*$, and $\Pe_f=\Pe_f^0=\{\emptyset\subsetneq O\subsetneq [n]\}$. 
We also give below the diagram of a  2-comb for $n=4$,  with the corresponding poset $\Pe_f=\Pe_f^0=\{\emptyset\subsetneq \{1\}\subsetneq
\{1,2\}\subsetneq \{1,2,3\}\subsetneq [4]\}$. The related higher order maps  $(3\to
2)\to (4\to 1)$ are called quantum supermaps (or superchannels).
\begin{center}
\vcbox{\includegraphics[scale=0.9]{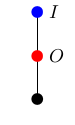}}
\vcbox{$\qquad$}
\vcbox{\includegraphics{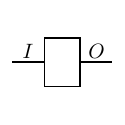}}
\vcbox{$\qquad,$}
\vcbox{$\qquad$}
\vcbox{\includegraphics[scale=0.9]{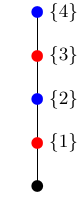}}
\vcbox{$\qquad$}
\vcbox{\includegraphics{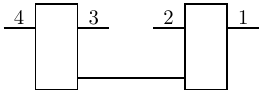}}
\end{center}

\end{exm}

\begin{exm}[Nonsignalling channels and process matrices]\label{exm:nspm} As mentioned
before, all type functions for $n\le 3$ are chain types. We next show the
structure posets of type functions in $\Te_{4,\{1,3\}}$ that are not chain types.
Let $\gamma_2=1_2-p_{\{1\}}+p_{\{1,2\}} \in \Te_2$ be the type function of channels $2\to 1$, then 
the tensor product $f_{ns}:=\gamma_2\otimes
\gamma_2\in \Te_4$,  describes nonsignalling channels $\{2,4\}\to \{1,3\}$. 
Let $\tilde\gamma_2:=1_2-p_{\{2\}}+p_{\{1,2\}}\in \Te_2$, corresponding to channels $1\to 2$,
then $f_{pm}:=(\tilde \gamma_2\otimes \tilde \gamma_2)^*$ describes the process matrices, which
are positive functionals mapping all nonsignalling channels to 1. The diagrams below
depict these two types, together with the diagram of the type of full signalling channels
$p_{\{1,3\}}^*$:
\begin{center}
\vcbox{\includegraphics[scale=0.9]{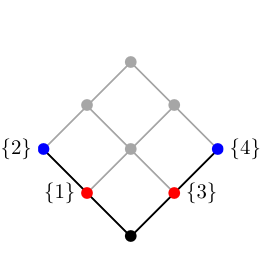}}
\vcbox{$\qquad$}
\vcbox{\includegraphics[scale=0.9]{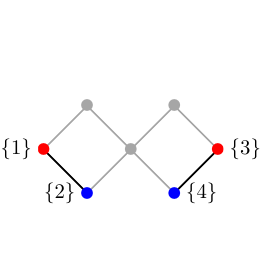}}
\vcbox{$\qquad$}
\vcbox{\includegraphics[scale=0.9]{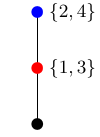}}
\end{center}

\end{exm}

\begin{exm}[Process matrices with global past and future]\label{exm:pm_pf} Let
$f_{pm}=(\tilde \gamma_2\otimes\tilde \gamma_2)^*\in \Te_4$ be the type function of process matrices as in
the previous example. Then $1_1\vtl f_{pm}\vtl p_1\in \Te_6$ is the type function for process
matrices with global past and future, with the corresponding diagram:
\begin{center}
\vcbox{\includegraphics[scale=0.9]{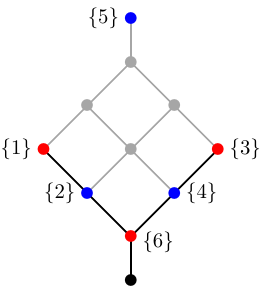}}
\end{center}
Here we have applied a permutation on the indices to keep the labels of the process matrix
$f_{pm}$ as in the previous example. The global past has label 5, the system of the global
future has label 6. 
\end{exm}

\medskip

Under the basic operations on types, the structure posets behave as follows: 
\[
\Pe_{f\circ \sigma}\simeq \Pe_f,\qquad \Pe_{f*}=\Pe_f \triangle \{\emptyset, [n]\},\qquad \Pe_{f\otimes g}=\Pe_f\times \Pe_g,
\]
here $\sigma\in \permut_n$, $\triangle$ denotes symmetric difference of sets and $\times$ is the direct product of
posets. For the causal product, $\Pe_{f\vtl g}$ is related to the ordinal sum $\Pe_f\star
\Pe_g$, with some subtleties according to whether $[n]\in \Pe_f$ and $\emptyset \in
\Pe_g$, see \cite[Prop.~4.14]{jencova2024onthestructure} for details. Definitions of the
above poset compositions can be also found in \cite[App.~A.2]{jencova2024onthestructure}. The results for reduced posets
are summarized in Appendix \ref{app:structure}, together with some further properties of
the structure posets.  These properties will be applied to prove Proposition
\ref{prop:causal_product_type}.

\section{Subtypes}\label{sec:subtypes}

We have defined for each $f\in \Fe_n$ a corresponding set $T_1(f)$  of (Choi operators of) completely positive
maps. We now consider functions for which all these maps are channels $B(\mathcal H_{I_f})\to
B(\mathcal H_{O_f})$. As we
have seen at the beginning of Section \ref{sec:mobius}, this means that 
$T_1(f)\subseteq T_1(p_{O_f}^*)$, which by  Lemma \ref{lemma:Tf} is equivalent to $f\le
p_{O_f}^*$. Requiring the same for $f^*$, noticing that
$I_{f^*}=O_f$, $O_{f^*}=I_f$,  leads us to the following
definition.

\begin{defi}\label{defi:subtype} A function $f\in \Fe_n$ is called a
\emph{subtype} if 
\[
p_{I_f}\le f\le p_{O_f}^*.
\]

\end{defi}
By  \cite[Lemma 4.1]{jencova2024onthestructure},  any type function is a subtype.
Note also that for any decomposition $[n]=I\sqcup O$ we have $p_I\le p_O^*$ and any
function $f$ in
the interval between them must satisfy $I=I_f$ and $O=O_f$, so that any such function is a
subtype. 
 
 \begin{prop}\label{prop:subtype_dual_tensor} Let $f\in \Fe_n$, $g\in
 \Fe_m$ be subtypes.
 Then
 \begin{enumerate}
\item[(i)] $f^*$ is a subtype, with $I_{f^*}=O_f$, $O_{f^*}=I_f$.
\item[(ii)] For any $\sigma\in \permut_n$, $f\circ\sigma$ is a subtype, with $I_{f\circ
\sigma}=\sigma^{-1}(I_f)$, $O_{f\circ\sigma}=\sigma^{-1}(O_f)$.
\item[(iii)] $f\otimes g$, $f\parr g$, $f\vtl g$ and $g\vtl f$  are  subtypes, with inputs in $I=I_f\oplus
I_g$ and outputs in $O=O_f\oplus O_g$.
\item[(iv)] The set of all subtypes with the same inputs/outputs is a distributive
lattice.
 \end{enumerate}

 \end{prop}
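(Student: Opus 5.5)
The plan is to reduce everything to pointwise facts about the functions $p_T$, the complement $f\mapsto f^*$ and the tensor product, all taken from Appendix \ref{app:fn}. The subtype condition only involves $I_f$, $O_f$ and the two bounds $p_{I_f}$, $p^*_{O_f}$. So for each operation I will (a) compute the input/output sets of the result from \eqref{eq:inout}, and (b) check that the result lies between the corresponding bounds.

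For (i), note that $f\mapsto f^*$ is an order-reversing involution of the boolean algebra $\Fe_n$. Applying it to $p_{I_f}\le f\le p^*_{O_f}$ gives $p_{O_f}\le f^*\le p^*_{I_f}$. Reading off the indices, $f^*(e^i)=1-f(e^i)$ because $e^i\neq\theta$, so $I_{f^*}=O_f$ and $O_{f^*}=I_f$. This is exactly the subtype condition for $f^*$.

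For (ii), I use $p_T\circ\sigma=p_{\sigma^{-1}(T)}$, which holds with the appropriate convention $(s\circ\sigma)_i=s_{\sigma(i)}$. Composition with $\sigma$ commutes with $*$ and preserves the pointwise order. Since also $e^i\circ\sigma=e^{\sigma^{-1}(i)}$, the input set transforms as claimed, and the bounds transform correspondingly.

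For (iii), I first treat $\otimes$. The tensor product is monotone in each factor, and $p_I\otimes p_{I'}=p_{I\oplus I'}$. This gives the lower bound $p_{I_f\oplus I_g}\le f\otimes g$. For the upper bound I need $p^*_{O_f}\otimes p^*_{O_g}\le p^*_{O_f\oplus O_g}$, which I will check pointwise. The right-hand side is $1$ except at strings $s$ with $s_j=0$ for all $j\in O$ and $s\neq\theta$. At such $s=s^1s^2$, at least one of $s^1,s^2$ is nonzero, and that part has all output coordinates equal to zero. The corresponding factor $p^*_{O}$ therefore vanishes on it, so the left-hand side is $0$ as well. The input set of $f\otimes g$ follows from $(f\otimes g)(e^i)=f(e^i)$ or $g(e^i)$, using $f(\theta)=g(\theta)=1$.

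The remaining operations reduce to $\otimes$. Since $f\parr g=(f^*\otimes g^*)^*$, the case of $\parr$ follows from (i). For the causal products, Lemma \ref{lemma:causalp}(iii) gives $f\otimes g\le f\vtl g\le f\parr g$, so $f\vtl g$ is sandwiched between two subtypes with the same inputs $I_f\oplus I_g$; the same argument applies to $g\vtl f$. The only delicate point is the pointwise inequality above, and I expect it to be the main obstacle, though only a mild one.

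For (iv), the set of subtypes with fixed $I,O$ is exactly the interval $[p_I,p^*_O]$ in $\Fe_n$, by the remark after Definition \ref{defi:subtype}. An interval in a boolean (hence distributive) lattice is a sublattice closed under $\vee$ and $\wedge$, so it is itself a distributive lattice.
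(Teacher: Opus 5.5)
Your proposal is correct and essentially follows the paper's proof. Parts (i) and (iv) are identical to the paper's. In (iii) the paper gets all the upper bounds at once from the chain $f\otimes g\le f\vtl g,\ g\vtl f\le f\parr g\le p^*_{O_f}\parr p^*_{O_g}=(p_{O_f}\otimes p_{O_g})^*=p^*_O$, which is equivalent to your pointwise check of $p^*_{O_f}\otimes p^*_{O_g}\le p^*_O$ followed by duality.

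The only slip is a convention mismatch in (ii). The paper's convention is $(f\circ\sigma)(s)=f(s\circ\sigma^{-1})$, under which $\sigma(e^i)=e^{\sigma(i)}$ and $p_T\circ\sigma=p_{\sigma^{-1}(T)}$. Your formulas do not hold under the convention you state, $(f\circ\sigma)(s)=f(s\circ\sigma)$: there one gets $p_T\circ\sigma=p_{\sigma(T)}$ and $I_{f\circ\sigma}=\sigma(I_f)$. This does not affect the argument, but you should adopt the paper's convention so that your labels match the claim.
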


\begin{proof} (i) is immediate by the properties of the complement. For (ii), notice that
for any index $i$,
$\sigma(e^{i})=e^{\sigma(i)}$ and hence $f\circ \sigma(e^i)= f(e^{\sigma(i)})$. It follows
that $I_{f\circ \sigma}=\sigma^{-1}(I_f)$ and similarly
$O_{f\circ\sigma}=\sigma^{-1}(O_f)$. Then 
\[
p_{I_{f\circ \sigma}}=p_{\sigma^{-1}(I_f)}=p_{I_f}\circ\sigma\le f\circ\sigma\le
p_{O_f}^*\circ \sigma=p_{\sigma^{-1}(O_f)}^*=p_{O_{f\circ\sigma}}^*.
\]
For (iii), we have using properties of $\otimes$ and $\parr$ in Appendix \ref{app:fn},
together with Lemma \ref{lemma:causalp} (iii),
\[
p_I=p_{I_f}\otimes p_{I_g}\le f\otimes g \le f\vtl g, g\vtl f\le f\parr g\le  p^*_{O_f}\parr
p^*_{O_g}=p_O^*.
\]
The statement (iv) is quite clear, since the set of all subtypes with input/output sets $I/O$
is the interval $[p_I,p_O^*]$ in the distributive lattice $\Fe_n$.

\end{proof}

\subsection{Regular subtypes}

Let  $\Te_{n,O}$ denote the set of all type functions with
inputs/outputs in $I/O$. As we have seen, $\Te_{n,O}$ is contained in the set of all
subtypes with the same inputs/outputs, which form the interval $[p_I,p_O^*]$. The
(distributive) sublattice generated by $\Te_{n,O}$ is contained in the interval as well,
and by Lemma
\ref{lemma:Tf}, the set $T_1(f)$ for any function in the sublattice is obtained by taking
affine combinations of channels contained in intersections of higher order types. 
The aim of this paragraph is to characterize functions in this sublattice.

\begin{defi}\label{defi:regular_subtype} Let $O\subseteq [n]$. 
The sublattice generated by $\Te_{n,O}$  will be denoted by
$\Reg_{n,O}$ and  its elements  will be called \emph{regular subtypes} (with
outputs in $O$). The set of all regular subtypes in $\Fe_n$ will be denoted by $\mathcal
R_n$, that is, $\Reg_n=\bigcup_{O\subseteq [n]} \Reg_{n,O}$.
\end{defi}

By the above definition, we have 
$\Te_{n,O}\subseteq \Reg_{n,O}\subseteq [p_I,p_O^*]$. It is easily observed that in the extreme cases
$O=[n]$ or $O=\emptyset$, the above interval is a singleton, so both inclusions are
trivially  equalities.
As shown  in \cite{jencova2024onthestructure}, for $n=2$ both inclusions are equalities for any $O\subseteq
[2]$, but already for
$n=3$ both inclusions are strict in the nontrivial cases.

 Given a subset  $O\subseteq [n]$, we  introduce a partial order $\le_O$ in
 $\{0,1\}^n$ as $s\le_O s'$ if
$s_i\le s_i'$ for $i\in O$ and $s_i\ge_O s_i'$ for $i\notin O$. A subset of strings $D$ is
a downset (with respect to $\le_O$) if $s\in D$ and $s'\le_O s$ implies $s'\in D$, an upset is defined accordingly.
Note that for the complement $O^C=[n]\setminus O$ we get the opposite ordering:
$\le_{O^C}=\le_O^{op}$. A
principal  downset is a downset generated by a single element, we put $D^O_s:=\{s'\colon
s'\le_O s\}$. A principal upset $U^O_s$ is defined accordingly. 

We  say that a function $f\in \Fe_n$ is
 \emph{monotone} if it is monotone with respect to $\le_{O_f}$, that is, $s\le_{O_f} s'$ implies $f(s)\le f(s')$  (with the
usual  ordering $\le$ in $\{0,1\}$, i.e. $\le=\le_{[1]}$ in our notation above). 

\begin{lemma}\label{lemma:regular_alternative}
Let $f\in \Fe_n$ and put $O=O_f$.  The following are equivalent.
\begin{enumerate}
\item[(i)] $f$ is a monotone subtype;
\item[(ii)]  $f$ and $f^*$ are both monotone;

\item[(iii)] $\supp_f$ is an upset with respect to $\le_O$, such that $\supp_f\cap D^O_\theta=\{\theta\}$.

\end{enumerate}

\end{lemma}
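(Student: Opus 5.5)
We first translate the two boundary functions into order-theoretic terms. For $O=O_f$, $I=I_f=O^C$, the function $p_I(s)=\prod_{i\in I}(1-s_i)$ equals $1$ exactly when $s_i=0$ for all $i\in I$. Since $\theta$ is the $\le_O$-bottom on the $O$ coordinates and $\le_O$ reverses the order on the $I$ coordinates, $s\ge_O \theta$ holds iff $s_i=0$ for all $i\in I$. Hence $\supp(p_I)=U^O_\theta$. Dually, $p_O(s)=1$ iff $s_j=0$ for all $j\in O$, which is exactly $s\le_O\theta$. So $\supp(p_O)=D^O_\theta$, and therefore $\supp(p_O^*)=\{\theta\}\cup (D^O_\theta)^C$.

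It follows that $f\le p_O^*$ is equivalent to $\supp_f\cap D^O_\theta=\{\theta\}$ (recall $f(\theta)=1$). Also, $p_I\le f$ is equivalent to $U^O_\theta\subseteq \supp_f$, and this is automatic once $\supp_f$ is an upset containing $\theta$.

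We then prove (i)$\Leftrightarrow$(iii). Monotonicity of $f$ with respect to $\le_O$ is the same as $\supp_f$ being an upset. The subtype condition, given monotonicity, reduces to the upper bound $f\le p_O^*$, which by the above is the condition on $D^O_\theta$. Conversely, (iii) gives the upset property, hence monotonicity and $p_I\le f$, and it gives $f\le p_O^*$.

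For (i)$\Leftrightarrow$(ii), we need the input/output sets of $f^*$ and the support of $f^*$. For $i\in[n]$ we have $e^i\ne\theta$, so $f^*(e^i)=1-f(e^i)$. Hence $O_{f^*}=I$ and $\le_{O_{f^*}}=\le_O^{op}$. Also $\supp_{f^*}=\{\theta\}\cup(\supp_f)^C$. Thus $f^*$ is monotone iff $\{\theta\}\cup \supp_f^C$ is a downset for $\le_O$.

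To go from (iii) to (ii), take $s\in\supp_f^C$ and $s'\le_O s$. If $s'=\theta$ we are done. Otherwise, if $s'\in\supp_f$, then $s\in\supp_f$ by the upset property, a contradiction. So it remains to check that $\theta$'s downset is harmless: if $s'\le_O\theta$ with $s'\ne\theta$, then $s'\notin\supp_f$ by the $D^O_\theta$ condition. Hence $\{\theta\}\cup\supp_f^C$ is a downset.

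To go from (ii) to (iii), the upset property is immediate. For the intersection condition, take $s\in D^O_\theta\setminus\{\theta\}$. Since $\theta\in\supp_{f^*}$ and $f^*$ is monotone for $\le_O^{op}$, every $s\le_O\theta$ lies in $\supp_{f^*}$. Since $s\ne\theta$, this gives $f(s)=0$. This is exactly where monotonicity of $f^*$ supplies the upper bound $f\le p_O^*$.

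The main point needing care is the bookkeeping of the orders. We must check that $O_{f^*}=O^C$, so that the relevant order flips, and that $\theta$ is handled correctly, since it lies in both supports. The rest is direct verification.
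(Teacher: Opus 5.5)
Your proof is correct and uses the same ingredients as the paper's: the identifications $\supp(p_I)=U^O_\theta$, $\supp(p_O^*)=\{\theta\}\cup (D^O_\theta)^C$ and $\supp_{f^*}=\{\theta\}\cup\supp_f^C$, with (ii)$\Rightarrow$(iii) and (iii)$\Rightarrow$(i) argued essentially as in the paper. The only difference is organizational. The paper proves (i)$\Rightarrow$(ii) directly, using that $f^*$ is a subtype with $I_{f^*}=O$ so that $f^*\ge p_O$ on $D^O_\theta$, whereas you route it through (iii); your explicit check that $O_{f^*}=I_f$ for every $f\in\Fe_n$ also covers a point the paper leaves implicit in (ii)$\Rightarrow$(iii).
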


\begin{proof}  Assume (i),  we will show that $f^*$
is monotone as well. Since $\le_{O_{f^*}}=\le_O^{op}$, we need to prove that $f^*$ is antitone
with respect to $\le_O$. So let $s'\le_O s$. If $s'\ne \theta\ne s$, then since $f$ is
monotone, we have $f^*(s')=1-f(s')\ge 1-f(s)=f^*(s)$, and clearly also $f^*(\theta)=1\ge
f^*(s)$, for any $s$. If $s=\theta$, then $s'\le \theta$ implies that $s'_i=0$ for all $i\in O$, so that $p_O(s')=1$.
Since $f$ is a subtype, $f^*$ is a subtype as well and we have $O=I_{f^*}$, so that
$f^*(s')\ge p_O(s')=1=f^*(s)$. This proves (ii). 

Assume (ii), then $\supp_f$ is clearly an upset  and
$\supp_{f^*}=\supp_f^C\cup\{\theta\}$ is a downset. This implies that
$D^O_\theta\setminus\{\theta\}\subseteq \supp_f^C$. Since $\theta\in \supp_f$, (iii) follows.

Finally, assume (iii), then $f$ is clearly monotone. To see that it is a subtype, we need
to show that $p_I\le f\le p_O^*$, which amounts to showing that for any string $s$,
$p_I(s)=1$ implies $f(s)=1$ and $p_O^*(s)=0$ implies $f(s)=0$. Now note
that $p_I(s)=1$ if and only if $\theta\le_O s$, that is, $s\in U^O_\theta$.  Since
$\supp_f$ is an upset containing $\theta$, this implies that  $U^O_\theta \subseteq
\supp_f$, hence  $p_I\le f$.
Further, $p_O^*(s)=0$ iff $s\ne \theta$ and $p_O(s)=1$, that is, $s\in
D^O_\theta\setminus\{\theta\}$. By the assumptions, this set is contained in $\supp_f^C$, hence $f(s)=0$. This shows that $f\le
p_O^*$.

\end{proof}

Let us list some further results on monotone subtypes.

\begin{lemma}\label{lemma:regular}
\begin{enumerate}
\item[(a)] If $f$ and $g$ are monotone subtypes with $O_f=O_g$, then $f\vee g$, $f\wedge
g$ are monotone as well.

\item[(b)] Let $f\in \Fe_n$ and  $g\in \Fe_m$ be  monotone subtypes and let $\sigma\in
\permut_n$ be any permutation.
Then $f\circ\sigma$, $f^*$, $f\vtl g, f\otimes g, f\parr g$
are monotone subtypes as well.

\item[(c)] Any type function is monotone.

\end{enumerate}

\end{lemma}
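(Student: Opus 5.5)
We work throughout with the characterization in Lemma~\ref{lemma:regular_alternative}: a function $f\in\Fe_n$ is a monotone subtype if and only if $f$ and $f^*$ are both monotone. Equivalently, $\supp_f$ is a $\le_{O_f}$-upset meeting $D^{O_f}_\theta$ only in $\theta$.

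\textbf{Part (a).} Since $O_f=O_g=O$, Proposition~\ref{prop:subtype_dual_tensor}(iv) shows that $f\vee g$ and $f\wedge g$ are subtypes with the same output set $O$. Pointwise maximum and minimum of two $\le_O$-monotone functions are again $\le_O$-monotone. So both are monotone subtypes.

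\textbf{Part (b).} Each operation is handled separately.
\begin{itemize}
\item \emph{Permutation.} Proposition~\ref{prop:subtype_dual_tensor}(ii) gives $O_{f\circ\sigma}=\sigma^{-1}(O_f)$. One checks directly that $s\le_{\sigma^{-1}(O_f)}s'$ if and only if $\sigma(s)\le_{O_f}\sigma(s')$, so monotonicity transfers.
\item \emph{Complement.} $f^*$ is monotone by Lemma~\ref{lemma:regular_alternative}(ii), and it is a subtype by Proposition~\ref{prop:subtype_dual_tensor}(i).
\item \emph{Causal product.} All these functions are subtypes by Proposition~\ref{prop:subtype_dual_tensor}(iii), with $O=O_f\oplus O_g$. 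For monotonicity of $f\vtl g$ we use \eqref{eq:causalp_s}. Take $s=s^1s^2\le_O t=t^1t^2$, so $s^1\le_{O_f}t^1$ and $s^2\le_{O_g}t^2$.
  \begin{itemize}
  \item If $s^1,t^1\ne\theta$, the values are $f(s^1)\le f(t^1)$.
  \item If both are $\theta$, the values are $g(s^2)\le g(t^2)$.
  \item If $s^1\ne\theta=t^1$, then $s^1\le_{O_f}\theta$ with $s^1\neq\theta$. By Lemma~\ref{lemma:regular_alternative}(iii), $f(s^1)=0$, so the inequality holds trivially.
  \item If $s^1=\theta\ne t^1$, then $\theta\le_{O_f}t^1$, so $f(t^1)=1$ by monotonicity, and again the inequality holds.
  \end{itemize}
  The product $g\vtl f$ is covered by the symmetric argument, or by a permutation.
\item \emph{Tensor product and par.} By Lemma~\ref{lemma:causalp}(iii), $f\otimes g=(f\vtl g)\wedge(g\vtl f)$ and $f\parr g=(f\vtl g)\vee(g\vtl f)$. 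Both causal products are monotone subtypes with output set $O_f\oplus O_g$, so part (a) finishes.
\end{itemize}

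\textbf{Part (c).} We argue by induction on $n$, using the fact from Section~\ref{sec:functions} that every type function is, up to permutation, a tensor product of two type functions with smaller $n$, or the complement of such a product. Type functions are subtypes by \cite[Lemma 4.1]{jencova2024onthestructure}.
\begin{itemize}
\item \emph{Base case.} $\Te_1=\{1_1,p_1\}$. Both are constant on $\{0,1\}$ apart from the value at $\theta$, and both are trivially monotone (for $1_1$, $O=\{1\}$; for $p_1$, $O=\emptyset$ and $p_1$ is antitone in $s$).
\item \emph{Inductive step.} This follows from part (b): tensor products, complements and permutations of monotone subtypes are monotone subtypes.
\end{itemize}

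The step needing the most care is the boundary case in the causal product. There the value switches from $f$ to $g$ as $s^1$ reaches $\theta$, and we need exactly the condition $\supp_f\cap D^{O_f}_\theta=\{\theta\}$ from Lemma~\ref{lemma:regular_alternative}(iii).
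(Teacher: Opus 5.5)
Your proof is correct and follows essentially the same route as the paper. Part (a) uses closure of the interval $[p_I,p_O^*]$ and of $\le_O$-monotonicity under pointwise $\vee,\wedge$. Part (b) handles permutations directly, obtains $f^*$ from Lemma~\ref{lemma:regular_alternative}(ii), and uses the same case analysis for $f\vtl g$, invoking $\supp_f\cap D^{O_f}_\theta=\{\theta\}$ at the boundary case; $\otimes$ and $\parr$ then follow from Lemma~\ref{lemma:causalp}(iii) together with (a). Part (c) is induction from $1_1,p_1$. Your explicit note that $g\vtl f$ also needs to be covered (by symmetry or a block permutation) before applying Lemma~\ref{lemma:causalp}(iii) fills in a step the paper leaves implicit.
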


\begin{proof}  
The statement (a) is easily seen, since both the interval $[p_I,p_O^*]$ and the
monotonicity condition are closed under the lattice operations in $\Fe_n$. 
It is also quite clear that the set of monotone subtypes is closed under permutations. The
statement for complements follows by Lemma \ref{lemma:regular_alternative} (ii). 

Put $h:=f\vtl g$, then $h$ is a subtype by Proposition \ref{prop:subtype_dual_tensor}. Let  $t\le_{O_h}s$ and let 
$s=s^1s^2$ and $t=t^1t^2$ be the corresponding decompositions, then we have $t^1\le_{O_f}
s^1$, $t^2\le_{O_g}
s^2$. If $t^1=s^1$, then 
\[
h(t)=\begin{dcases} f(t^1)=f(s^1)=h(s) & \text{if } s^1\ne \theta\\
g(t^2)\le  g(s^2)=h(s) & \text{if } s^1=\theta.
\end{dcases}
\]
Assume that $t^1\lneq_{O_f} s^1$. Then $s^1=\theta$ implies $\theta\ne t^1\in D^{O_f}_\theta$,
in which case  $h(t)=f(t^1)=0\le h(s)$.   On
the other hand,  $t^1=\theta$ implies  $s^1\ne \theta$ and hence $h(s)=f(s^1)\ge f(\theta)= 1\ge h(t)$. Finally, if
$t_1\ne \theta\ne s^1$, then  $h(t)=f(t^1)\le f(s^1)=h(s)$ so that $f\vtl g$ is  monotone. 
The statements for $\otimes $ and $\parr$ follow by (a) and Lemma \ref{lemma:causalp}.
For (c), observe that quite trivially,  $1_1,p_1$ are monotone (sub)types. The statement now follows from
 (b) by induction.

\end{proof}

\begin{remark} One can check  that the monotonicity condition is equivalent to the
following pair of conditions:
\begin{enumerate}
\item[(a)] if $i\in O$ and $s\in \{0,1\}^{i-1}$, $s'\in \{0,1\}^{n-i}$ are such that
$f(s0s')=1$, then also $f(s1s')=1$,
\item[(b)]if $i\in I$ and $s\in \{0,1\}^{i-1}$, $s'\in \{0,1\}^{n-i}$ are such that
$f(s1s')=1$, then also $f(s0s')=1$.
\end{enumerate}
It was shown in \cite[Lemma 16]{apadula2024nosignalling} that (in our language) any type
function satisfies these conditions.

\end{remark}

Our goal in this paragraph is to prove the following statement.

\begin{theorem}\label{thm:regular} A subtype $f$ is regular if and only if it is monotone.

\end{theorem}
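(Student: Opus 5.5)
One direction is immediate from the results already established. By Lemma \ref{lemma:regular} (c), every type function is monotone. By Lemma \ref{lemma:regular} (a), monotone subtypes with a fixed output set $O$ are closed under $\vee$ and $\wedge$. The sublattice $\Reg_{n,O}$ generated by $\Te_{n,O}$ therefore consists of monotone subtypes, so regular implies monotone.

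For the converse, fix $O$ and $I=O^C$, and let $f$ be a monotone subtype. By Lemma \ref{lemma:regular_alternative} (iii), $\supp_f$ is an upset for $\le_O$ with $\supp_f\cap D^O_\theta=\{\theta\}$. We write $f$ in terms of principal upsets. For each $s\in\supp_f\setminus\{\theta\}$, let $g_s$ be the characteristic function of $U^O_s\cup U^O_\theta\cup\{\theta\}$. This is again a monotone subtype with outputs $O$; when $s\in U^O_\theta$ it is simply $p_I$. Since $\supp_f$ is an upset containing $U^O_\theta$, we get
\[
f=p_I\vee\bigvee_{s\in\supp_f\setminus\{\theta\}} g_s .
\]
Because $p_I\in\Te_{n,O}$, it suffices to show that each $g_s$ lies in $\Reg_{n,O}$. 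For this we aim to express $g_s$ as a meet of type functions with outputs in $O$.

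The natural candidates are chain types (combs), built as causal products of one-index types $1_1$ (for $i\in O$) and $p_1$ (for $i\in I$). Such a product is a type function by Proposition \ref{prop:causal_product_type}, and its outputs are $O$ by Proposition \ref{prop:subtype_dual_tensor} (iii). Formula \eqref{eq:causalp_s} evaluates these products explicitly. For an ordering $\sigma$ of $[n]$, the comb $c_\sigma=h_{\sigma(1)}\vtl\dots\vtl h_{\sigma(n)}$ evaluated at $t$ returns $h_j(t_j)$ at the first index $j$ in the order with $t_j\neq0$; on $\theta$ it returns $1$. Concretely, $c_\sigma(t)=1$ iff $t=\theta$ or the first nonzero coordinate of $t$ in the order $\sigma$ lies in $O$. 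We will use more general combs by grouping indices, i.e.\ chain types $\sum(-1)^ip_{S_i}$ with chosen chains $S_0\subsetneq\dots\subsetneq S_N$. Write $A=\{i\colon s_i=1\}$. The plan is to show
\[
g_s=\bigwedge_{c} c,
\]
taking the meet over all chain types $c\in\Te_{n,O}$ with $c\ge g_s$, and to check that this meet rejects every $t\notin U^O_s\cup U^O_\theta\cup\{\theta\}$. Equivalently, for each such $t$ we must find a comb $c\in\Te_{n,O}$ with $c\ge g_s$ and $c(t)=0$.

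\textbf{Constructing the separating comb.} This is the main obstacle. Take $t\ne\theta$ with $t\notin U^O_\theta$, so $t_i=1$ for some $i\in I$, and with $t\notin U^O_s$. The latter means either some $j\in O\cap A$ has $t_j=0$, or some $i\in I\setminus A$ has $t_i=1$. In the second case, place $i$ first among the indices, followed by the remaining inputs $I\cap A$ in a block, and choose the rest of the order so that every $u\in U^O_s\cup U^O_\theta$ has its first nonzero coordinate in $O$. In the first case, put $j$ in a late block and the offending input earlier. A guiding heuristic is to order the indices as $I\setminus A$ first, then $O\cap A$, then $I\cap A$, then $O\setminus A$, possibly with blocks merged via $p_{S_k}$ rather than singletons. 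Elements of $U^O_s$ have $I\setminus A$ entries zero and $O\cap A$ entries one, so they are accepted, while the offending $t$ hits a zero early. Elements of $U^O_\theta$ have all $I$ entries zero, so their first nonzero coordinate is in $O$ and they are accepted.

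The delicate point is $t$ whose first nonzero block is $O\cap A$ while it still fails $t\ge_O s$. A single ordering may not separate such $t$. If it does not, the fallback is twofold. First, vary the ordering within blocks, using the meet over several combs. Second, and if that fails, allow non-chain type functions such as tensor products (Lemma \ref{lemma:causalp_ext}) and verify the separation case by case on the pattern of $t$ relative to $A$. Once separation holds for every such $t$, $g_s\in\Reg_{n,O}$, and therefore $f\in\Reg_{n,O}$.
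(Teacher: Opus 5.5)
Your first direction is correct, and so is the reduction $f=p_I\vee\bigvee_s g_s$; it is the same decomposition into the functions $f_s$ (with support $U^O_s\cup U^O_\theta$) that the paper uses. The gap is in the only nontrivial step. You never exhibit, for a given $t\notin U^O_s\cup U^O_\theta$, a chain type $c\in\Te_{n,O}$ with $c\ge g_s$ and $c(t)=0$. You leave the case ``first nonzero block is $O\cap A$ but $t\not\ge_O s$'' open and offer only fallbacks (other orderings, non-chain types, case-by-case checks), so $g_s\in\Reg_{n,O}$ is not established.

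The concrete recipes you do give also point the wrong way. Chain types are monotone (Lemma \ref{lemma:regular}(c)), so $c\ge g_s$ is equivalent to $c(s)=1$. For a comb this means the first block meeting $A$ must be an output block.
\begin{itemize}
\item For $s\notin U^O_\theta$ we have $I\cap A\neq\emptyset$. Placing the offending $i\in I\setminus A$ first, ``followed by the remaining inputs $I\cap A$ in a block'', therefore forces $c(s)=0$.
\item In the first case, some $j\in O\cap A$ with $t_j=0$ must precede the inputs on which $t$ is nonzero. That is what makes an input the first nonzero entry of $t$ while keeping $c(s)=1$. So $j$ belongs early, not in a late block.
\end{itemize}

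The missing idea is that three-element chains already separate.
\begin{itemize}
\item If $t_j=0$ for some $j\in O\cap A$, take $c=1-p_{\{j\}}+p_{I\cup\{j\}}$. Then $c(u)=1$ iff $u_j=1$ or $u_l=0$ for all $l\in I$. Hence $c(s)=1$, and $c(t)=0$ because $t\notin U^O_\theta$.
\item If $t_i=1$ for some $i\in I\setminus A$, choose any $j\in O\cap A$; this set is nonempty because $s\notin D^O_\theta$. Take $c=p_{\{i\}}-p_{\{i,j\}}+p_{I\cup\{j\}}$. Then $c(u)=1$ iff $u_i=0$ and ($u_j=1$ or $u_l=0$ for all $l\in I$). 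Again $c(s)=1$ and $c(t)=0$.
\end{itemize}

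These are exactly the generators in Corollary \ref{coro:normal}. The paper reaches them more structurally. It writes $s$ as a $\le_O$-join of basic strings and uses $f_{s\vee t}=f_s\wedge f_t$ (Lemma \ref{lemma:fsbasic}). It then shows directly that $f_t$ for a basic string $t$ is one of these two chain types. With the two separating chains inserted, your argument becomes complete and is essentially equivalent to the paper's.
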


For the proof, we will need to study the structure of the lattice of monotone subtypes
with the same output space $O$, which for now we will denote by $\Me_{n,O}$. Clearly,
this is a distributive lattice, with smallest element $p_I$ and largest element
$p_O^*$, both of which are type functions. Notice that in the  cases $O=\emptyset$ or
$O=[n]$, the only subtype is $p_n$ resp. $1_n$, both of which are type functions, so that we may 
ignore these cases.

As we have seen, for a monotone subtype $f$, $\supp_f$ is an upset with
respect to the ordering $\le=\le_{O_f}$ and it is easily seen that $f\le g$ iff
$\supp_f\subseteq \supp_g$, and that  $\supp_{f\vee g}= 
\supp_f\vee \supp_g$, $\supp_{f\wedge g}=\supp_f\wedge \supp_g$. 
Let now  $s^1,\dots, s^k$ be the minimal elements in the upset
$\supp_f$. It is easily checked that $\supp_f$ is the union of the corresponding principal
upsets, $\supp_f=\cup_{j=1}^k U_{s^j}$. We therefore have
\[
U_{s^j}\cap D_\theta\subseteq \supp_f\cap D_\theta=\{\theta\}.
\]
It follows that $s^j\le \theta$ implies that $s^j=\theta$, for any $j$. Moreover, since
$\theta\in \supp_f$, there is some $j$ such that $\theta\in U_{s^j}$, in which case we
must have $s^j=\theta$. 

Note that for any $s\in \{0,1\}^n$, $s\nleq \theta$, we have  $U_{s}\cap
D_\theta=\emptyset$ 
and using Lemma \ref{lemma:regular_alternative}(iii), we see that 
$U_{s}\cup U_\theta$ is the support of some element of $\Me_{n,O}$, let us denote this
subtype by $f_s$. We obtain
\[
\supp_f=\cup_j U_{s^j}=\cup_{j, s^j\ne \theta} U_{s^j}\cup U_\theta=\cup_{j, s^j\ne \theta}\supp_{f_{s^j}},
\]
so that $f=\vee_{j, s^j\ne\theta} f_{s^j}$. It is therefore enough to consider only functions of the form
$f_s$ for  strings $s$ such that $s\nleq \theta$, that is, $s\in D_\theta^C$.  
Since $s\in U_\theta$ implies that $f_s=f_\theta=p_I$ is the least element, it is enough to consider
$s\in D_\theta^C\cap U_\theta^C=(D_\theta\cup U_\theta)^C$. In other words, these are
precisely the strings where there are some $i\in I$ and $j\in O$ such that $s_i=s_j=1$. 
Strings  $s\in (D_\theta\cup U_\theta)^C$ such that $s_j=1$ for exactly one $j\in 1$ and
$s_i=0$ for at most one $i\in I$ are called \emph{basic} (for the ordering $\le_O$).

\begin{lemma}\label{lemma:fsbasic} Let $s\in (D_\theta\cup U_\theta)^C$. Then there are
basic strings $t^k$ such that  $f_s=\wedge_k f_{t^k}$.
\end{lemma}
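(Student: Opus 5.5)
The plan is to work entirely at the level of supports. For monotone subtypes with fixed output set $O$, meets correspond to intersections of supports, and $\supp_{f_t}=U_t\cup U_\theta$ for every $t\in(D_\theta\cup U_\theta)^C$. So it suffices to find basic strings $t^k$ with
\[
U_s\cup U_\theta=\bigcap_k \bigl(U_{t^k}\cup U_\theta\bigr)=\Bigl(\bigcap_k U_{t^k}\Bigr)\cup U_\theta .
\]

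First I make the principal upsets explicit. Put $A:=\{i\in I\colon s_i=1\}$ and $B:=\{j\in O\colon s_j=1\}$; both are nonempty because $s\notin D_\theta\cup U_\theta$. By the definition of $\le_O$, a string $r$ satisfies $s\le_O r$ exactly when $r_j=1$ for all $j\in B$ and $r_i=0$ for all $i\in I\setminus A$. Likewise $U_\theta=\{r\colon r_i=0\ \forall i\in I\}$.

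Next I choose the basic strings. For $j\in B$ and $i\in I\setminus A$, let $t^{j,i}$ have a $1$ at $j$ and $0$ elsewhere on $O$, a $0$ at $i$, and $1$ on $I\setminus\{i\}$. If $I\setminus A=\emptyset$, I instead use $t^{j}$, which has $1$ at $j$ and on all of $I$, and $0$ elsewhere. Each of these strings has exactly one output $1$ and at most one input $0$. Each also has an input equal to $1$, since $\emptyset\ne A\subseteq I\setminus\{i\}$. Hence it lies in $(D_\theta\cup U_\theta)^C$ and is basic. Their upsets are
\[
U_{t^{j,i}}=\{r\colon r_j=1,\ r_i=0\},\qquad U_{t^j}=\{r\colon r_j=1\}.
\]

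Finally I verify the displayed identity. The intersection of the sets $U_{t^{j,i}}$ over $j\in B$ and $i\in I\setminus A$ is $\{r\colon r_j=1\ \forall j\in B,\ r_i=0\ \forall i\in I\setminus A\}=U_s$. The same holds for the intersection of the $U_{t^j}$ in the degenerate case. Taking the union with $U_\theta$ then gives $\supp_{f_s}$. Therefore $f_s=\bigwedge_k f_{t^k}$ in $\Me_{n,O}$.

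There is no real obstacle here; the only points needing care are the degenerate case $I\setminus A=\emptyset$ and checking that the chosen $t^k$ are indeed basic, i.e.\ lie outside $U_\theta$.
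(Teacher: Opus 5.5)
Your proof is correct and follows essentially the paper's approach: your intersection identity is $U_{t}\cap U_{t'}=U_{t\vee t'}$ (join taken in $\le_O$), which is the paper's $f_{s\vee t}=f_s\wedge f_t$, and your strings $t^{j,i}$ (or $t^{j}$ when $A=I$) make explicit the decomposition $s=\bigvee_k t^k$ into basic strings that the paper only asserts. The one difference is cosmetic: you compare full supports using $\bigcap_k(U_{t^k}\cup U_\theta)=\bigl(\bigcap_k U_{t^k}\bigr)\cup U_\theta$, whereas the paper compares values only on strings outside $D_\theta\cup U_\theta$.
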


\begin{proof} It is quite obvious from the definition of the functions that for $s,t\in
(D_\theta\cup U_\theta)^C$, we have  $s\le t$ if and only if $f_t\le f_s$.
  Further,  for any  $f\in \Me_{n,O}$, we have $f(r)=1$ for $r\in U_\theta$ and
$f(r)=0$ for $r\in D_\theta\setminus \{\theta\}$, hence any monotone subtype is determined
by its values on strings   in $(D_\theta\cup U_\theta)^C$. For  $s,t,r \in
(D_\theta\cup U_\theta)^C$, we have 
\[
(f_s\wedge f_t)(r)=1\iff f_s(r)=f_t(r)=1\iff (t\le r) \& (s\le r) \iff t\vee s\le r \iff
f_{t\vee s}(r)=1,
\]
which implies that  $f_{s\vee t}=f_s\wedge f_t$. As can be easily seen from the definition
of $\le_O$, any $s\in (D_\theta\cup
U_\theta)^C$ can be written as $s=\vee_k t^k$, where $t^k$ are basic strings for $\le_O$.
This finishes the proof.

\end{proof}

Note that for $s\wedge t$ we only have the inequality $f_s\vee f_t\le f_{s\wedge t}$, the
opposite is not true apart from trivial cases. Indeed, note that $f_{s\wedge t}\le f_s\vee f_t$ implies that at
least one of 
$f_s$ or $f_t$ is equal to 1 at $s\wedge t$. But if, say, $f_s(s\wedge t)=1$, then either
$\theta\le s\wedge t$, so that $s,t\in U_\theta$, or $s\le s\wedge t$, which implies that
$s=s\wedge t$.

\begin{proof}[Proof of Theorem \ref{thm:regular}] The 'only if' part of the theorem follows easily from Lemma \ref{lemma:regular}. We now prove
the converse.  As noted above, we may assume that
$\emptyset \ne O\subsetneq [n]$. By Lemma \ref{lemma:fsbasic} and the paragraph above it,
the lattice $\Me_{n,O}$ is generated by functions of the form $f_s$ where $s$ is a basic
string for the ordering $\le_O$. We will prove that any such $f_s$ is a type function,
which will finish the proof. 

Assume that $i\in I$ is the unique  index  such that $s_i=0$. To ease the notations, we may  assume
that $i=1$, since both the type functions and monotone subtypes behave well under
permutations.   If  $r$ is  any string such that $r_1=1$, then $s\nleq r$ and
$\theta\nleq r$, so that we must have $f_s(r)=0$. It follows that
\[
f_s(t)=(1-t_1)f_s(0t_2\dots t_n)=p_1(t_1)g(t_2\dots t_n),\qquad \forall t\in \{0,1\}^n.
\]
It is straightforward that the function $g(r_1\dots r_{n-1}):=f(0r_1\dots r_{n-1})$ is in
$\Fe_{n-1}$ and  $j\in
O_g$ iff $j+1\in O_f$. Considering the corresponding ordering $\le_{O_g}$ in $\{0,1\}^{n-1}$
it  is easily checked that $g=f_{\tilde s}$ 
where $\tilde s\in \{0,1\}^{n-1}$ is the string $\tilde s=s_2\dots s_n$ (we slightly abuse
the notation here). Hence $f_s=p_1\otimes f_{\tilde s}$, so that $f_s$ is a type function
if $f_{\tilde s}$ is.  We may therefore restrict to strings with  $s_i=1$ for all $i\in I$ and $s_l=0$ for all  $l\in O$ 
except a single index $l=j$. We will
show that in that case 
\[
f_s=1-p_{\{j\}}+p_{I\cup\{j\}}
\]
which is a chain type for the chain $\emptyset\subsetneq \{j\}\subsetneq I\cup\{j\}$, so in
particular a type function. 

As previously observed, it is enough to compare the values of the functions for   $r\in (U_\theta\cup
D_\theta)^C$. In this case  $f_s(r)=1$ if and only if $s\le r$, which is equivalent to $r_j=1$. Since
these are precisely the strings for which the function on the right hand side is equal to 1, the statement
is proved.

\end{proof}

\begin{coro}\label{coro:normal} The lattice $\Reg_{n,O}$ is generated by chain types in
$\Te_{n,O}$, of the form 
\[
1-p_{\{j\}}+p_{I\cup\{j\}}, \quad p_{\{i\}}-p_{\{i,j\}}+p_{I\cup\{j\}},\qquad j\in O,\ i\in I.
\]
\end{coro}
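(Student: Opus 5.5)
The corollary is essentially a bookkeeping consequence of the proof of Theorem \ref{thm:regular}. By that theorem $\Reg_{n,O}=\Me_{n,O}$. By Lemma \ref{lemma:fsbasic} and the discussion preceding it, every element of $\Me_{n,O}$ is a join of meets of functions $f_s$ with $s$ basic for $\le_O$. The remaining lattice elements $p_I$ and $p_O^*$ are handled separately: $p_I$ is the empty join, and if needed it is a meet of generators (it is the least element $f_\theta$). The plan is therefore to compute $f_s$ explicitly for every basic string and check that it has one of the two displayed forms.

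A basic string $s$ has exactly one index $j\in O$ with $s_j=1$, and at most one index $i\in I$ with $s_i=0$. This gives two cases.

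\emph{Case 1: $s_l=1$ for all $l\in I$.} The last part of the proof of Theorem \ref{thm:regular} already shows $f_s=1-p_{\{j\}}+p_{I\cup\{j\}}$. This is the first family.

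\emph{Case 2: there is a unique $i\in I$ with $s_i=0$.} The proof of the theorem gives $f_s(t)=(1-t_i)\,f_{\tilde s}(t')$, where $t'$ deletes coordinate $i$. In the reduced setting ($n-1$ indices, inputs $I\setminus\{i\}$, outputs $O$) the string $\tilde s$ falls under Case 1. Hence $f_{\tilde s}=1-p_{\{j\}}+p_{(I\setminus\{i\})\cup\{j\}}$.

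Tensoring with $p_{\{i\}}$, and using $p_A\otimes p_B=p_{A\cup B}$ in the natural embedding, I expect
\[
f_s=p_{\{i\}}\bigl(1-p_{\{j\}}+p_{(I\setminus\{i\})\cup\{j\}}\bigr)=p_{\{i\}}-p_{\{i,j\}}+p_{I\cup\{j\}},
\]
because $p_Ap_B=p_{A\cup B}$ as pointwise products. This is the second family. It is a chain type for the chain $\{i\}\subsetneq\{i,j\}\subsetneq I\cup\{j\}$. When $I=\{i\}$ the last two sets coincide and the function collapses to $p_{\{i\}}$; I would remark that this degenerate case is still a type function, or is absorbed as the least element.

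Finally, I would verify directly that both displayed functions lie in $\Te_{n,O}$, i.e.\ that their input set is $I$. Evaluating at $e^k$: for the first family, $f(e^k)=1$ iff $k=j$ or $k\notin I\cup\{j\}$, so $f(e^k)=1$ iff $k\in O$. The second family is checked the same way. Chain types are type functions. The only mildly delicate point is the identification $p_{\{i\}}\otimes p_A=p_{\{i\}\cup A}$ under the index permutation, together with the degenerate case; the rest is immediate from the proof of the theorem.
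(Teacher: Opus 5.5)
Your proposal is correct and is essentially the paper's argument. The paper only says the corollary is straightforward from the proof of Theorem~\ref{thm:regular}; you additionally make explicit the product computation $p_{\{i\}}\otimes f_{\tilde s}=p_{\{i\}}-p_{\{i,j\}}+p_{I\cup\{j\}}$ and the degenerate case $I=\{i\}$, where the second family collapses to $p_I$, and $p_O^*$ needs no separate treatment since it is already the join of the first family, $p_O^*=\bigvee_{j\in O}\bigl(1-p_{\{j\}}+p_{I\cup\{j\}}\bigr)$.
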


\begin{proof}
 Straightforward from the proof of Theorem \ref{thm:regular}.
\end{proof}

We have also seen that the generating chain types have the form
\begin{center}
\vcbox{\includegraphics[scale=0.9]{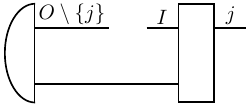}}
\vcbox{$\qquad \qquad$}
\vcbox{\includegraphics[scale=0.9]{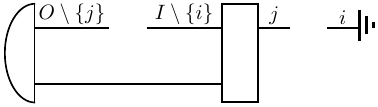}}
\end{center}
for some $i\in I$ and $j\in O$.

\begin{exm}\label{exm:reg_3} For $n=3$ and $O=\{1,3\}$, there are only 5
regular subtypes: beside the minimal element $p_{\{2\}}$ and the maximal element
$p_{\{1,3\}}^*$, there
are two chain types $\beta_1:= 1_3-p_{\{1\}}+p_{\{1,2\}}$ and
$\beta_2:=1_3-p_{\{3\}}+p_{\{2,3\}}$. The remaining element is $\beta_1\wedge \beta_2$. There are only two basic strings in this case: 110 and 011, and we have $f_{110}=\beta_1$,
$f_{011}=\beta_2$. 
\end{exm}

\begin{exm}\label{exm:reg_4}
For $n=4$, $O=\{1,3\}$, there are 50 regular subtypes and 14 chain types (including the minimal element
$p_{\{2,4\}}$ and the maximal element $p_{\{1,3\}}^*$), but only 6 basic strings: 1100,
1001, 0110, 0011, 1101, 0111. Any other chain type in $\Te_{4,\{1,3\}}$ is obtained as the join
of two of the  basic chain types. For example, the 2-comb
$\beta=1_4-p_{\{1\}}+p_{\{1,2\}}-p_{\{1,2,3\}}+p_4$
is obtained as $\beta=f_{1101}\vee f_{0011}$, where 
\[
f_{1101}=1_4-p_{\{1\}}+p_{\{1,2,4\}},\qquad f_{0011}=p_{\{2\}}-p_{\{2,3\}}+p_{\{2,3,4\}}.
\]

\end{exm}

\section{Signaling}\label{sec:signaling}

Let $A_1,\dots, A_n$ be elementary types and let $\mathcal H=\otimes_{i=1}^n\mathcal
H_{A_i}$, $\mathcal H_T=\otimes_{i\in T}\mathcal H_{A_i}$ for any $T\subseteq [n]$. 
Let $[n]=I\sqcup O$ and let $\Phi$ be a channel $B(\mathcal H_I)\to B(\mathcal H_O)$, with
Choi operator $C_\Phi\in B(\mathcal H)$.  We say that there is no
signalling from  $i\in I$ to $j\in O$ in $\Phi$, in notation $i\not \rightsquigarrow_\Phi j$, if 
\[
\Tr_{O\setminus \{j\}} [C_\Phi]=I_{\mathcal H_{A_i}}\otimes C_{\Phi_1},
\]
where $\Phi_1: B(\mathcal H_{I\setminus\{i\}})\to B(\mathcal H_{A_j})$ is some channel
(\cite{beckman2001causal,eggeling2002semicausal,apadula2024nosignalling}). 
In diagram
\begin{center}
\vcbox{\includegraphics[angle=90,origin=b]{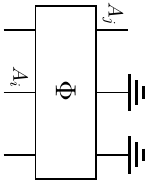}}
\vcbox{\hspace*{1em} $=$ \hspace*{1em}}
\vcbox{\includegraphics[angle=90,origin=b]{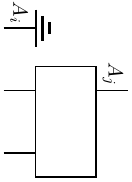}}
\end{center}
This is interpreted as the fact that the inputs at $i$ cannot influence the outcomes on
any measurements performed at the output $j$. Note also that the realization theorem for
quantum supermaps \cite{chiribella2009theoretical} implies that $\Phi$ is a quantum superchannel, more precisely, a
quantum higher order map of type $(A_j\to
A_i)\to (\otimes_{l\in I\setminus\{i\}}A_l\to \otimes_{j'\in O\setminus\{j\}}A_{j'})$.

We will consider the case when this property
holds for all  maps of a given regular subtype.

\begin{defi}
Let $[n]=I\sqcup O$ and let  $f\in \Reg_{n,O}$.  We say that there is no
signalling from $i\in I$ to $j\in O$ in $f$, in symbols: $i\not\rightsquigarrow_f j$, if 
$i\not \rightsquigarrow_\Phi j$ for any choice of
(nontrivial) elementary types $A_1,\dots,A_n$ and any channel $\Phi$ such that $C_\Phi\in
T_1(f)$. If this is not the case, we say that $i$ can signal to $j$ in
$f$, in symbols: $i\rightsquigarrow_f j$.

\end{defi}

We will now see that the relations $i\not\rightsquigarrow_f j$ in regular subtypes can be quite easily
characterized. For this, let us define the string $e^{i,j}\in \{0,1\}^n$ by
\begin{equation}\label{eq:eij}
e^{i,j}_l=1\ \iff \ l\in \{i,j\},\qquad l=1,\dots,n.
\end{equation}
We will sometimes use the notation $e^{i,j:n}$ to stress the value of $n$. 

\begin{theorem}\label{thm:signalling_f} Let $[n]=I\sqcup O$ and $f\in \Reg_{n,O}$. For
$i\in I$ and $j\in O$, we have $i\not\rightsquigarrow_f j$ if and only if $f(e^{i,j})=0$.

\end{theorem}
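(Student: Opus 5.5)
The plan is to translate the no-signalling condition into a statement about the subspace $S_f$. Fix elementary types and a channel $\Phi$ with $C_\Phi\in T_1(f)$, and decompose $C_\Phi=\sum_{s} X_s$ with $X_s\in L_s$ and $X_s=0$ whenever $f(s)=0$. The partial trace $\Tr_{O\setminus\{j\}}$ kills every component $L_s$ with $s_l=1$ for some $l\in O\setminus\{j\}$, because the traceless part has zero trace. On the remaining components it acts, up to a positive constant, as the identity on the other tensor factors. Hence $\Tr_{O\setminus\{j\}}[C_\Phi]$ is determined by the components $X_s$ with $s_l=0$ for all $l\in O\setminus\{j\}$.

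The condition $\Tr_{O\setminus\{j\}}[C_\Phi]=I_{\mathcal H_{A_i}}\otimes C_{\Phi_1}$ says exactly that this reduced operator has no component with a traceless factor on $A_i$. So it is equivalent to $X_s=0$ for every $s$ with $s_i=1$ and $s_l=0$ for all $l\in O\setminus\{j\}$. Whether $\Phi_1$ is then a channel is automatic from the trace-preserving property of $\Phi$ and does not need separate treatment.

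For the direction ``$f(e^{i,j})=0\Rightarrow i\not\rightsquigarrow_f j$'', I use monotonicity from Theorem \ref{thm:regular} together with Lemma \ref{lemma:regular_alternative}. Consider a string $s$ with $s_i=1$ and $s_l=0$ for $l\in O\setminus\{j\}$. If $s_j=1$, then $s\le_O e^{i,j}$: on $O$ the only nonzero entry of $s$ is at $j$, and on $I$ the string $s$ is $\ge e^{i,j}$ since $s_i=1$. Monotonicity then gives $f(s)\le f(e^{i,j})=0$. If $s_j=0$, then $s\in D^O_\theta\setminus\{\theta\}$, since $s\neq\theta$ because $s_i=1$; so $f(s)=0$ by Lemma \ref{lemma:regular_alternative}(iii). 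In both cases $X_s=0$, so there is no signalling for every admissible $\Phi$.

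For the converse, assume $f(e^{i,j})=1$. I will construct, for qubit (or any nontrivial) systems, a channel whose Choi operator lies in $T_1(f)$ and has a nonzero $L_{e^{i,j}}$ component. Take $C=c\,I_{\mathcal H}+\epsilon Y$ with $Y=\bigotimes_l Y_l$, where $Y_i,Y_j$ are nonzero traceless and $Y_l=I$ otherwise, and $c=c_f/\dim\mathcal H$. Then $Y\in L_{e^{i,j}}\subseteq S_f$, so $C\in S_f$, and $C$ has the right trace. For small $\epsilon$ the operator $C$ is positive, so $C\in T_1(f)$. Since $f\le p_O^*$, this $C$ is the Choi operator of a channel. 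Its $L_{e^{i,j}}$ component survives $\Tr_{O\setminus\{j\}}$ and is not of the form $I_{A_i}\otimes(\cdot)$, so $i\rightsquigarrow_f j$.

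The main obstacle is making the first step precise: checking that the partial trace acts diagonally on the decomposition $\bigoplus_s L_s$, and that the product form $I_{A_i}\otimes C_{\Phi_1}$ is equivalent to vanishing of the components with $s_i=1$. Both follow from the orthogonality of $L_{i,0}$ and $L_{i,1}$ and from the fact that $\Tr$ annihilates $L_{l,1}$.
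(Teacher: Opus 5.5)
Your proof is correct. Its combinatorial core is the same as the paper's. For a string $s$ with $s_i=1$ and $s_l=0$ for all $l\in O\setminus\{j\}$, either $s_j=1$, so $s\le_O e^{i,j}$ and monotonicity (Theorem \ref{thm:regular}) gives $f(s)\le f(e^{i,j})$. Or $s_j=0$, so $s\in D^O_\theta\setminus\{\theta\}$, where every subtype with outputs in $O$ vanishes.

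Where you differ is in translating the operational condition into a condition on $f$.

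\textbf{The paper's route.} It notes that $i\not\rightsquigarrow_\Phi j$ means $\Phi$ is a superchannel of type $(j\to i)\to(I\setminus\{i\}\to O\setminus\{j\})$. It takes the type function $\beta=1-p_{O\setminus\{j\}}+p_{(O\cup\{i\})\setminus\{j\}}-p_{O\cup\{i\}}+p_n$ of that type and uses Lemma \ref{lemma:Tf} to reduce the claim to the inequality $f\le\beta$. The ``only if'' direction is then the one-line evaluation $f(e^{i,j})\le\beta(e^{i,j})=0$.

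\textbf{Your route.} You compute $\Tr_{O\setminus\{j\}}$ directly on the decomposition $\bigoplus_s L_s$. In effect this rederives the zero set of $\beta$, which is your set of strings together with some strings in $D^O_\theta\setminus\{\theta\}$. For the converse, you replace the spanning/injectivity argument behind Lemma \ref{lemma:Tf} with the explicit witness $c\,I_{\mathcal H}+\epsilon Y$, where $Y\in L_{e^{i,j}}$.

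\textbf{Comparison.} The paper's route is shorter given its machinery. It also fits the paper's theme of expressing operational properties as order relations between boolean functions. Yours is self-contained: it does not need the comb characterization of superchannels, and it exhibits a concrete signalling channel. It also makes explicit why the normalization of $\Phi_1$ comes for free from trace preservation of $\Phi$.
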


\begin{proof} As noted above, we have $i\not\rightsquigarrow_f j$ if and only if any
channel $\Phi$ with $C_\Phi\in T_1(f)$ is a superchannel with the  type structure $(j\to i)\to (
I\setminus \{i\}\to O\setminus\{j\})$. By Lemma \ref{lemma:Tf}, this is equivalent to $f\le \beta$, where 
\[
\beta= 1-p_{O\setminus \{j\}}+p_{(O\cup \{i\})\setminus\{j\}}-p_{O\cup \{i\}}+p_n
\]
is the  corresponding type function of the superchannels, see Example
\ref{exm:combs}. 

If $f\le \beta$, then we must
have $f(e^{i,j})\le \beta(e^{i,j})=0$, which is easily seen by plugging $e^{i,j}$ into the
expression for $\beta$. For the converse, assume that $f(e^{i,j})=0$. We have
to show that $f(s)=0$ whenever $\beta(s)=0$ for $s\in \{0,1\}^n$.    From the chain of
inequalities $p_n\le p_{O\cup\{i\}}\le p_{(O\cup\{i\})\setminus \{j\}}\le
p_{O\setminus\{j\}}$, we see that for
$\beta(s)=0$, we must have $p_{O\setminus\{j\}}(s)=1$ and either $p_{(O\cup\{i\})\setminus
\{j\}}(s)=0$,  or 
$p_{O\cup\{i\}}(s)= p_{(O\cup\{i\})\setminus\{j\}}(s)=1$ and $p_n=0$. 

In the latter case, we have 
$f(s)\le p_O^*(s)\le p_{O\cup\{i\}}^*(s)=0$, here the first inequality holds since $f$ is a subtype
with outputs in $O$. Let now $p_{O\setminus\{j\}}(s)=1$ and $p_{(O\cup\{i\})\setminus\{j\}}(s)=0$. This
implies that $s_l=0$ for all $l\in O$, $l\ne j$, and $s_i=1$. It follows that 
$s_l\le e^{i,j}_l$ for all $l\in O$, and $s_l\ge e^{i,j}_l$ for all $l\in I$, in other
words, $s\le_O e^{i,j}$. Theorem \ref{thm:regular} now implies that $f(s)\le
f(e^{i,j})=0$. This finishes the proof.

\end{proof}

The following behaviour of the signalling relations should be clear from the
interpretation of  the basic operations on subtypes, but is also easily proved from
Theorem \ref{thm:signalling_f}. 

\begin{coro}\label{coro:signalling_operations} Let $f\in \Reg_{n,O}$, $i\in I$, $j\in O$.
Then
\begin{enumerate}
\item[(i)]  $i\not\rightsquigarrow_f j$ $\iff$ $j\rightsquigarrow_{f^*}i$.
\item[(ii)] For any $\sigma\in \permut_n$, $i\not\rightsquigarrow_f j$ $\iff$ $\sigma^{-1}(i)\not\rightsquigarrow_{f\circ\sigma} \sigma^{-1}(j)$.
\item[(iii)] Assume the decomposition $[n]=[n_1]\oplus[n_2]$ and let $f=f_1\otimes f_2$,
with $f_1\in \Reg_{n_1,O_1}$, $f_2\in \Reg_{n_2,O_2}$. For  $i_k\in
I_k$, $j_k\in O_k$, $k=1,2$, we have 
\begin{itemize}
\item $i_1\not\rightsquigarrow_{f_1} j_1$ $\iff$  $i_1\not\rightsquigarrow_f j_1$,
\item $i_2\not\rightsquigarrow_{f_2} j_2$ $\iff$  $(i_2+n_1)\not\rightsquigarrow_f
(j_2+n_1)$ 
\item $i_1\not\rightsquigarrow_f (j_2+n_1)$,  $(i_2+n_1)\not\rightsquigarrow_f j_1$.
\end{itemize}

\item[(iv)] With the same notations  as in (iii), let  $f = f_1\vtl f_2$.  
Then 
\begin{itemize}
\item $i_1\not\rightsquigarrow_{f_1} j_1$ $\iff$ $i_1\not\rightsquigarrow_f j_1$,
\item $i_2\not\rightsquigarrow_{f_2} j_2$ $\iff$$(i_2+n_1)\not\rightsquigarrow_f
(j_2+n_1)$
\item $i_1\not\rightsquigarrow_f (j_2+n_1)$ but  
$(i_2+n_1)\rightsquigarrow_f j_1$.
\end{itemize}

\item[(v)] For any  $g\in \Reg_{n,O}$, 
\begin{itemize}
\item $i\not\rightsquigarrow_{f\wedge g} j$ $\iff$ $i\not\rightsquigarrow_f j$ or
$i\not\rightsquigarrow_g j$.
\item 
$i\not\rightsquigarrow_{f\vee g} j$ $\iff$  $i\not\rightsquigarrow_f j$ and 
$i\not\rightsquigarrow_g j$. 
\end{itemize}

\end{enumerate}

\end{coro}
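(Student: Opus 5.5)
Every item will be reduced to Theorem \ref{thm:signalling_f}, which says $i\not\rightsquigarrow_f j$ iff $f(e^{i,j})=0$. The proof then consists of evaluating the relevant functions at strings of the form $e^{i,j}$. Before doing so, I will check in each item that the functions involved are regular subtypes, so that the theorem applies. Lemma \ref{lemma:regular}(b) and Theorem \ref{thm:regular} show that $\Reg$ is closed under $^*$, permutations, $\otimes$ and $\vtl$. The input/output sets are supplied by Proposition \ref{prop:subtype_dual_tensor}. For (v), $\Reg_{n,O}$ is a lattice.

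For (i): $f^*$ has inputs $O$ and outputs $I$, so $j$ is an input of $f^*$ and $i$ is an output. Because $e^{i,j}\neq\theta$, we have $f^*(e^{i,j})=1-f(e^{i,j})$. Hence $f(e^{i,j})=0$ iff $f^*(e^{j,i})=1$, which by the theorem means $j\rightsquigarrow_{f^*}i$.

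For (ii): $(f\circ\sigma)(e^{\sigma^{-1}(i),\sigma^{-1}(j)})=f(e^{i,j})$. This uses $\sigma(e^k)=e^{\sigma(k)}$, the same fact as in the proof of Proposition \ref{prop:subtype_dual_tensor}.

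For (iii) and (iv), I will split $e^{i,j}=s^1s^2$ and evaluate by cases. If both indices lie in the first block, then $s^2=\theta$. The tensor product gives $f_1(e^{i_1,j_1})f_2(\theta)=f_1(e^{i_1,j_1})$. Since $s^1\ne\theta$, formula \eqref{eq:causalp_s} gives $f_1(e^{i_1,j_1})$ as well. If both indices lie in the second block, then $s^1=\theta$. The tensor product gives $f_2(e^{i_2,j_2})$, and \eqref{eq:causalp_s} gives $f_2(s^2)$, which is the same value.

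The mixed cases are where the interesting values appear. Take the input in block $1$ and the output in block $2$, so $s=e^{i_1}e^{j_2}$. Since $i_1$ is an input of $f_1$, Proposition \ref{prop:inout} (or definition \eqref{eq:inout}) gives $f_1(e^{i_1})=0$. Therefore the tensor product vanishes, and $f_1\vtl f_2$ evaluates to $f_1(e^{i_1})=0$. Take instead the output in block $1$ and the input in block $2$, so $s=e^{j_1}e^{i_2}$. Then $f_2(e^{i_2})=0$, and the tensor product vanishes. The causal product, however, takes the value $f_1(e^{j_1})=1$ because $j_1\in O_1$, so signalling occurs. Here I must keep track of which block is the first factor in $\vtl$ in order to match the statement's orientation.

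For (v): the pointwise lattice operations give $(f\wedge g)(e^{i,j})=0$ iff $f(e^{i,j})=0$ or $g(e^{i,j})=0$. Likewise, $(f\vee g)(e^{i,j})=0$ iff both values are $0$. I expect no real obstacle in any item. The only care needed is the bookkeeping of index shifts by $n_1$ and confirming that each function is regular with the stated inputs and outputs.
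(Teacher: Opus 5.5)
Your proposal is correct and follows the paper's proof essentially step for step. Like the paper, you reduce every item to Theorem \ref{thm:signalling_f} by evaluating at $e^{i,j}$. You use $f^*(e^{i,j})=1-f(e^{i,j})$, the permutation identity, the concatenations $e^{i_1,j_1:n_1}\theta_{n_2}$, $\theta_{n_1}e^{i_2,j_2:n_2}$, $e^{i_1:n_1}e^{j_2:n_2}$, $e^{j_1:n_1}e^{i_2:n_2}$ together with \eqref{eq:inout}, and pointwise lattice operations. Your explicit check that $f^*$, $f\circ\sigma$, $f_1\otimes f_2$ and $f_1\vtl f_2$ are regular with the right input/output sets (via Lemma \ref{lemma:regular}(b) and Theorem \ref{thm:regular}) is a detail the paper leaves implicit.
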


\begin{proof} The statement (i) follows from $f(e^{i,j})=0$ if and only if
$f^*(e^{i,j})=1-f(e^{i,j})=1$. For (ii), note first that by Proposition
\ref{prop:subtype_dual_tensor} we have $I_{f\circ \sigma}=\sigma^{-1}(I)$,
$O_{f\circ\sigma}=\sigma^{-1}(O)$. The statement follows from 
$f(e^{i,j})=f\circ\sigma(e^{\sigma^{-1}(i),\sigma^{-1}(j)})$.

The statements (iii) and (iv) follow from expressing the corresponding strings as
concatenations:
\begin{align*}
e^{i_1,j_1:n}& =e^{i_1,j_1:n_1}\theta_{n_2},\quad 
e^{(i_2+n_1),(j_2+n_1):n}=\theta_{n_1}e^{i_2,j_2:n_2},\\
e^{i_1,(j_2+n_1):n}&=e^{i_1:n_1}e^{j_2:n_2},\quad  e^{(i_2+n_1),j_1:n}=e^{j_1:n_1}e^{i_2:n_2}.
\end{align*}
We then use the definitions of $\otimes$ and $\vtl$, together with Theorem
\ref{thm:signalling_f} and Proposition \ref{prop:inout}. The statement (v) is immediate
from Theorem \ref{thm:signalling_f}.

\end{proof}

\subsection{Signalling  in higher order types }

The aim of this paragraph is to show that signalling relations for a type can be inferred from the order
properties of its structure posets. 
The corresponding Hasse diagrams with labelled vertices 
then provide a  graphical representation of the signalling relations.

We have seen in Section \ref{sec:structure} that we
may use the rank function in $\Pe_f$ to define the rank $r_f(i)$ of an index $i\in [n]$,
and that this rank  determines the inputs/outputs  of $f$.  We now define the rank in $f$ for pairs of indices.

\begin{defi}\label{defi:rank} Let $f\in \Te_n$ and $i,j\in [n]$. We define the \emph{rank}
of the pair $i,j$ in $f$ as
\[
r_f(i,j):= \max\{\rho_f(S\wedge T)\colon S\in \TT_i^f,\ T\in \TT_j^f\}.
\]
If the set on the right hand side is empty, this value is set to -1, unless $i$ or $j$ is
a free output, in which case we put $r_f(i,j)=\min\{r_f(i),r_f(j)\}$. 

\end{defi}

It is easy to see by the properties of the rank function  that $r_f(i,j)\le \min\{r_f(i),r_f(j)\}$, 
with equality if and only if $i$ or $j$ is a free output or there exist 
some $S\in \TT_i^f$ and $T\in \TT_j^f$ that are comparable. Further, by Corollary
\ref{coro:meet}, if $S\wedge T$ exists, then it belongs to $\Pe_f^0$ and hence it is the
meet of $S$ and $T$ also in $\Pe_f^0$, moreover, by Corollary \ref{coro:lattice} such a
meet exists if and only if $S$ and $T$ have a common lower bound in $\Pe_f^0$.

\begin{theorem}\label{thm:signalling_pf}
Let $f\in \Te_n$ and $i\in I_f$, $j\in O_f$. Then $i\not\rightsquigarrow_f j$ if and only if
  $r_f(i,j)$ is even.

\end{theorem}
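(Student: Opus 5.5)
The plan is to reduce the statement to Theorem \ref{thm:signalling_f}: since $f\in\Te_n\subseteq\Reg_{n,O_f}$, we have $i\not\rightsquigarrow_f j$ iff $f(e^{i,j})=0$. It therefore suffices to prove that, for $i\in I_f$ and $j\in O_f$,
\[
f(e^{i,j})=1 \iff r_f(i,j) \text{ is odd}.
\]
This is the pair analogue of the fact, recalled in Section \ref{sec:structure}, that $f(e^i)=1$ iff $r_f(i)$ is odd.

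From \eqref{eq:structure_poset} we get
\[
f(e^{i,j})=\sum_{T\in\Pe_f,\ i,j\notin T}(-1)^{\rho(T)}.
\]
This is an alternating sum over the downset $\{T\in\Pe_f\colon i,j\notin T\}$. A direct combinatorial evaluation looks hard for general graded posets. I would instead prove the parity claim by induction on $n$, using the fact that every $f\in\Te_n$ is, up to a permutation, either $g\otimes h$ or $(g\otimes h)^*$ with $g\in\Te_{n_1}$, $h\in\Te_{n_2}$. Permutations are harmless, by Corollary \ref{coro:signalling_operations}(ii) together with $\Pe_{f\circ\sigma}\simeq\Pe_f$. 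The base case $n=1$ is vacuous, since a pair $i\in I_f$, $j\in O_f$ does not exist.

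\textbf{Complement step.} We have $f^*(e^{i,j})=1-f(e^{i,j})$, and the roles of input and output are swapped. Since $\Pe_{f^*}=\Pe_f\triangle\{\emptyset,[n]\}$, I would use the results of Appendix \ref{app:structure} on reduced posets to check that $r_{f^*}(i,j)$ and $r_f(i,j)$ have opposite parity.
\begin{itemize}
\item Adding or removing $\emptyset$ shifts every rank by $\pm1$ and leaves the label sets unchanged. It also moves a pair with no common lower bound (value $-1$) to a meet at $\emptyset$ (value $0$), or back.
\item Toggling $[n]$ only changes which indices are free outputs. This is exactly the case covered by the convention $r_f(i,j)=\min\{r_f(i),r_f(j)\}$, and that convention is consistent with the parity shift.
\end{itemize}

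\textbf{Tensor step.} Write $f=g\otimes h$ with $\Pe_f=\Pe_g\times\Pe_h$.
\begin{itemize}
\item If $i,j$ both lie in $[n_1]$, then $f(e^{i,j})=g(e^{i,j:n_1})h(\theta)=g(e^{i,j:n_1})$. I would show, via the description of labels in product posets from the appendix, that $\TT^f_i=\{(S,M)\colon S\in\TT^g_i,\ M \text{ minimal in } \Pe_h\}$. Meets are then computed coordinatewise, so $r_f(i,j)=r_g(i,j)$, and the claim follows from the induction hypothesis. The case $i,j\in[n_2]$ is symmetric.
\item If $i\in[n_1]$ and $j\in[n_2]$ (or vice versa), then $f(e^{i,j})=g(e^i)h(e^j)=0$, because $i$ is an input. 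On the poset side, elements of $\TT_i^f$ and $\TT_j^f$ share the common lower bound (minimal, minimal), so $r_f(i,j)=0$, which is even. If $j$ is a free output, then by convention $r_f(i,j)=\min\{r_f(i),r_f(j)\}=r_f(i)$, and this is even since $i$ is an input. Either way the parity matches.
\end{itemize}

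I expect the main obstacle to be the bookkeeping of label sets and meets under the two operations. This concerns the product poset, and the complement when $\emptyset$ or $[n]$ carries labels (or ceases to). It also includes the edge cases where $r_f(i,j)$ is set to $-1$ or defined through free outputs. I would first establish a small lemma from Appendix \ref{app:structure} (Corollaries \ref{coro:meet}, \ref{coro:lattice}) describing $\TT^{f}_i$ for $f=g\otimes h$ and for $f^*$, and then verify the parity transfer case by case.
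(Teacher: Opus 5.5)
Your proposal is correct and follows essentially the same route as the paper. The paper also reduces to evaluating $f(e^{i,j})$ via Theorem \ref{thm:signalling_f}, through Corollary \ref{coro:signalling_operations}, and then runs the same structural induction. Permutations are handled by relabelling. Complements are handled by the parity flip from toggling $\emptyset$ and $[n]$, with the free-output convention absorbing the label changes at $[n]$. Tensor products use the same description $\TT^f_i=\{(S,M)\colon S\in\TT^{g}_i,\ M\in\Min(\Pe_{h})\}$, which gives $r_f(i,j)=r_g(i,j)$ in the same-factor case and $r_f(i,j)=0$ in the cross case.

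The differences are cosmetic. The paper disposes of the free-output case for $j$ at the outset, and it uses chain types as the base case, computing $f(e^{i,j})=\sum_{l=0}^{k-1}(-1)^l$ directly. Your vacuous base case $n=1$ is enough, since every type function is generated from $1_1$ and $p_1$, provided the complement step at a given $n$ is applied after the tensor step at that $n$.
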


\begin{proof} 
Let us first note that the statement is true if $i\in I_f$  is any input and $j\in O_f$ is
a free output of $f$. Indeed, in this case, $j$ does not belong to any $S\in \Pe_f$, and
hence the value of $s_j$ does not influence the value of $f(s)$ for any string $s$. In
particular, $f(e^{i,j})=f(e^i)=0$, so $i\not\rightsquigarrow_f j$. On the other hand, we
have $r_f(i,j)=\min\{r_f(i),r_f(j)\}=r_f(i)$ is even, by definition of the ranks. Hence we
may assume below that $j$ is not a free output.

We will proceed by induction. Assume first that $f$ is a chain type, with 
$S_0\subsetneq \dots \subsetneq S_K\subseteq [n]$ the corresponding chain of subsets. 
It is easily seen that in this case $k:=r_f(i,j)=\min\{r_f(i),r_f(j)\}$, and $S_k$ is the
smallest element in the chain that contains $i$ or $j$. Consequently,
\[
f(e^{i,j})=\sum_{l=0}^{k-1} (-1)^l,
\]
which is zero if and only if $k$ is even. 

Assume next that the statement holds for $f\in
\Te_n$ and let $\sigma\in \permut_n$. By \cite[Coro.~4.7]{jencova2024onthestructure},  
$\Pe_{f\circ \sigma}$ as a poset is the same
 as $\Pe_f$, only with label sets exchanged as $L_S\mapsto \sigma^{-1}(L_S)$. This shows that
 $r_{f\circ\sigma}(\sigma^{-1}(i),\sigma^{-1}(j))=r_f(i,j)$. Using Corollary
 \ref{coro:signalling_operations} (ii), we see that the statement holds for
 $f\circ\sigma$ as well.

Let us now look at $f^*$. Assume  $i\in I_{f^*}=O_f$ and $j\in O_{f^*}=I_f$. By Corollary
\ref{coro:signalling_operations} (i) we  have  $i\not\rightsquigarrow_{f^*} j$ if and only if $j\rightsquigarrow_f i$. 
By  the proof of \cite[Prop.~4.8]{jencova2024onthestructure}, we have $\TT^f_i=\TT^{f^*}_i$
and $\TT^f_j=\TT^{f^*}_j$, unless $j$ is a free output of $f^*$, which case was treated
in the first paragraph of this proof, or $i$ is a free output of $f$, which is not possible since $j\rightsquigarrow_f
i$. It follows that all elements labelled by $i$ and $j$ are the same in both $\Pe_f^0$
and $\Pe_{f^*}^0$, together with all the elements below them except $\emptyset$. 

By the assumption on $f$, $j\rightsquigarrow_f i$ is equivalent to the fact that $r_f(i,j)$
is  odd. If $r_f(i,j)=-1$, then  $S\wedge T$ does not exist 
for any $S\in \TT^f_i$ and $T\in \TT^f_j$. By Corollary \ref{coro:lattice}, this implies
that no such pair of $S$ and $T$ can have a lower bound in $\Pe_f^0$, in particular,
$\emptyset \notin \Pe_f^0$. But then $\emptyset \in \Pe_{f^*}^0$ becomes the largest lower
bound for any such pair in $\Pe_{f^*}^0$, and hence $r_{f^*}(i,j)=\rho_{f^*}(\emptyset)=0$. 
In the case that $r_f(i,j)$ is a positive odd integer, let $S\in \TT^f_i$ and $T\in \TT^f_j$ be such
that $r_f(i,j)=\rho_f(S\wedge T)$, then it follows by the above remarks that
$r_{f^*}(i,j)=\rho_{f^*}(S\wedge T)=\rho_f(S\wedge T)\pm 1$. In any case, we see that
$r_{f^*}(i,j)$ is even iff $r_f(i,j)$
is odd.

Finally, assume that the statement is true for $f_1\in \Te_m$ and $f_2\in \Te_n$ and let
$f=f_1\otimes f_2$. Let $i\in I_f$ and $j\in O_f$. Assume that $i$ is an index related to
$f_1$ and $j$ related to $f_2$, that is, $i\in I_{f_1}$ and $j-m\in O_{f_2}$. By Corollary
\ref{coro:signalling_operations} (iii), we have $i \not\rightsquigarrow_f j$.
As before, we may assume that $j$
is not a free output of $f$, so that $\TT^j_f\ne\emptyset$. Elements  $T\in \TT^f_j$ are
precisely those of the form  $T=(U_1,T_2)$, where $U_1\in \Min(\Pe_{f_1})$ and $T_2\in \TT^{f_2}_{j-m}$. 
Similarly, $S\in \TT^f_i$ if and only if $S=(S_1,V_2)$, where $S_1\in \TT^{f_1}_i$ and
$V_2\in \Min(\Pe_{f_2})$. Two such elements have a common lower bound if and only if
$U_1\le S_1$ and $V_2\le T_2$, in which case $S\wedge T=(U_1,V_2)\in \Min(\Pe_f)$. It
follows that $r_f(i,j)=0$. The case when $i$ is related to $f_2$ and $j$ to $f_1$ is
similar.

Assume that both $i$ and $j$ are related to the same function, say $f_1$, so that $i,j\in
[m]$. Then $i\not\rightsquigarrow_f j$ if and only if $i\not\rightsquigarrow_{f_1} j$. 
Furthermore, $S\in \TT^f_i$ and $T\in \TT^f_j$ have the form $S=(S_1,U_2)$, $T=(T_1,V_2)$, with $S_1\in
\TT^{f_1}_i$, $T_1\in \TT^{f_1}_j$ and $U_2,V_2\in \Min(\Pe_{f_2})$. Such elements can have a
common lower bound only if $U_2=V_2$, in which case $S\wedge T=(S_1\wedge T_1,U_2)$ and
$\rho_f(S\wedge T)=\rho_{f_1}(S_1\wedge T_1)$. It follows that if the statement holds for
$f_1$, it also holds for $f$.  The case that
both $i$ and $j$ are related to $f_2$ is treated in an analogous way.

\end{proof}

Although  the full structure poset can be
inferred from the reduced one, it might be not
obvious how to obtain the  rank of a labelled  element just from $\Pe_f^0$.
The next result shows an easy  way to infer signalling from
the reduced poset.

\begin{coro}\label{coro:signalling} Let $f\in \Te_n$, $i\in I_f$, $j\in O_f$. Then
\begin{enumerate}
\item[(i)] If $j$ is a free output, then $i\not\rightsquigarrow_f j$.
\item[(ii)] If there are some $S\in \TT_i^f$ and $T\in \TT_j^f$ that are comparable, then
$i\not\rightsquigarrow_f j$ if and only if $S\le T$.
\item[(iii)] If none of the above holds, then $i\not\rightsquigarrow_f j$ if and only if 
the longest chain below a pair $(S,T)$ with  $S\in \TT^f_i$ and $T\in \TT^f_j$ has even length (where we
set the length of an empty chain to -1).

\end{enumerate}

\end{coro}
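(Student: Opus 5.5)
The corollary is a reformulation of Theorem \ref{thm:signalling_pf}: in each case I determine the parity of $r_f(i,j)$ using only the reduced poset $\Pe_f^0$.

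Part (i) is immediate. It was already established in the first paragraph of the proof of Theorem \ref{thm:signalling_pf}, since a free output $j$ does not influence $f$ and $f(e^{i,j})=f(e^i)=0$ by Proposition \ref{prop:inout}.

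For (ii), assume $j$ is not a free output and let $S\in\TT^f_i$, $T\in\TT^f_j$ be comparable. By the remark after Definition \ref{defi:rank}, $r_f(i,j)=\min\{r_f(i),r_f(j)\}$. All elements of $\TT_i^f$ have rank $r_f(i)$, which is even since $i$ is an input, and all elements of $\TT^f_j$ have rank $r_f(j)$, which is odd. Since the rank function is strictly increasing along chains and $r_f(i)\neq r_f(j)$, we have $S<T$ iff $\rho(S)<\rho(T)$. Hence $S\le T$ iff the minimum is $r_f(i)$, which is even, and by Theorem \ref{thm:signalling_pf} this is equivalent to $i\not\rightsquigarrow_f j$. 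As a consistency check, different comparable pairs cannot give opposite answers, because the direction of comparison is forced by the ranks.

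For (iii), no pair is comparable and $j$ is not free, so $r_f(i,j)=\max\rho_f(S\wedge T)$ over pairs that have a meet, and $r_f(i,j)=-1$ if no pair has one. By Corollaries \ref{coro:meet} and \ref{coro:lattice}, a meet exists iff $S,T$ have a common lower bound in $\Pe_f^0$, and then $S\wedge T\in\Pe_f^0$. I need to show that $\rho_f(S\wedge T)$ equals the length of the longest chain in $\Pe_f^0$ lying below both $S$ and $T$, that is, a chain of common lower bounds ending at $S\wedge T$. This is the main obstacle. The rank $\rho_f$ is measured in the full poset $\Pe_f$, whose maximal chains below $S\wedge T$ may pass through unlabelled elements not in $\Pe_f^0$. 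The point to settle is whether "the longest chain" in the statement refers to chains in $\Pe_f$, in which case, since $\Pe_f$ is graded, the length of any maximal chain below $S\wedge T$ is exactly $\rho_f(S\wedge T)$ and the claim is immediate. If it is meant within $\Pe_f^0$, I would need that gradedness survives the reduction, which I would try to extract from the appendix results on reduced posets. Under the first reading, with the convention that an empty chain has length $-1$ covering the case where no meet exists, the parity of the longest chain equals the parity of $r_f(i,j)$, and Theorem \ref{thm:signalling_pf} finishes the proof.
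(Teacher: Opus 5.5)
Your treatment of (i) and (ii) is fine and agrees with the paper. The gap is in (iii). The corollary is explicitly meant to read signalling off the \emph{reduced} poset; see the sentence preceding it, and note that in the paper's proof $S^{\downarrow_f}$ is a downset in $\Pe_f^0$. So ``the longest chain below $(S,T)$'' refers to chains in $\Pe_f^0$. By choosing chains in $\Pe_f$, you only restate Theorem~\ref{thm:signalling_pf} using gradedness of $\Pe_f$. The actual content of (iii), namely that $\rho_f(S\wedge T)$ can be computed inside $\Pe_f^0$, remains unproved.

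The repair you suggest for the intended reading, that gradedness with respect to $\rho_f$ survives the reduction, is false in general. For $h=(f_{ns}\otimes 1_1)^*\in\Te_5$ one has $\Pe_h^0=\{\{1\},\{1,2\},\{3\},\{3,4\},[5]\}$ with $\rho_h([5])=4$. The maximal chains $\{1\}<\{1,2\}<[5]$ and $\{3\}<\{3,4\}<[5]$ of $\Pe_h^0$ have length $2$.

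The missing idea is that no global statement is needed: the hypothesis of (iii) gives what is needed at the single element $S\wedge T$. Since every pair $S\in\TT^f_i$, $T\in\TT^f_j$ is incomparable, Lemma~\ref{lemma:ypsilon} shows that the set $S^{\downarrow_f}\cap T^{\downarrow_f}$ of common lower bounds in $\Pe_f^0$ is empty or a chain. If it is nonempty, then $S\wedge T$ exists and lies in $\Pe_f^0$ (Corollaries~\ref{coro:lattice} and~\ref{coro:meet}). So $(S\wedge T)^{\downarrow_f}=S^{\downarrow_f}\cap T^{\downarrow_f}$ is a chain, and Corollary~\ref{coro:xfree} gives $\rho_f(S\wedge T)=\ell\bigl((S\wedge T)^{\downarrow_f}\bigr)$. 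Hence the longest chain below $(S,T)$ in $\Pe_f^0$ is this chain, of length $\rho_f(S\wedge T)$, or $-1$ if there is no common lower bound. Maximizing over pairs yields $r_f(i,j)$, and Theorem~\ref{thm:signalling_pf} finishes the proof.

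The same use of Lemma~\ref{lemma:ypsilon}, as in the proof of Corollary~\ref{coro:meet}, also shows that in the situation of (iii) every common lower bound of $S$ and $T$ in $\Pe_f$ already lies in $\Pe_f^0$. So a posteriori your two readings agree here, but establishing this requires exactly the step you left open.
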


\begin{proof} The statement (i) was shown in the proof of Theorem \ref{thm:signalling_pf}. 
If there are some comparable $S\le \TT_i^f$ and $T\le \TT_j^f$, then
$r_f(i,j)=\min\{r_f(i),r_f(j)\}=\min\{\rho_f(S),\rho_f(T)\}$. Since $i$ is an input and
$j$ an output, the last value is even if and only if
it is equal to $\rho_f(S)$, that is, $S\le T$. 
Under the assumptions in (iii), both $\TT^f_i$ and $\TT^f_j$ are nonempty and any pair 
$(S,T)$ with 
$S\in \TT^f_i$ and $T\in \TT^f_j$ is incomparable. By combination of  Lemma \ref{lemma:ypsilon} and Corollary 
\ref{coro:xfree}, we see
that if $S\wedge T$ exists, then the downset
$(S\wedge T)^{\downarrow_f}=S^{\downarrow_f}\cap T^{\downarrow_f}$ is a chain and $\rho_f(S\wedge T)$ is its length.
 
\end{proof}

\begin{exm}\label{exm:signalling_nspm}  We first check the signalling conditions in the
diagrams in Examples \ref{exm:combs} and \ref{exm:nspm}. For the chain types this is easy:
signalling is only allowed down the chain (Corollary \ref{coro:signalling}(ii)).
Similarly, for the nonsignalling channels type
$f_{ns}$ in
Example \ref{exm:nspm}, we have signalling down the chains $2\rightsquigarrow_{f_{ns}} 1$
and $4\rightsquigarrow_{f_{ns}} 3$, but
$2\not\rightsquigarrow_{f_{ns}} 3$ and $4\not \rightsquigarrow_{f_{ns}} 1$, since the only common lower
bound for 2,3 resp 1,4 is the least element $\emptyset$, so that
$r_{f_{ns}}(2,3)=r_{f_{ns}}(1,4)=0$. 

In the case of the process matrices, we see that $2\not\rightsquigarrow_{f_{pm}} 1$ and $4\not
\rightsquigarrow_{f_{pm}} 3$ since signalling is not allowed up the chains, but since there is no
common lower bound for $2,3$, resp. $4,1$, we have $r_{f_{pm}}(2,3)=r_{f_{pm}}(4,1)=-1$, so that
signalling is possible. In fact, note that the process matrix type has the same signalling
relations as the permuted nonsignalling channel type $\tilde f_{ns}$ depicted below. 
\begin{center}
\begin{minipage}{0.3\textwidth}
\centering
\includegraphics[scale=0.9]{pm_hasse.pdf}

$f_{pm}$
\end{minipage}%
\begin{minipage}{0.3\textwidth}
\centering
\includegraphics[scale=0.9]{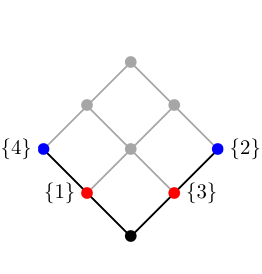}

$\tilde f_{ns}$
\end{minipage}%
\begin{minipage}{0.3\textwidth}
\centering
\includegraphics[scale=0.9]{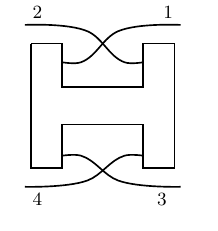}

process matrix as a ns channel
\end{minipage}%
\end{center}
One can check that $f_{pm}\le \tilde f_{ns}$ and the values of the functions differ only at
a single string $s=1111$, this also confirms that the signalling relations must be the same. 
This inclusion is obtained by presenting the process matrix as a
nonsignalling channel by 'extending the input/output wires'.
However, this changes the ordering 'down the chain' of the indices  1,2, resp. 3,4
in the Hasse diagram of $\Pe^0_{f_{pm}}$.

\end{exm}

\begin{exm}[Adapters]\label{exm:adapters}  Adapters transform process matrices to process matrices. If
$g=f_{pm}$ is the type function for process matrices as in Example \ref{exm:nspm}, then the corresponding
adapter has type function $a_1:=(g\otimes g^*)^*\in \Te_8$. If $k$ is the type function for
process matrices  with a global past and future as in Example \ref{exm:pm_pf}, the type function
for the adapter is $a_2=(k\otimes
k^*)^*\in \Te_{12}$. The structure posets $\Pe_{a_1}$ and $\Pe_{a_2}$ are too large, we show only
the reduced posets, which nevertheless contain all the information about the types. 
The first diagram shows the adapter $a_1$, with the input process matrix related to
indices $1,\dots,4$  and the output process matrix with indices
$5,\dots,8$. The second diagram depicts the adapter $a_2$, with indices $1,\dots,6$
corresponding to  the input process matrix  and indices $7,\dots,12$ relating to the
output process matrix. 
Note
also that while the two adapters have a similar definition, the corresponding posets have
quite  different structure:

\begin{center}
\vcbox{\includegraphics[scale=0.9]{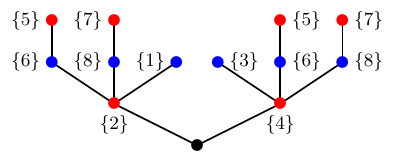}}
\vcbox{$\qquad$ $\qquad$}
\vcbox{\includegraphics[scale=0.9]{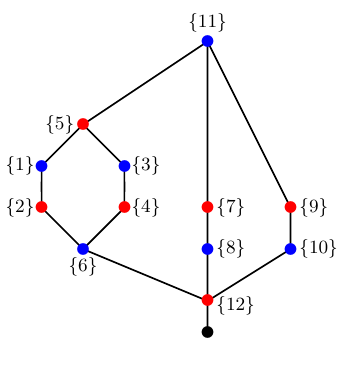}}

\end{center}
\medskip
Let us check the signalling relations in $a_1$ from  the first  diagram. Here the input
indices are 1, 3, 6 and 8, labelling blue vertices. Notice that the (output) index 2 labels the smallest element in the left
branch of the diagram that has a nonempty label, so all input indices in this branch (1,
6, 8) can signal to 2. This
element also provides the largest lower bound for the elements labelled by 1 and 5, resp.
1 and 7, so that 1 can signal to both 5 and 7.
Similarly, the input indices in the right branch (3, 6, 8) can signal to 4 and we have
$3\rightsquigarrow_{a_1} 5$ and $3\rightsquigarrow_{a_1} 7$. But we have
$1\not\rightsquigarrow_{a_1} 4$ and $3\not\rightsquigarrow_{a_1} 2$, because in both these cases
the only lower bound is the least element, so that $r_{a_1}(1,4)=r_{a_1}(3,2)=0$. 
 We also infer that $6\rightsquigarrow_{a_1} 7$ but $6\not\rightsquigarrow_{a_1} 5$, and
 similarly   $8\rightsquigarrow_{a_1} 5$ but $8\not\rightsquigarrow_{a_1} 7$.  

In the second diagram, corresponding to $\Pe_{a_2}^0$, the input indices are 1, 3, 6, 8,
10 and 11. Note that 11 is the label of a  largest element and can signal to any output. 
On the other hand, 12 labels the smallest element $S$ with a nonempty label set, so any
input can signal to 12.
In fact, for the three main branches in the diagram that depart at $S$, all inputs on a
branch can signal to any output on the other branches, but cannot signal to any of
the outputs above it.  On the other hand, the input 1 can signal to 2, since
$r_{a_2}(1,2)=r_{a_2}(2)=3$, but cannot signal to 4, since
$r_{a_2}(1,4)=r_{a_2}(6)=2$, similarly, $3\rightsquigarrow_{a_2} 4$ but  $3\not\rightsquigarrow_{a_2} 2$.  

\end{exm}

\subsection{The normal form of higher order types}

It was proved in \cite[Thm.~2]{hoffreumon2026projective} that any projection superoperator
generated from basic superoperator projections  using tensor products and duals 
has  a \emph{normal form}, which means that it can be expressed using only the one-way
signalling
compositions and lattice operations. In the special case of  higher order types, this
means that the corresponding projection for any type  can be expressed in terms of joins
and meets of projections related to causally
ordered types.  

In the language of type functions, this can be formulated as the fact that  $f\in \Te_{n,O}$
can be obtained from chain types in $\Te_{n,O}$ and lattice operations. It might be
tempting to interpret such a decomposition as a means to see whether the type is causally
ordered, or to detect possibility of indefinite causal order.
In general, however, such an expression is highly nonunique and it may be difficult to see
which expressions actually give the same type function. We have seen a decomposition of a type
function into joins and meets of chain types of a special form in
Corollary \ref{coro:normal}.  Note that  Example \ref{exm:reg_3} shows a
chain type that may be itself decomposed as the join of two basic chain types, so that a
normal form with a join does not exclude causally ordered types.

A version of the normal form  was also proved in the structural theorem \cite[Thm.~4.16]{jencova2024onthestructure}. It was argued there that
the basic orderings in  a given type can be obtained from the subchains of its  reduced structure
poset, from which a normal form can be constructed,  but the procedure is not straightforward. In this paragraph, we show that the
number of chain types in a normal form may be restricted by the reduced structure poset.

\begin{prop}\label{prop:normalform} Let $f$ be a type function and let $\Pe_f^0$ have $M$
maximal chains. Then there is a normal form of $f$ such that the number of different involved chain types 
is less or equal to $M$. 

\end{prop}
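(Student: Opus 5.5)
We proceed by structural induction on the term defining the type, using that every type function is, up to a permutation, either a tensor product $f=f_1\otimes f_2$ of type functions with fewer indices or a complement of such a product. The statement we carry through the induction is: $f$ can be written as a lattice polynomial (joins and meets) in at most $M(f)$ distinct chain types from $\Te_{n,O_f}$, where $M(f)$ is the number of maximal chains of $\Pe_f^0$. The base case $n=1$ (and more generally any chain type) is trivial, since then $\Pe^0_f$ is itself a chain, $M=1$, and $f$ itself serves as its normal form. Permutations preserve chain types and act on $\Pe^0_f$ only by relabelling (\cite[Coro.~4.7]{jencova2024onthestructure}), so they may be ignored.

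\emph{Complement step.} Lemma \ref{lemma:causalp} (i) shows that the complement of a chain type is a chain type: for $g=\sum_i(-1)^ip_{S_i}$ we have $g^*=1-g+p_n$, which is again of the form \eqref{eq:chain} after cancelling a possible $\emptyset$ or $[n]$ term. Moreover $*$ is a lattice anti-automorphism of the interval, so by De Morgan a normal form of $f$ with chain types $g_1,\dots,g_k$ yields one for $f^*$ with chain types $g_1^*,\dots,g_k^*$. On the poset side, $\Pe_{f^*}=\Pe_f\triangle\{\emptyset,[n]\}$. Using the description of reduced posets in Appendix \ref{app:structure}, adding or removing a bottom element $\emptyset$ or a top element $[n]$ does not change the number of maximal chains, so $M(f^*)=M(f)$.

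\emph{Tensor step.} Take normal forms $f_1=\Phi_1(g_1,\dots,g_a)$ and $f_2=\Phi_2(h_1,\dots,h_b)$ with $a\le M(f_1)$ and $b\le M(f_2)$. By Lemma \ref{lemma:causalp} (iii), $f_1\otimes f_2=(f_1\vtl f_2)\wedge(f_2\vtl f_1)$. By Lemma \ref{lemma:causalp} (iv), $\vtl$ distributes over $\vee$ and $\wedge$ in each argument, so $f_1\vtl f_2$ expands into a lattice polynomial in the products $g_k\vtl h_l$, and these are chain types by Proposition \ref{prop:causal_product_type}. Done naively, this produces $ab$ chain types for each order, which is too many, since the product poset $\Pe_{f_1}\times\Pe_{f_2}$ has more maximal chains than $M(f_1)+M(f_2)$ in general. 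What we want is a bound of the order $a+b$, and here the freedom of Lemma \ref{lemma:causalp} (iv) to pair arguments arbitrarily matters. Write $f_1=\bigvee_k u_k$ or $\bigwedge_k u_k$ at the top level. Then $f_1\vtl f_2=\bigvee_k(u_k\vtl f_2)$, and inside we may pair each $u_k$ with any term of the decomposition of $f_2$. Alternatively, Lemma \ref{lemma:causalp_ext} lets us replace $f_2$ in $f_1\vtl f_2$ by any $g_2\ge f_2$, for instance a single chain type. The aim is to choose the pairings so that each factor $g_k$ and each $h_l$ occurs in only one product, giving at most $a+b\le M(f_1)+M(f_2)$ chain types.

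It then remains to compare with $M(f_1\otimes f_2)$, computed from the description of $\Pe^0_{f\otimes g}$ in Appendix \ref{app:structure}. There the reduced poset of the product should be built from elements $(S_1,U_2)$ and $(U_1,T_2)$ with minimal $U_i$, as in the proof of Theorem \ref{thm:signalling_pf}. This means the labelled elements of the product form copies of $\Pe^0_{f_1}$ and $\Pe^0_{f_2}$ glued along minimal elements, so its maximal chains number at least $M(f_1)+M(f_2)$. Verifying this count carefully, including the cases where $\emptyset$ lies in one of the factors, is the second technical point.

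\emph{Main obstacle.} The hard part is the tensor step: turning the distributive expansion into one whose number of distinct chain types is additive rather than multiplicative. If the pairing trick does not give this directly, the fallback is to work with maximal chains of $\Pe^0_f$ themselves. Assign to each maximal chain $C$ a chain type $g_C$ obtained by ordering the labels along $C$, with the remaining indices placed as free inputs or outputs according to their rank parity. Then check that $f$ is recovered as a suitable join/meet combination of the $g_C$, via the structural theorem \cite[Thm.~4.16]{jencova2024onthestructure} together with Theorem \ref{thm:regular}, comparing values on basic strings.
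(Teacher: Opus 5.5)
\paragraph{The tensor step is missing.} Your permutation and complement steps match the paper's. The tensor step, however, is the whole substance of the proof, and you leave it open. The route you actually pursue does not close it. Lemma \ref{lemma:causalp} (iv) on its own pairs joins only with joins and meets only with meets. For mixed polynomials its expansion, e.g.\ $(u_1\vee u_2)\vtl(v_1\wedge v_2)=\bigvee_k\bigwedge_l(u_k\vtl v_l)$, contains all $ab$ products. Even in the unmixed case it gives $\max(a,b)$ products for each causal order, up to $2\max(a,b)$ in total. You do not show how to choose pairings that achieve $a+b$, and the structural-theorem fallback is only a plan; the paper carries it out only on examples in Appendix \ref{app:normal}.

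The idea that closes the step is the one you mention in passing and then drop: decouple the two factors completely.
\begin{itemize}
\item Pick chain types $\alpha_k\in\Te_{n_k,O_k}$ with $f_k\le\alpha_k$, e.g.\ $\alpha_k=p_{O_k}^*$.
\item Lemma \ref{lemma:causalp_ext} gives $f_1\otimes f_2=(f_1\vtl\alpha_2)\wedge(f_2\vtl\alpha_1)$. This is a statement about the meet. It does not say $f_1\vtl f_2=f_1\vtl g_2$; that identity fails on strings with $s^1=\theta$ whenever $g_2\neq f_2$.
\item With the other argument now a single fixed function, $(u\vee u')\vtl\alpha_2=(u\vtl\alpha_2)\vee(u'\vtl\alpha_2)$, and likewise for $\wedge$.
\item Hence $f_1\vtl\alpha_2=\Phi_1(g_1\vtl\alpha_2,\dots,g_a\vtl\alpha_2)$ and $f_2\vtl\alpha_1=\Phi_2(h_1\vtl\alpha_1,\dots,h_b\vtl\alpha_1)$.
\item All of these are chain types in $\Te_{n,O}$ by Proposition \ref{prop:causal_product_type}, so at most $a+b$ of them occur, with no pairing needed.
\end{itemize}
This is exactly the paper's argument. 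The paper then bounds $a+b\le M_1+M_2\le M$, since maximal chains of $\Pe^0_{f_1\otimes f_2}$ have the form $(\Ce_1,U_2)$ or $(U_1,\Ce_2)$.

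\paragraph{The maximal-chain count fails in degenerate cases.} You claim $M(f_1\otimes f_2)\ge M(f_1)+M(f_2)$. This is false when a factor is trivial, a case that arises naturally in the induction (e.g.\ $\bar A\otimes x$); the paper's final line also passes over it. If $f_1=p_{I_1}$, so that $\Pe^0_{f_1}$ is a single point, Lemma \ref{lemma:pf0_constr} (iii) gives $\Pe^0_{f_1\otimes f_2}\cong\Pe^0_{f_2}$. For instance, $p_1\otimes f_{ns}$ has $M=2<1+2$. Starting from the obvious normal forms of the factors, the decoupling construction produces three distinct chain types there.

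This case is easy to handle directly. The map $p_{I_1}\otimes(\cdot)$ commutes with pointwise $\vee$ and $\wedge$. It sends a chain type $\sum_i(-1)^ip_{S_i}$ to the chain type $\sum_i(-1)^ip_{I_1\cup(S_i+n_1)}$. So $f_1\otimes f_2=\Phi_2(p_{I_1}\otimes h_1,\dots,p_{I_1}\otimes h_b)$, which uses at most $M_2=M$ chain types. When no maximal chain of $\Pe^0_{f_1}$ or $\Pe^0_{f_2}$ is a single element, the chains $(\Ce_1,U_2)$ and $(U_1,\Ce_2)$ (for fixed minimal $U_1,U_2$) are pairwise distinct maximal chains, and your count is correct.
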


\begin{proof} We will apply the usual induction argument. The statement is trivial for
chains, and the property is obviously preserved by any permutations. It can be also seen
that $\Pe_{f^*}^0$ has $M$ maximal chains as well and since the complement of a chain type is
again a chain type, we conclude that if the statement is true for $f$, then it holds also
for $f^*$. It is therefore enough to look at the case when $f=f_1\otimes f_2$ for some
type functions $f_1\in \Te_{n_1,O_1}$ and $f_2\in \Te_{n_2,O_2}$, $O=O_1\oplus O_2$,  which satisfy the property. 
More precisely, we assume that for  $k=1,2$,
$\Pe_{f_k}^0$ has $M_k$ maximal chains and we may write 
\[
f_k=\bigvee_{a} \bigwedge_{b} f^k_{a,b}
\]
where $f^k_{a,b}\in \{\beta^k_1,\dots,\beta^k_{n_k}\}$,  $\beta^k_l\in \Te_{n_k,O_k}$  are chain types and $n_k\le M_k$. Let $\alpha_k\in
\Te_{n_k,O_k}$ be any chain type such that $f_k\le \alpha_k$ (such a chain type always
exists, we may for example take $\alpha_k=p_{O_k}^*$). Using Lemmas \ref{lemma:causalp}
and 
\ref{lemma:causalp_ext}, we see that
\[
f=f_1\otimes f_2= (f_1\vtl \alpha_2)\wedge
(f_2\vtl\alpha_1)=(\bigvee_{a}\bigwedge_{b}f^1_{a,b}\vtl \alpha_2)\wedge (\bigvee_{a'}\bigwedge_{b'}f^2_{a',b'}\vtl \alpha_1)
\]
Using distributivity of $\Fe_n$, we obtain an expression of the form
\[
f=\bigvee_{c}\bigwedge_{d} g_{c,d},
\]
where  $g_{c,d}\in \{\beta^1_1\vtl \alpha_2,\dots,\beta^1_{n_1}\vtl\alpha_2,
\beta^2_1\vtl\alpha_1,\dots,\beta^2_{n_2}\vtl\alpha_1\}$. Since $\beta^1_l\vtl \alpha_2$
and $\beta_l^2\vtl\alpha_1$ are chain types in $\Te_{n,O}$, we obtained an expression of $f$ in a normal form using 
$n_1+n_2$ chain types. Now note that any maximal chain in $\Pe_f^0$ is of the
form $\mathcal C =(\mathcal C_1,U_2)$ with $\mathcal C_1$ a maximal chain in $\Pe_{f_1}^0$
and $U_2\in \Min(\Pe_{f_2}^0)$, or $\Ce=(U_1,\Ce_2)$ for a maximal chain $\Ce_2$ in
$\Pe_{f_2}^0$ and $U_1\in \Min(\Pe_{f_1}^0)$. This shows that $n_1+n_2\le M_1+M_2\le M$.

\end{proof}

The normal form obtained in \cite{jencova2024onthestructure} has a specific 'minimax'
property: namely there are chain types  $f_{a,b}\in \Te_{n,O}$ such that 
\begin{equation}\label{eq:minimax}
f=\bigvee_{a\in A}\bigwedge_{b\in B} f_{a,b}=\bigwedge_{b\in B}\bigvee_{a\in A}f_{a,b}
\end{equation}
for some finite index sets  $A$, $B$. Such an
expression can be inferred from the construction of the type functions using the tensor
product and complement, together with the properties of the causal product in Section
\ref{sec:causal}. We will show in Appendix \ref{app:normal} that for type functions in Examples \ref{exm:nspm},
\ref{exm:pm_pf} and \ref{exm:adapters}, a normal form with this property can be obtained
by considering the maximal chains in the reduced structure posets,  and by  signalling
relations between elements of the chains, easily determined by the rank of their branching points.

\section{Conclusions}

We have investigated the structure of higher order types from an algebraic and  order-theoretic
perspective, using the combinatorial characterization in terms of boolean functions. 
We studied the lattice of regular subtypes, generated by  higher order types with fixed
sets of 
input and output indices,  and found their characterization by a monotonicity condition. 
We also studied signalling conditions for the corresponding maps. For higher order types,
we have shown that the order in the structure posets reveals the signalling relations and
we demonstrated on examples that the posets may be used to construct normal forms.  
A more detailed investigation of the relation of normal forms and structure posets is left
for future work. 

There are other possible further research directions. As it was proved in
\cite{apadula2024nosignalling}, the no-signalling relations constraints the possibility of
composition of higher order maps. Such a composition may not belong to a higher order
type, but it should be of a regular subtype. An interesting problem is the
connection to some constructions on the structure posets. It is a natural question whether there are similar structural posets for any regular
subtype, constructed from the M\"obius transform. Indeed, the transform exists for any
function, but the fact that the posets uniquely represent the type functions is based on
the fact that the M\"obius transform only attains values in the set $\{-1,0,1\}$. While
this remains true for all regular subtypes with $n=2,3$, there are examples with $n=4$
attaining values $\pm2$ or 3.  Nevertheless, it may still hold for regular subtypes
obtained by type compositions. A description  of composition of types in terms of
the posets and their labeled diagrams would be useful for understanding the structure of
higher order maps. However, such a description in itself would not be enough to capture
 properties such as dynamical causal order \cite{wechs2021quantum}.

As pointed out already in \cite{bisio2019theoretical}, the theory of higher order types is
in principle applicable beyond the quantum theory. In \cite{jencova2024onthestructure},
it was shown that the type functions can be applied to construct sets of higher order maps
over vector spaces with fixed affine subspaces. Under suitable admissibility conditions,
and proper choices of positive cones, such a construction may be used to build a theory of
higher order maps in a general probabilistic theory \cite{plavala2023general}. See also
\cite{bavaresco2024indefinite,sengupta2024achieving} for some constructions.

\subsection*{Acknowledgements}

I am indebted to Gejza Jen\v{c}a and Timoth\'ee Hoffreumon for discussions and useful
comments, and to Adam Jen\v{c}a for technical support. The work was supported by 
the grants VEGA 2/0128/24 and by the Science and Technology Assistance Agency under the contract
No. APVV-22-0570.

\subsection*{Declaration of generative AI use}

During the preparation of this work the author used Claude Sonet 4.6 for finding and
checking examples and counterexamples of boolean functions (Example \ref{exm:reg_4},
Appendix \ref{app:normal}). The author  takes full responsibility for the
content of the published article.

\appendix
\numberwithin{equation}{section}

\section{Boolean functions}\label{app:fn}
We collect some basic notations, definitions and properties related to binary strings and
boolean functions. See \cite[Appendix A]{jencova2024onthestructure} for more
information. 

\subsection{Basic definitions}
For $m\le n\in \mathbb N$, we will denote the corresponding interval $\{m,m+1,\dots,n\}$ by
$[m,n]$. For $m=1$, we will simplify to  $[n]:=[1,n]$. 
For $n_1,n_2\in \mathbb N$, $n_1+n_2=n$,
we will denote by $[n]=[n_1]\oplus [n_2]$ the decomposition of $[n]$ as a concatenation of two
intervals $[n]=[n_1][n_1+1,n_1+n_2]$.
Similarly, for $n=\sum_{j=1}^kn_j$, we have the decomposition
$[n]=\oplus_{j=1}^k[n_j]=[n_1][m_1+1,m_1+n_2]\dots[m_{k-1}+1,m_{k-1}+n_k]$,
where $m_j:=\sum_{l=1}^{j} n_j$. Note that the order of $n_1,\dots, n_k$ in
this decomposition is fixed.

The set of all subsets of $[n]$ will be denoted by $2^n$. 
With the inclusion ordering and complementation $S^C:=[n]\setminus S$,
$2^n$ is a boolean algebra, with smallest element $\emptyset$ and largest element
$[n]$.

A binary string of length $n$ is a sequence  $s=s_1\dots s_n$, where $s_i\in
\{0,1\}$. Such a string can be interpreted as an element in $\{0,1\}^n$, but also as a 
map $[n]\to \{0,1\}$, $i\mapsto s_i$. The string with all $s_i=0$ will be denoted by
$\theta_n$, the index $n$ will be dropped if it is clear.

The permutation group $\permut_n$ has an obvious action on $\{0,1\}^n$ defined  by
precomposition: for $\sigma\in \permut_n$ we define 
\[
\sigma(s):=s\circ\sigma^{-1}=s_{\sigma^{-1}(1)}\dots s_{\sigma^{-1}(n)},\qquad s\in
\{0,1\}^n.
\]
 For a decomposition $[n]=\oplus_{j=1}^k[n_j]$, we have a corresponding decomposition of
any string $s\in \{0,1\}^n$ as a concatenation of strings
\[
s=s^1\dots s^k,\qquad s^j\in \{0,1\}^{n_j},\qquad s^l_j:=s_{m_{l-1}+j},\ j\in[n_l],\ 
m_l=\sum_{i=1}^{l}n_i.
\]

\subsection{The boolean  algebra $\Fe_n$}
 We will deal with the subset of boolean functions defined as 
\[
\Fe_n:=\{f:\{0,1\}^n\to \{0,1\}\colon f(\theta)=1\}.
\]
We can see that  $\Fe_n$ is a boolean algebra, whose operations are the pointwise minimum $f\wedge g$ and maximum
$f\vee g$. The top element in $\Fe_n$ is the constant unit function $1_n(s)\equiv 1$,
 the  bottom element is 
\[
p_n(s):=\begin{dcases} 1 & \text{ if }s=\theta\\
0 &\text{ otherwise.}
\end{dcases}
\]
The complement operation is defined as $f^*=1-f+p_n$. In particular,  $\Fe_n$ is a finite distributive
lattice,  and the complementation satisfies $f^{**}=f$ and De Morgan laws: $(f\vee
g)^*=f^*\wedge g^*$, $(f\wedge g)^*=f^*\vee g^*$.

Given $m,n\in \mathbb N$ and a fixed decomposition $[m+n]=[m]\oplus [n]$, we will define
two product 
operations $\Fe_m\times \Fe_n \to \Fe_{m+n}$. For any string $s\in \{0,1\}^{m+n}$, let $s=s^1s^2$
be the decomposition as concatenation of strings $s^1\in \{0,1\}^m$, $s^2\in \{0,1\}^n$.
For $f\in \Fe_m$, $g\in \Fe_n$, we put:
\begin{align*}
(f\otimes g)(s)&=f(s^1)g(s^2),\quad s\in \{0,1\}^{m+n}\\
(f\parr g)&= (f^*\otimes g^*)^*.
\end{align*}

\begin{lemma}\label{lemma:products} Let $f\in \Fe_m$, $g\in \Fe_n$. Fix
the decomposition $[m+n]=[m]\oplus [n]$. Then 
\begin{enumerate}
\item[(i)] $f\otimes g\le f\parr g$.
\item[(ii)] With $\tilde f\in \Fe_m$, $f\le \tilde f$ and $\tilde g\in \Fe_n$, $g\le
\tilde g$, we have
\[
f\otimes g\le \tilde f\otimes \tilde g,\qquad f\parr g\le \tilde f\parr \tilde g.
\]

\end{enumerate}

\end{lemma}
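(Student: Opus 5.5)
The lemma follows directly from the definitions, so I will argue pointwise on strings $s=s^1s^2$ and use the complement formula $h^*=1-h+p$.

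For (ii), the $\otimes$ part is immediate. Since $0\le f\le\tilde f$ and $0\le g\le \tilde g$ pointwise with values in $\{0,1\}$, we get $f(s^1)g(s^2)\le \tilde f(s^1)\tilde g(s^2)$. For the $\parr$ part, I first note that complementation is order reversing on $\Fe_n$: if $h\le k$, then $h^*=1-h+p\ge 1-k+p=k^*$. Hence $\tilde f^*\le f^*$ and $\tilde g^*\le g^*$. The monotonicity of $\otimes$ just shown then gives $\tilde f^*\otimes\tilde g^*\le f^*\otimes g^*$. Applying the complement once more reverses the inequality, so $f\parr g\le \tilde f\parr\tilde g$.

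For (i), I will compute $f\parr g$ explicitly. Since $p_{m+n}=p_m\otimes p_n$, we have
\[
(f\parr g)(s)=1-(1-f(s^1)+p_m(s^1))(1-g(s^2)+p_n(s^2))+p_m(s^1)p_n(s^2).
\]
It then suffices to show $f\parr g(s)=1$ whenever $f(s^1)=g(s^2)=1$. I split into cases.

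If $s^1\ne\theta$ and $s^2\ne\theta$, the product term is $0\cdot 0$, so the value is $1$. If $s^1=\theta$, the first factor is $1-1+1=1$, and the second factor is $1-1+p_n(s^2)=p_n(s^2)$. The value is therefore $1-p_n(s^2)+p_n(s^2)=1$. The case $s^2=\theta$ is symmetric.

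A cleaner alternative avoids the explicit formula. One first shows $f^*\otimes g^*\le (f\otimes g)^*$, which is equivalent to (i) by order reversal and $h^{**}=h$. This amounts to checking that $f^*\otimes g^*$ and $f\otimes g$ both equal $1$ only at $s=\theta$. Indeed, if both are $1$ at $s$, then $f(s^1)=f^*(s^1)=1$, which forces $s^1=\theta$, and likewise $s^2=\theta$. So the two functions have disjoint supports apart from $\theta$, and $f^*\otimes g^*\le 1-(f\otimes g)+p=(f\otimes g)^*$.

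No step is a real obstacle; the only care needed is the bookkeeping of the $p$ terms at $\theta$, and I will present the disjoint-support argument.
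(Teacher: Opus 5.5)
Your proof is correct. Part (ii) is the same as the paper's argument: monotonicity of $\otimes$ is checked pointwise, and the $\parr$ inequality follows by duality, because $h\mapsto h^*$ reverses order.

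For (i) the paper gives no argument of its own and refers to \cite[Prop.~4.15]{jencova2024onthestructure}. Your disjoint-support check is therefore a self-contained replacement: $f\otimes g$ and $f^*\otimes g^*$ can both equal $1$ only at $\theta$, so $f^*\otimes g^*\le (f\otimes g)^*$. It uses nothing beyond the definitions of $\otimes$ and ${}^*$, and your explicit case computation works equally well.

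Within the present paper, (i) also follows in one line from Lemma \ref{lemma:causalp}(iii), which holds for arbitrary functions in $\Fe$:
\[
f\otimes g=(f\vtl g)\wedge(g\vtl f)\le (f\vtl g)\vee(g\vtl f)=f\parr g .
\]
That route is shorter and places the inequality structurally: $f\otimes g$ lies below both causal products and $f\parr g$ above them, which is how it is used in Proposition \ref{prop:subtype_dual_tensor}(iii). However, it depends on the causal-product decompositions imported from the earlier work, whereas your argument is elementary.
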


\begin{proof} The proof of  (i) can be found in \cite[Prop.~4.15]{jencova2024onthestructure}. The
first inequality in (ii) is immediate from the definition of $f\otimes g$, the second
inequality follows from the first by duality.

\end{proof}

\section{The structure  posets of a type function}\label{app:structure}

In this section we collect some properties of the reduced structure poset $\Pe_f^0$ of a
type function $f$. 
The following lemma shows the basic  basic operations on $\Pe_f^0$, that are needed in the
sequel and in the main text.
The proof can be found in \cite[Lemma 4.18]{jencova2024onthestructure}. See also the appendices in
\cite{jencova2024onthestructure} for the necessary definitions.

\begin{lemma}\label{lemma:pf0_constr} Let $f\in \Fe_m$, $g\in \Fe_n$. 
\begin{enumerate}
\item[(i)] For $\sigma\in 
\permut_m$, $\Pe_{f\circ\sigma}^0\simeq \Pe_f^0$, with labels
$L_T^{f\circ\sigma}=\sigma^{-1}(L_T^f)$.
\item[(ii)] If $f$ or $f^*$ have free outputs, then $\Pe^0_{f^*}=\Pe^0_{f}\triangle
\{\emptyset,[m]\}$. Otherwise $\Pe_{f^*}^0=\Pe_f^0\triangle \{\emptyset\}$. 
\item[(iii)] Let $h= f\otimes g$, with the fixed decomposition  $[m+n]=[m]\oplus [n]$. Then
\[
\Pe_h^0\simeq \{(S,T)\colon S\in \Pe_{f}^0,\ T\in \Pe_g^0,\ S\in \Min(\Pe_f^0)\text{ or }T\in
\Min(\Pe_g^0)\},
\]
with labels 
\[
L^h_{(S,T)}=\begin{dcases} L^f_S & \text{ if } S\notin \Min(\Pe_f^0)\\
L^g_T+m & \text{ if } T\notin \Min(\Pe_g^0)\\
L^f_S\cup (L^g_T+m) & \text{ otherwise. }
\end{dcases}
\]
\item[(iv)] Let $h=f\vtl g$, with the fixed decomposition $[m+n]=[m]\oplus [n]$.  To
ease the description of the labels, let us first transform  the label sets in $\Pe_g^0$ as
 $L^g_T\mapsto L_T^g+m$. Then
\begin{itemize}
\item If $[m]\in \Pe_f^0$, then $\Pe_h^0\simeq(\Pe_f^0\setminus \{[m]\})\star \Pe_g^0$,
with the labels of $[m]$ added to the label sets of all elements in $\Min(\Pe_g^0)$.
\item If $[m]\notin \Pe_f^0$ and $\emptyset\in \Pe_g^0$, then $\Pe_h^0\equiv \Pe_f^0\star
(\Pe_g^0\setminus\{\emptyset\})$, the free outputs of $f$ added to the label sets  of all
elements in $\Min(\Pe_g^0\setminus\{\emptyset\})$.

\item If $[m]\notin \Pe_f^0$, $\emptyset\notin \Pe_g^0$ and $O_f^F\ne \emptyset$, then $\Pe_h^0\simeq
\Pe^0_f\star \{[m]\}\star \Pe_g^0$, with $L_{[m]}^h=O_f^F$.
\item If $[m]\notin \Pe_f^0$, $\emptyset\notin \Pe_g^0$ and $O_f^F= \emptyset$, then 
 $\Pe^0_h=\Pe_f^0\star \Pe_g^0$.
 \end{itemize}

\end{enumerate}

\end{lemma}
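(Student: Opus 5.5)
The plan is to derive each item from the corresponding description of the full structure poset $\Pe_h$, and then compute the label sets $L^h_T$ directly from Definition \ref{defi:structure poset}. This works because $\Pe_h^0$ is just $\Pe_h$ with the elements whose label set is empty removed (except $\emptyset$). Two general facts will be used throughout.
\begin{enumerate}
\item[(a)] Adding or removing $\emptyset$ from a poset of subsets never changes any label set, since it contributes nothing to the unions $\cup\{S\colon S\subsetneq T\}$.
\item[(b)] If $[m]\in\Pe_f$, then $L_{[m]}$ is the set of indices lying in no other element of $\Pe_f$. These are exactly the indices that would be free outputs if $[m]$ were removed.
\end{enumerate}

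For (i), the map $T\mapsto \sigma^{-1}(T)$ is a poset isomorphism $\Pe_f\to\Pe_{f\circ\sigma}$, because $p_T\circ\sigma=p_{\sigma^{-1}(T)}$ and the M\"obius transform is unique. It commutes with unions and differences, so $L^{f\circ\sigma}_{\sigma^{-1}(T)}=\sigma^{-1}(L^f_T)$, and nonempty labels are preserved.

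For (ii), I use $\Pe_{f^*}=\Pe_f\triangle\{\emptyset,[m]\}$. Toggling $\emptyset$ is harmless by (a). Toggling $[m]$ changes no label of any other element, since $[m]$ is the top. By (b), $L_{[m]}$ in whichever of $\Pe_f$, $\Pe_{f^*}$ contains $[m]$ equals the set of free outputs of the other function. Hence $[m]$ belongs to the reduced poset of one of them exactly when $f$ or $f^*$ has free outputs. Since $\emptyset$ is always kept in the reduced poset when present, this gives both cases of (ii).

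For (iii), I identify $(S,T)\in\Pe_f\times\Pe_g$ with $S\sqcup(T+m)$, using $p_S\otimes p_T=p_{S\sqcup(T+m)}$. I then compute the labels.
\begin{enumerate}
\item[(a)] If neither $S$ nor $T$ is minimal, the elements $(S',T)$ with $S'<S$ and $(S,T')$ with $T'<T$ together cover $S\sqcup(T+m)$, so the label is empty.
\item[(b)] If $T$ is minimal, every element below $(S,T)$ has second coordinate $T$. The label is then $L^f_S$, together with $L^g_T+m$ when $S$ is also minimal. The case where $S$ is minimal is symmetric.
\end{enumerate}
I still need to check that $\Min(\Pe_f)=\Min(\Pe_f^0)$. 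This holds because a minimal element is its own label, and if $\emptyset$ is present it is the unique minimum. Restricting to pairs with a nonempty label, or with $(S,T)=(\emptyset,\emptyset)$, then yields exactly the stated set and labels.

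For (iv), I expand $f\vtl g=(f-p_m)\otimes 1_n+p_m\otimes g$. Since $p_S\otimes 1_n=p_S$ and $p_m\otimes p_T=p_{[m]\cup(T+m)}$, the M\"obius coefficients of $f\vtl g$ are $\hat f_S$ for $S\subsetneq [m]$, $\hat g_T$ at $[m]\cup(T+m)$ for $T\ne\emptyset$, and $\hat f_{[m]}-1+\hat g_\emptyset$ at $[m]$. The four cases come from the possible cancellations in this last coefficient, according to whether $[m]\in\Pe_f$ and whether $\emptyset\in\Pe_g$. In each case $\Pe_h$ is an ordinal sum of the lower part coming from $f$, possibly the element $[m]$, and the upper part coming from $g$; this is \cite[Prop.~4.14]{jencova2024onthestructure}. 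The labels are then read off as follows.
\begin{enumerate}
\item[(a)] Elements from $f$ keep their labels.
\item[(b)] An element $[m]\cup(T+m)$ with $T$ nonminimal has label $L^g_T+m$.
\item[(c)] A minimal element of the upper part inherits, in addition, whatever indices of $[m]$ are not covered below it. If $[m]\in\Pe_f^0$ and is removed, these are the labels of $[m]$. If $[m]\notin\Pe_f$ and $\emptyset\in\Pe_g$, these are the free outputs $O^F_f$.
\item[(d)] If $[m]$ survives as a separate element, its label is $O^F_f$, so it lies in $\Pe_h^0$ iff $O^F_f\neq\emptyset$.
\end{enumerate}
I expect the main obstacle to be this bookkeeping in (iv): verifying the cancellation pattern of the coefficient at $[m]$ in each case, and confirming that the resulting ranks are consistent with the ordinal sum.
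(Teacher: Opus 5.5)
Your route is the natural one: read off $\Pe_h$ from the M\"obius coefficients, then compute the label sets straight from the definition. Parts (i)--(iii) are correct as you present them; the paper itself only cites \cite[Lemma 4.18]{jencova2024onthestructure} here.

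The gap is in the bookkeeping of (iv), and it has two parts.
\begin{itemize}
\item The four bullets of the lemma are \emph{not} the four patterns ``$[m]\in\Pe_f$ or not'' $\times$ ``$\emptyset\in\Pe_g$ or not''. They are indexed by $[m]\in\Pe_f^0$, i.e.\ by $L^f_{[m]}\neq\emptyset$. So the two indexings have to be matched, and your sketch never does this.
\item Your item (d), as stated, is wrong in the pattern $\hat f_{[m]}=\hat g_\emptyset=1$. The coefficient $\hat f_{[m]}-1+\hat g_\emptyset$ is then $1$, so $[m]$ survives in $\Pe_h$. But its label is $[m]\setminus\bigcup(\Pe_f\setminus\{[m]\})=L^f_{[m]}$, not $O^F_f$, which is empty because $[m]\in\Pe_f$.
\end{itemize}
A concrete counterexample to (d): $f=p_1$, $g=1_1$ gives $h=p_{\{1\}}\in\Fe_2$ with $\Pe_h^0=\{\{1\}\}$ labelled $\{1\}$. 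Item (d) would discard $[m]=\{1\}$ and leave an empty reduced poset. In this pattern the surviving $[m]$ has to be read as the image of $\emptyset\in\Pe_g$. That is exactly how the first bullet accounts for it, with $\emptyset\in\Min(\Pe_g^0)$ receiving the labels of $[m]$.

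The repair applies your own rule (c) uniformly.
\begin{itemize}
\item First note $\hat f_{[m]},\hat g_\emptyset\in\{0,1\}$. If $[m]\in\Pe_f$, it is the top, hence of even rank $r(f)$, so $\hat f_{[m]}=1$ by \eqref{eq:structure_poset}. Also $\hat g_\emptyset=g(1\dots1)$. Hence $[m]\in\Pe_h$ iff both or neither of $[m]\in\Pe_f$, $\emptyset\in\Pe_g$ hold. This vanishing pattern is all you need; ranks play no role.
\item Let $U$ be the set of elements of $\Pe_h$ containing $[m]$. Every non-minimal $X=[m]\cup(T+m)\in U$ has label $L^g_T+m$.
\item Every minimal $X\in U$ has label $(X\setminus[m])\cup R$, where $R:=[m]\setminus\bigcup(\Pe_f\setminus\{[m]\})$. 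Here $R=L^f_{[m]}$ if $[m]\in\Pe_f$, and $R=O^F_f$ otherwise.
\end{itemize}
Now regroup by the lemma's conditions:
\begin{itemize}
\item If $[m]\in\Pe_f^0$, you get the first bullet, whether or not $\emptyset\in\Pe_g$.
\item If $[m]\in\Pe_f\setminus\Pe_f^0$, then $R=\emptyset=O^F_f$. You get the second or fourth bullet according as $\emptyset\in\Pe_g$ or not.
\item If $[m]\notin\Pe_f$, you get the second bullet when $\emptyset\in\Pe_g$. Otherwise you get the third or fourth bullet according as $O^F_f\neq\emptyset$ or not.
\end{itemize}
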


We will next prove some specific properties of the poset  $\Pe_f^0$.  These properties 
only pertain the order structure of the poset, and not the
labels. Since we have $\Pe^0_{f\circ\sigma}\simeq \Pe_f^0$ for any permutation $\sigma$, 
we may ignore possible permutations in the inductive construction of type functions. 
Consequently, in the proofs by induction below, it will be enough to show that the properties are true
for chains and preserved by complements and tensor products.

 For any element $S\in \Pe_f^0$, we will use the notations 
\[
 S^{\downarrow_f}:=\{X\in \Pe_f^0\colon X\le S\},\qquad
 S^{\uparrow_f}:=\{X\in \Pe_f^0\colon X\ge S\}.
 \]
  It is clear that $S^{\downarrow_f}$ is the downset generated by $S$,
 in $\Pe_f^0$, similarly, $S^{\uparrow_f}$ is the upset generated by $S$.

 \begin{lemma}\label{lemma:ypsilon} 
 Let $S,T\in \Pe_f^0$ be incomparable. Then   $S^{\downarrow_f}\cap T^{\downarrow_f} $ is either empty or
 a chain, similarly for  $S^{\uparrow_f} \cap T^{\uparrow_f}$.  
 \end{lemma}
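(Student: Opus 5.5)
We argue by induction over the construction of type functions, as announced just before the statement. By Lemma \ref{lemma:pf0_constr}(i), permutations leave the order structure of $\Pe_f^0$ unchanged, so it suffices to check three things. The property holds for chain types. It is preserved by complements. It is preserved by tensor products.

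By the order-reversing symmetry it is enough to treat downsets; the upset statement follows by the same argument applied to $\Pe_f^0$ with reversed order, or by running the same induction for upsets in parallel. For a chain type $\Pe_f^0$ is totally ordered, so it has no incomparable pairs and the statement is vacuous.

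\emph{Complement step.} By Lemma \ref{lemma:pf0_constr}(ii), $\Pe_{f^*}^0$ arises from $\Pe_f^0$ by adding or removing $\emptyset$, which is the bottom element whenever present, and possibly $[m]$, which is the top element whenever present. Adding or removing a global bottom or top element changes neither incomparability among the remaining elements nor the chain property of intersections. Concretely, for incomparable $S,T$ (neither can be $\emptyset$ or $[m]$), the set $S^{\downarrow}\cap T^{\downarrow}$ gains or loses only the global minimum $\emptyset$. Adding a least element to a chain keeps it a chain. Removing it from a chain leaves a chain or the empty set. The upset case works the same way with $[m]$.

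\emph{Tensor step.} Use the description in Lemma \ref{lemma:pf0_constr}(iii), where elements are pairs $(S,T)$ with $S\in\Min(\Pe_{f}^0)$ or $T\in\Min(\Pe_g^0)$, ordered componentwise. Take incomparable $X=(S_1,T_1)$ and $Y=(S_2,T_2)$.

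First suppose both have the form $(S_i,U_i)$ with $U_i$ minimal. A lower bound $(A,B)$ then needs $B\le U_1$ and $B\le U_2$, which forces $B=U_1=U_2$. So the intersection is empty unless $U_1=U_2=:U$. In that case the intersection is $\{(A,U)\colon A\le S_1,S_2\}$, and $S_1,S_2$ are incomparable in $\Pe_f^0$. We would also have to rule out lower bounds $(A,B)$ with $A$ minimal and $B<U$, but $U$ is minimal so none exist. The induction hypothesis for $f$ now gives a chain or the empty set.

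The same reasoning covers the case where both have the form $(U_i,T_i)$.

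The mixed case is $X=(S,U)$ with $S$ non-minimal and $Y=(V,T)$ with $T$ non-minimal, $U,V$ minimal. A lower bound $(A,B)$ satisfies $B\le U$, hence $B=U$, and $A\le V$, hence $A=V$. So the intersection is $\{(V,U)\}$ if $V\le S$ and $U\le T$, and empty otherwise. Either way it is a chain.

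Pairs whose coordinates are both minimal fall under both descriptions and are consistent with them.

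The main obstacle is the bookkeeping in the tensor step. One must confirm that the componentwise order on the pair set really is the order of $\Pe_h^0$, which holds because $\Pe_{f\otimes g}=\Pe_f\times\Pe_g$ and the reduced poset is a subposet. One must also check that lower bounds cannot leave the allowed pair set in a way that creates two incomparable common lower bounds. The minimality argument above handles this: once one coordinate is forced to be minimal, the pair is automatically admissible.
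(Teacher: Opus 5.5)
Your downset argument is correct and follows the paper's proof. It is an induction over chains, complements and tensor products, and the tensor step uses the same case split according to which coordinate of $X$ and $Y$ is minimal. Your complement step also handles upsets correctly.

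The gap is the upset half of the tensor step, which you dismiss via an ``order-reversing symmetry''. No such symmetry is available. The description of $\Pe^0_{f\otimes g}$ in Lemma \ref{lemma:pf0_constr}(iii) is cut out by a \emph{minimality} condition, so it is not self-dual. Nothing shows that $(\Pe_f^0)^{op}$ is again the reduced poset of a type function. Mirroring your key step literally also fails. For $X=(S_1,U_1)$ and $Y=(S_2,U_2)$ with $U_1,U_2$ minimal, a common upper bound $(A,B)$ only needs $B\ge U_1,U_2$, and this does not force $B=U_1=U_2$.

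The missing step is short. Put $h:=f\otimes g$ and let $X=(S_1,T_1)$, $Y=(S_2,T_2)$ be incomparable. Every common upper bound $(A,B)$ itself lies in $\Pe^0_h$, so $A$ or $B$ is minimal.
\begin{itemize}
\item If $A$ is minimal, then $S_1,S_2\le A$ forces $S_1=S_2=A$.
\item If $B$ is minimal, then likewise $T_1=T_2=B$.
\end{itemize}
These two alternatives cannot both occur, even for different upper bounds, since together they would give $X=Y$.

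So if $X^{\uparrow_h}\cap Y^{\uparrow_h}$ is nonempty, say $S_1=S_2=:U$ with $U$ minimal. Then $T_1,T_2$ are incomparable and every common upper bound has the form $(U,B)$. Conversely, every pair $(U,B)$ with $B\in T_1^{\uparrow_g}\cap T_2^{\uparrow_g}$ lies in $\Pe^0_h$, because $U$ is minimal. Hence $X^{\uparrow_h}\cap Y^{\uparrow_h}\cong T_1^{\uparrow_g}\cap T_2^{\uparrow_g}$, which is empty or a chain by the induction hypothesis for upsets.

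With this step added, your second option goes through: prove the downset and upset statements by a simultaneous induction.
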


\begin{proof} 
We will proceed by induction. The assertion is trivially true for any chain.
Assume that $f=f_1\otimes f_2$, then $S=(S_1,S_2)$ with  $S_i\in
\Pe_{f_i}^0$ and at least one of them is minimal, similarly for $T$. Assume that, say,
both $S_2$ and $T_2$ are minimal, then if $S^{\downarrow_f}\cap T^{\downarrow_f}$ is not empty, we must
have $S_2=T_2$ and $S^{\downarrow_f}\cap T^{\downarrow_f} =\{(V_1,S_2)\colon V_1\in
S_1^{\downarrow_{f_1}}\cap T_1^{\downarrow_{f_1}}\}$, which is a chain by the induction assumption. 

If $S_2$
and $T_1$ are minimal, then $V\le S$ implies  $V=(V_1,V_2)$
with $V_1\le S_1$ and $V_2=S_2$ and if $V\le T$, then $V_1=T_1$ and $V_2\le
T_2$. This implies that either $T_1\le S_1$, $S_2\le T_2$ and $V=(T_1,S_2)$ is the only element
in $S^{\downarrow_f}\cap T^{\downarrow_f}$, or the downset is empty. Similarly, $V\ge S$ implies that
$V_1\ge S_1$ and $V_2\ge S_2$, whereas $V\ge T$ implies $V_1\ge T_1$ and $V_2\ge T_2$.
Since at least one of $V_1, V_2$ must be minimal, we either have $V_2=S_2=T_2$ or
$V_1=S_1=T_1$, which leads
us to the first case. 
The proof
for $S^{\uparrow_f}\cap T^{\uparrow_f}$ is similar.

Assume now that the statement is
true for $f$ and let $S,T$ be incomparable elements in $\Pe_{f^*}^0$. Clearly, none  of $S$ and
$T$ can be the least or the largest element, so that $S$ and $T$ are also
incomparable elements of $\Pe_f^0$. Since the downsets/upsets in $\Pe_f^0$ differ from
those in  $\Pe_{f^*}^0$ only
by adding or removing  the least/largest  element, the statement follows by the assumption on
$f$.

\end{proof}
\begin{coro}\label{coro:xfree} 
 Let $S\in \Pe_f^0$. Then $S^{\downarrow_f}$ or $S^{\uparrow_f}$
is a chain. In the first case, the rank $\rho_f(S)$ of $S$ in $\Pe_f$ is the length of the
chain $S^{\downarrow_f}$.

\end{coro}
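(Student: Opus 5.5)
The plan is to derive the first assertion from Lemma \ref{lemma:ypsilon}, and then handle the rank statement by induction over the construction of type functions, in the same way as in that lemma.

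\emph{First assertion.} Suppose $S^{\downarrow_f}$ is not a chain. Then it contains two incomparable elements $X,Y\le S$. If $S^{\uparrow_f}$ were not a chain either, it would contain two incomparable elements $Z,W\ge S$. Then $S$ lies in $Z^{\downarrow_f}\cap W^{\downarrow_f}$, so this intersection is nonempty, and by Lemma \ref{lemma:ypsilon} it is a chain. Since $X,Y\le S\le Z,W$, both $X$ and $Y$ also lie in $Z^{\downarrow_f}\cap W^{\downarrow_f}$. This contradicts their incomparability. Hence at least one of $S^{\downarrow_f}$, $S^{\uparrow_f}$ is a chain.

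\emph{Rank statement.} Assume $S^{\downarrow_f}$ is a chain. The claim is that $\rho_f(S)$, the rank of $S$ in the full poset $\Pe_f$, equals the length of this chain in $\Pe_f^0$. The difficulty is that $\Pe_f$ may contain unlabelled elements below $S$ that are missing from $\Pe_f^0$, and these could add to the rank. I will argue by induction via Lemma \ref{lemma:pf0_constr}, together with the corresponding full-poset rules $\Pe_{f^*}=\Pe_f\triangle\{\emptyset,[n]\}$ and $\Pe_{f\otimes g}=\Pe_f\times\Pe_g$.

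\begin{itemize}
\item For chain types $\Pe_f=\Pe_f^0$, so the claim is immediate.
\item Permutations change neither poset.
\item Complement: passing from $f$ to $f^*$ adds or removes $\emptyset$ in both $\Pe_f$ and $\Pe_f^0$ in the same way. (The element $[n]$ is the top, so it only matters when $S=[n]$; then its downset is the whole poset, and the case reduces in the same way.) The rank below $S$ and the chain length below $S$ therefore both shift by the same $\pm1$.
\item Tensor product: $S=(S_1,S_2)$ with, say, $S_2\in\Min(\Pe_{f_2}^0)$. Then $S^{\downarrow_f}\simeq S_1^{\downarrow_{f_1}}$, so $S_1^{\downarrow_{f_1}}$ is a chain. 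In the product $\Pe_{f_1}\times\Pe_{f_2}$ we have $\rho_f(S)=\rho_{f_1}(S_1)+\rho_{f_2}(S_2)$. Here I need that minimal elements of $\Pe_{f_2}^0$ are minimal in $\Pe_{f_2}$, i.e. have rank $0$. This holds because every minimal element of $\Pe_{f_2}$ below a labelled element carries labels (its full set is its label set), so it lies in $\Pe^0_{f_2}$. The induction hypothesis for $S_1$ then finishes this case.
\end{itemize}

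The main obstacle is the tensor step: one has to check that no element of $\Pe_f\setminus\Pe_f^0$ lies strictly below $S$ off the chain and lengthens a maximal chain under $S$. I would rule this out using the product description. Any $V\le S$ in $\Pe_f$ has $V_2\le S_2$, so $V_2=S_2$, and hence $V$ sits in the copy of $\Pe_{f_1}$; the full downset of $S$ is therefore exactly the downset of $S_1$ in $\Pe_{f_1}$, which brings us back to the induction hypothesis.
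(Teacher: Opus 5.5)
Your proposal is correct and follows the paper's proof essentially step for step. The first assertion comes from Lemma \ref{lemma:ypsilon} applied to two incomparable elements above $S$. The rank claim is proved by induction over chains, complements (only $\emptyset$ is toggled below $S$, which shifts both rank and chain length by $\pm1$) and tensor products (minimality of $S_2$ reduces everything to $S_1$). Your check that minimal elements of $\Pe^0_{f_2}$ are minimal in $\Pe_{f_2}$ makes explicit what the paper uses tacitly.

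The one spot to tighten is $S=[n]$ in the complement step: there $[n]\notin\Pe_f^0$, so the induction hypothesis cannot be applied to it. As in the paper, simply note that in this case $\Pe_{f^*}^0$ is itself a chain, which puts you in the chain case.
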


\begin{proof} Assume that $S^{\uparrow_f}$ is not a chain, then it must contain two
incomparable elements $U,V$. Since $\emptyset \ne S^{\downarrow_f}\subseteq
U^{\downarrow_f}\cap V^{\downarrow_f}$, the assertion follows by Lemma \ref{lemma:ypsilon}. 

To prove the second assertion, we once again proceed by induction. The statement is quite
clear for chains, note that in that case $\Pe_f=\Pe_f^0$.  Next, assume that $f=f_1\otimes f_2$, then we must have $S=(S_1,S_2)$
and if, say, $S_2$ is minimal, then $S^{\downarrow_f}=(S_1^{\downarrow_{f_1}},S_2)$. Hence
\[
\ell(S^{\downarrow_f})=\ell(S_1^{\downarrow_{f_1}})=\rho_{f_1}(S_1)+\rho_{f_2}(S_2)=\rho_f(S).
\]
Assume the statement is true for $f$ and let $S\in \Pe_{f^*}^0$ be such that
$S^{\downarrow_{f^*}}$ is a chain. If $S=[n]$, then this implies that $\Pe_{f^*}^0$ itself is a
chain. The statement also clearly holds if $S$ is the least element. Apart from these two
cases, we have  $S\in \Pe_f^0$ and $S^{\downarrow_f}$ differs from
$S^{\downarrow_{f^*}}$ just by adding/removing the least element. Since the same happens
with $\Pe_{f^*}$, we see that
\[
\ell(S^{\downarrow_{f^*}})=\ell(S^{\downarrow_f})\pm 1=\rho_f(S)\pm 1=\rho_{f^*}(S).
\]

\end{proof}

\begin{coro}\label{coro:vtl_type} Let $f,g$ be type functions. If  $f\vtl g$ is a type function, then   $f$
or $g$ must be  a chain. 

\end{coro}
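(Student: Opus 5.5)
We argue by contraposition: assume neither $f$ nor $g$ is a chain, and show that $h:=f\vtl g$ violates the property of Corollary \ref{coro:xfree}, namely that for every element $S$ of the reduced structure poset of a type function, $S^{\downarrow}$ or $S^{\uparrow}$ is a chain. This is a purely order-theoretic property of $\Pe^0$, so it suffices to exhibit an element of $\Pe_h^0$ for which both the downset and the upset contain incomparable pairs.

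By Lemma \ref{lemma:pf0_constr}(iv), in each of the four cases $\Pe_h^0$ is (up to removing $[m]$ from $\Pe_f^0$ or $\emptyset$ from $\Pe_g^0$, and possibly inserting the singleton $\{[m]\}$) an ordinal sum $P\star Q$, where $P$ is obtained from $\Pe_f^0$ and $Q$ from $\Pe_g^0$. In an ordinal sum every element of $P$ lies below every element of $Q$. We pick an element $X$ of $\Pe_h^0$ at the junction, for instance a maximal element of $P$ (or the inserted $[m]$ if present). Then $X^{\uparrow_h}\supseteq Q$ and $X^{\downarrow_h}\supseteq X^{\downarrow}$ in $P$.

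We need to check that $Q$ is not a chain and that some maximal element of $P$ has a non-chain downset. Since $g$ is not a chain type, $\Pe_g^0$ is not totally ordered (recall that $f$ is a chain type iff $\Pe_f=\Pe_f^0$ iff it is totally ordered). Removing a least element $\emptyset$ cannot destroy an incomparable pair, because the least element is comparable with everything. Hence $Q$ contains two incomparable elements. Symmetrically, $P$ is not a chain, since removing a greatest element $[m]$ from $\Pe_f^0$ keeps an incomparable pair $U,V$.

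Choosing $X$ is slightly delicate when there is no inserted $[m]$. The main obstacle is to find a single element below which $P$ is not a chain. The cleanest fix is to take $X$ to be any minimal element of $Q$ instead, so that $X^{\downarrow_h}\supseteq P\cup\{X\}$ contains the incomparable pair $U,V$. We then still need $X^{\uparrow_h}$ to be non-chain. If some minimal element of $Q$ has a non-chain upset we are done. Otherwise, by the dual choice (a maximal element $Y$ of $P$, whose upset contains all of $Q$), we must ensure that $Y^{\downarrow}$ is non-chain for some maximal $Y$. Suppose instead that every maximal element of $P$ has a chain downset and every minimal element of $Q$ has a chain upset. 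Since $P$ is not a chain, there are two maximal elements $Y_1\neq Y_2$ of $P$, which are incomparable. Then $Y_1$ and $Y_2$ are incomparable elements of $\Pe_h^0$, and $Y_1^{\uparrow_h}\cap Y_2^{\uparrow_h}\supseteq Q$ is not a chain. This contradicts Lemma \ref{lemma:ypsilon} applied in $\Pe_h^0$, which requires that intersection to be empty or a chain. Indeed, this last argument alone already suffices and avoids the case analysis: it needs only two incomparable elements of $P$ (take any incomparable $U,V$) together with the non-chain $Q$ lying above both. We conclude that $h$ cannot be a type function.
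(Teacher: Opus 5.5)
Your argument is correct once reduced to its last step, and it is a simpler variant of the paper's proof.

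Both proofs start from Lemma~\ref{lemma:pf0_constr}(iv), which makes $\Pe_h^0$ order-isomorphic to $P\star Q$ or $P\star\{X\}\star Q$, with $P=\Pe_f^0\setminus\{[m]\}$ and $Q=\Pe_g^0\setminus\{\emptyset\}$.

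The paper argues directly and splits on $\Min(Q)$:
\begin{itemize}
\item If $Q$ has two minimal elements $Y_1,Y_2$, the downset half of Lemma~\ref{lemma:ypsilon} applied to $P\subseteq Y_1^{\downarrow_h}\cap Y_2^{\downarrow_h}$ forces $f$ to be a chain.
\item If $Q$ has a unique minimal element $Y$, then $P\subseteq Y^{\downarrow_h}$ and $Q\subseteq Y^{\uparrow_h}$, and Corollary~\ref{coro:xfree} forces $f$ or $g$ to be a chain.
\end{itemize}

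You instead apply the upset half of Lemma~\ref{lemma:ypsilon} once, to an arbitrary incomparable pair $U,V\in P$, whose common upset contains the non-chain $Q$. This removes both the case split and the detour through Corollary~\ref{coro:xfree}, whose relevant part is itself derived from Lemma~\ref{lemma:ypsilon}. The paper's case split really comes from insisting on \emph{minimal} elements of $Q$. Dually to your argument, any incomparable pair in $Q$ has a common downset containing $P$.

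Both proofs rely on one unstated step: if $\Pe_f^0$ is totally ordered, then $f$ is a chain type. The paper states the equivalence for $\Pe_f$, not $\Pe_f^0$. The $\Pe_f^0$ version follows from the argument in the proof of Corollary~\ref{coro:meet}, which shows that every element of $\Pe_f\setminus\Pe_f^0$ lies above an incomparable pair of $\Pe_f^0$.

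For the write-up:
\begin{itemize}
\item Keep only the final argument, and present it as a contradiction with Lemma~\ref{lemma:ypsilon} rather than with Corollary~\ref{coro:xfree}.
\item Drop the exploratory middle paragraph. In particular, the inference ``since $P$ is not a chain, there are two maximal elements'' is false in general: two incomparable elements below a common top give a counterexample. It does happen to hold under the supposition you were making at that point.
\end{itemize}
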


\begin{proof} Assume that  $h=f\vtl g$ is a type function. Since the statement is trivial
if $f=1$ or $g=1$, we may assume that $\Pe_f^0, \Pe_g^0\ne \{\emptyset\}$. 

By Lemma
\ref{lemma:pf0_constr}, we see that as a  poset, $\Pe_h^0$ is isomorphic either to the
ordinal sum
$\Pe_h^0\simeq (\Pe_f^0\setminus \{[n]\})\star (\Pe_g^0\setminus \{\emptyset\})$, or there
is some element $X$ in between: $\Pe_h^0\simeq (\Pe_f^0\setminus \{[n]\})\star \{X\}\star
(\Pe_g^0\setminus \{\emptyset\})$. In either case, we have
\[
\Pe_f^0\setminus
\{[n]\}\subseteq \bigcap_{Y\in \Min(\Pe_g^0\setminus\emptyset)}Y^{\downarrow_{h}}.
\]
If $\Min(\Pe_g^0\setminus\emptyset)$ is not a singleton, Lemma \ref{lemma:ypsilon} implies
that $f$ must be a chain. Otherwise, $\Min(\Pe_g^0\setminus\emptyset)=:\{Y\}$ and we have  
\[
\Pe_f^0\setminus
\{[n]\}\subseteq Y^{\downarrow_{h}},\qquad \Pe_g^0\setminus \{\emptyset\}\subseteq
Y^{\uparrow_{h}}
\]
 Corollary \ref{coro:xfree} now implies that $f$ or $g$ must be a chain.

\end{proof}

\begin{coro}\label{coro:meet} Let $S,T\in \Pe_f^0$ be such that $S\wedge T$ exists in
$\Pe_f$. Then $S\wedge T\in \Pe_f^0$.

\end{coro}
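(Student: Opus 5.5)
We need to prove: if $S,T\in\Pe_f^0$ and their meet $S\wedge T$ exists in $\Pe_f$, then $S\wedge T$ has a nonempty label set (or is $\emptyset$), i.e. belongs to $\Pe_f^0$.

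The plan is to use the same induction as in Lemma \ref{lemma:ypsilon} and Corollary \ref{coro:xfree}: check chains, tensor products and complements, and note that permutations are harmless by Lemma \ref{lemma:pf0_constr}(i). The difficulty is that $\Pe_f$ (the full poset) enters, so we need the description of the full poset under these operations, namely
\[
\Pe_{f\otimes g}=\Pe_f\times\Pe_g,\qquad \Pe_{f^*}=\Pe_f\triangle\{\emptyset,[n]\}.
\]
We also need the fact that a product element $(X_1,X_2)$ lies in $\Pe^0_{f_1\otimes f_2}$ iff $X_k\in\Pe^0_{f_k}$ for both $k$ and at least one $X_k$ is minimal. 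This follows from Lemma \ref{lemma:pf0_constr}(iii), given that minimal elements of $\Pe_{f_k}$ lie in $\Pe_{f_k}^0$, which holds since minimal elements have all their indices as labels.

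\textbf{Chains.} If $f$ is a chain type, then $\Pe_f=\Pe_f^0$ and there is nothing to prove.

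\textbf{Complements.} If $S,T$ are comparable, the meet is one of them, so assume $S,T$ are incomparable in $\Pe_{f^*}^0$. Then neither is $\emptyset$ or $[n]$, so both lie in $\Pe_f^0$. A meet in $\Pe_{f^*}$ is either $\emptyset$, which is allowed in the reduced poset, or an element $V\neq\emptyset$. In the second case $V$ is a lower bound in $\Pe_f$ as well, and is also the greatest one there, since the two posets differ only at $\emptyset$ and $[n]$. By induction $V\in\Pe_f^0$. Since $V\neq\emptyset,[n]$, its label set is computed from elements strictly below it, and these differ between $\Pe_f$ and $\Pe_{f^*}$ only by $\emptyset$, which contributes no indices. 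Hence $L^{f^*}_V=L^f_V\ne\emptyset$.

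\textbf{Tensor products.} Let $f=f_1\otimes f_2$ and write $S=(S_1,S_2)$, $T=(T_1,T_2)$, each with at least one minimal coordinate in the respective reduced poset. Meets in a product poset are computed coordinatewise, so $S\wedge T=(S_1\wedge T_1,\,S_2\wedge T_2)$ whenever both coordinate meets exist in $\Pe_{f_k}$; if the meet exists in the product, the coordinate meets exist too.

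\textbf{Case both second coordinates minimal.} Then a common lower bound forces $S_2=T_2$. Hence $S\wedge T=(S_1\wedge T_1,S_2)$, with $S_1\wedge T_1\in\Pe^0_{f_1}$ by induction and $S_2$ minimal, so $S\wedge T\in\Pe_f^0$. The case of both first coordinates minimal is symmetric.

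\textbf{Mixed case.} If $S_2$ and $T_1$ are minimal, the meet has coordinates $S_1\wedge T_1$ and $S_2\wedge T_2$ below the minimal elements $T_1$ and $S_2$, so they equal $T_1$ and $S_2$. The meet is then $(T_1,S_2)$, which has a minimal coordinate and belongs to $\Pe_f^0$.

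The main obstacle is justifying precisely the description of $\Pe^0_{f_1\otimes f_2}$ inside $\Pe_{f_1}\times\Pe_{f_2}$ and the label computation in the complement step. For these I would cite the label formulas of Lemma \ref{lemma:pf0_constr}(ii),(iii) and the structure-poset rules stated in Section \ref{sec:structure}.
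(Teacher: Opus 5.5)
Your argument is correct, but it takes a different route from the paper.

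\textbf{What the paper does.} The paper does not run a new induction. It argues by contradiction. Suppose $X=S\wedge T\notin\Pe_f^0$. Pick $i\in X$, a maximal $X'\in\Pe_f^0$ with $i\in X'\le X$, and an index $j\in X\setminus X'$ labelling some $X''\le X$. This produces two incomparable elements $X',X''\in\Pe_f^0$ below $X$. Since $X\notin\Pe_f^0$ forces $S,T$ to be incomparable, both $X',X''$ lie in $S^{\downarrow_f}\cap T^{\downarrow_f}$. That contradicts the chain property of Lemma \ref{lemma:ypsilon}.

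\textbf{What you do.} You redo the structural induction (chains, complements, tensor products) directly for the meet statement. You work in the full poset via $\Pe_{f^*}=\Pe_f\triangle\{\emptyset,[n]\}$ and $\Pe_{f_1\otimes f_2}=\Pe_{f_1}\times\Pe_{f_2}$. Your individual checks go through:
\begin{itemize}
\item labels are unchanged away from $\emptyset,[n]$ under complement, so the meet transfers between $\Pe_f$ and $\Pe_{f^*}$;
\item coordinate meets exist whenever the product meet exists;
\item the case analysis on minimal coordinates is complete, with the omitted case ``$S_1,T_2$ minimal'' symmetric to your mixed case.
\end{itemize}

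\textbf{Comparison.} The paper's proof is shorter and isolates the structural reason: a non-reduced element always sits above two incomparable reduced ones, and this is forbidden below an incomparable pair. It also reuses an induction already carried out for Lemma \ref{lemma:ypsilon}. Your proof is independent of that lemma and describes the meet explicitly in the tensor case. The price is that you need the full-poset composition rules and an exact, not merely up-to-isomorphism, identification of $\Pe^0_{f_1\otimes f_2}$ inside $\Pe_{f_1}\times\Pe_{f_2}$.

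\textbf{One point to tighten.} For that identification, compute labels directly rather than citing Lemma \ref{lemma:pf0_constr}(iii), which is stated only up to isomorphism.
\begin{itemize}
\item If neither $X_1$ nor $X_2$ is minimal, the elements $(X_1,Y_2)$ and $(Y_1,X_2)$ with $Y_k<X_k$ cover every index of $(X_1,X_2)$, so its label is empty.
\item If $X_2$ is minimal, the label of $(X_1,X_2)$ is $L^{f_1}_{X_1}$.
\item If $X_1$ is minimal as well, the label is all of $X_1\cup(X_2+m)$.
\end{itemize}
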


\begin{proof} Assume that $X=S\wedge T\in \Pe_f$. If $X\notin \Pe_f^0$, there must be some
incomparable $X',X''$ below $X$ that are contained in $\Pe_f^0$. Indeed, any $i\in X$ must
be contained in some label  set of an element below $X$. Let $X'$ a maximal set such that
$i\in X'\le X$ and $X'\in \Pe_f^0$. Then clearly $X'\subsetneq X$, so that there must be
some  $j\in X\setminus X'$. Similarly as before, $j$  must be contained in some label set of
an element $X''\le X$. Since $X'$ is maximal, and does not contain $j$, it cannot be
comparable with $X''$. But $S\wedge T\notin \Pe_f^0$ also implies that $S$ and $T$ are
incomparable elements such that $X',X''\in S^{\downarrow_f}\cap T^{\downarrow_f}$. By Lemma
\ref{lemma:ypsilon}, this is impossible.

\end{proof}

\begin{coro}\label{coro:lattice} Let $S,T\in \Pe_f^0$. 
 $S$ and $T$ have a common lower bound if and only if $S\wedge T$ exists. $S$ and $T$ have
 a common upper bound if and only if  $S\vee T$ exists.  In particular, $\Pe_f^0$ is a lattice iff it is bounded.

\end{coro}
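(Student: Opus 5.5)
The 'if' directions are immediate from the definitions, and the content lies in the 'only if' directions. I would argue both using Lemma \ref{lemma:ypsilon} together with the fact that $\Pe_f^0$ is finite.

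\emph{Upper-bound/lower-bound claims.} If $S\wedge T$ exists, it is a common lower bound, and similarly for the join. For the converse, take $S,T\in\Pe_f^0$ with a common lower bound. If $S$ and $T$ are comparable, say $S\le T$, then $S\wedge T=S$ and we are done. Otherwise they are incomparable, and Lemma \ref{lemma:ypsilon} says that the nonempty set $S^{\downarrow_f}\cap T^{\downarrow_f}$ is a chain. Being a finite nonempty chain, it has a largest element $X$. This $X$ lies below both $S$ and $T$, and every common lower bound of $S$ and $T$ in $\Pe_f^0$ belongs to the chain and hence lies below $X$. So $X=S\wedge T$ in $\Pe_f^0$. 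The join case is the dual argument: $S^{\uparrow_f}\cap T^{\uparrow_f}$ is a nonempty finite chain, and its least element is $S\vee T$.

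There is one subtlety I would address explicitly. The text before Theorem \ref{thm:signalling_pf} uses the meet in $\Pe_f$, while Corollary \ref{coro:meet} only places a meet taken in $\Pe_f$ inside $\Pe_f^0$. I would state the corollary for meets and joins in $\Pe_f^0$, which is what the argument above establishes. If the $\Pe_f$-version is needed, I would show that a maximal common lower bound in $\Pe_f$ of two incomparable elements is unique. The plan is to run the proof of Corollary \ref{coro:meet} in reverse: two incomparable elements of $\Pe_f$ below $S$ and $T$ would produce, via their labels, incomparable elements of $\Pe_f^0$ in $S^{\downarrow_f}\cap T^{\downarrow_f}$, contradicting Lemma \ref{lemma:ypsilon}. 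I expect this transfer between $\Pe_f$ and $\Pe_f^0$ to be the only delicate point.

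\emph{Lattice claim.} If $\Pe_f^0$ is a lattice, then since it is finite, the meet of all its elements is a least element and the join of all its elements is a greatest element, so $\Pe_f^0$ is bounded. Conversely, if $\Pe_f^0$ is bounded, any two elements have the least element as a common lower bound and the greatest element as a common upper bound. By the first part, $S\wedge T$ and $S\vee T$ then exist for all $S,T$, so $\Pe_f^0$ is a lattice.
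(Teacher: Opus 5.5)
Your proof is correct and follows the paper's route: the paper's entire proof is ``Immediate from Lemma~\ref{lemma:ypsilon}'', and you unpack this correctly. The comparable case is trivial. For incomparable $S,T$, the common lower (upper) bounds in $\Pe_f^0$ form a finite nonempty chain, and its top (bottom) element is the meet (join). Finiteness then gives the equivalence between being a lattice and being bounded. Your aside on meets in $\Pe_f$ versus $\Pe_f^0$ is not needed for the statement, which concerns $\Pe_f^0$. Your transfer sketch is nonetheless sound: by the argument of Corollary~\ref{coro:meet}, every common lower bound in $\Pe_f$ of incomparable $S,T\in\Pe_f^0$ already lies in $\Pe_f^0$.
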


\begin{proof} Immediate from Lemma \ref{lemma:ypsilon}.

\end{proof}

\subsection{Normal form in some  examples} \label{app:normal}

In this paragraph, we derive a normal form for the examples in the main text. The simplest
nontrivial examples are the type functions of nonsignalling channels and process matrices in Examples \ref{exm:nspm}   and \ref{exm:pm_pf}.
In this case, normal forms are obtained straightforwardly from the definition of the type
function. We will demonstrate that these forms can be also seen from maximal chains in the
diagrams of the reduced structure posets.
For convenience, we redisplay the diagrams here:
\begin{center}
\vcbox{\includegraphics[scale=0.9]{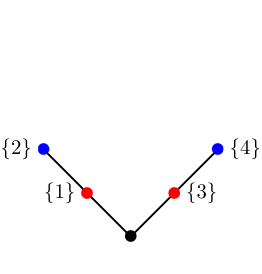}}
\vcbox{$\qquad,\qquad$}
\vcbox{\includegraphics[scale=0.9]{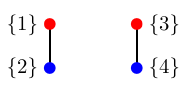}}
\vcbox{$\qquad,\qquad$}
\vcbox{\includegraphics[scale=0.9]{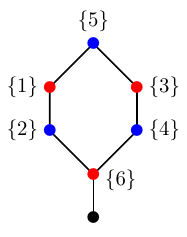}}
\end{center}
The leftmost diagram belongs to the type function $f_{ns}$ of nonsignalling channels, which
is the tensor product of two channel types. For the corresponding concatenation
$s=s^1s^2$, we will denote by $\gamma_2^i$ the channel type functions $2\to 1$ resp. $4\to
3$, with the upper
index indicating on which part of the string the respective function acts. We then have
$f_{ns}=\gamma_2^1\otimes \gamma_2^2$. The
normal form in this case is obtained from Lemma \ref{lemma:causalp} (iii) as
$f_{ns}=(\gamma_2^1\vtl \gamma_2^2)\wedge (\gamma_2^2\vtl\gamma_2^1)$. Notice that 
the diagram has two maximal chains, which we will
identify by the sequence of their labels as $\emptyset-1-2$ and $\emptyset-3-4$. Such
chains are precisely the structure posets of the channel types $\gamma^1_2$ and
$\gamma_2^2$.  These chains are connected at the least element, so there is no signalling
between their elements. 
The two concatenations of the chains in different orders, $\emptyset-1-2-3-4$ and
$\emptyset-3-4-1-2$, correspond exactly to 
the chain types appearing in the normal form above. The no signalling condition indicates
that we need to take the meet to neutralize signalling in the chains.

Similarly, the next diagram describing process matrices with type function $f_{pm}=(\tilde
\gamma_2^1\otimes \tilde \gamma_2^2)^*$. Here the normal form is $f_{pm}=(p_{\{2\}}\vtl
p_{\{4\}})\vee (p_{\{4\}}\vtl p_{\{2\}})$. The diagram has two maximal chains $2-1$ and
$4-3$ that are not connected, so that signalling is possible. The normal form is the join
of the two concatenations of the chains, that is $2-1-4$ (with free output 3) and $4-3-2$
(with free output 1). Here the possibility of signalling indicates that we should take the
join of the two chains.

The diagram on the right corresponds to process matrices with  global past and future.
The diagram has again two maximal chains: $\emptyset-6-2-1-5$ and $\emptyset-6-4-3-5$.
These chains are departing at an output (red) vertex, so that signalling is possible. The
chains also have a common vertex labeled by 5. We will concatenate the differing parts of
the chains, keeping the common parts below and above to retain the signalling conditions.
We obtain  the chain types $c_1\equiv \emptyset-6-2-1-4-3-5$ and
$c_2\equiv\emptyset-6-4-3-2-1-5$. Because of the signalling, we infer that the normal form is the join of these chain types.
This normal form can be obtained also from the expression $1_1\vtl f_{pm}\vtl p_1$ for this
type function.

We next turn to the more involved case of the adapters, see Example \ref{exm:adapters}.
Here we infer normal forms from the structure of the diagrams, and  confirm
by computation. Again, we redisplay the diagrams:
\begin{center}
\vcbox{\includegraphics[scale=0.9]{adapter.pdf}}
\vcbox{$\qquad$ $\qquad$}
\vcbox{\includegraphics[scale=0.9]{switch_to_switch.pdf}}

\end{center}
In the diagram for the type function $a_1$ on the left, there are two main branches
connected at the least element, so that there is no signalling between them. Each branch is
divided into 3 maximal chains, departing at red vertices labeled as 2 resp. 4, so that
signalling is possible between these chains in each branch. We index these chains as
$c_{i,j}$ where $i=1,2$ refers to the branch and $j=1,2,3$ to the individual chains. We
then compose them into larger chains as follows. For each $i$ and $j$, $h_{i,j}$ is a chain
starting as $c_{i,j}$ and appending the different elements from the other chains, first in
the $i$-th branch  (here
some freedom is possible) and then from the other branch, keeping the alternation of
inputs and outputs. We obtain the following chains
\begin{align*}
h_{1,1} &= \emptyset-2-6-5-8-7-1-4-3,\quad h_{2,1}=\emptyset-4-6-5-8-7-3-2-1 \\
h_{1,2}&=\emptyset-2-8-7-6-5-1-4-3,\quad  h_{2,2}=\emptyset-4-8-7-6-5-3-2-1\\
h_{1,3}&=\emptyset-2-1-4-6-5-8-7-3,\quad h_{2,3}=\emptyset-4-3-2-6-5-8-7-1
\end{align*}
Since these chains have even length, they  correspond to chain types, with the same inputs
and outputs as $a_1$. Putting joins and meets between the functions according to the
signalling conditions, we get
\[
a_1=(h_{1,1}\vee h_{1,2}\vee h_{1,3})\wedge(h_{2,1}\vee h_{2,2}\vee
h_{2,3})=(h_{1,1}\wedge h_{2,1})\vee (h_{1,2}\wedge h_{2,2})\vee (h_{1,3}\wedge h_{2,3}).
\]

The diagram for the type function $a_2$ on the right has three main branches departing at
a red vertex, hence signalling between branches is possible. The leftmost branch divides
into two maximal chains, departing at a blue vertex, indicating no signalling. These chains
are reconnected at the vertex labeled by 5. We create two longer chains by appending them,
but keeping the common parts below and above. The two chains are
\[
d_1=\emptyset-12-6-2-1-4-3-5-11,\qquad d_2=\emptyset-12-6-4-3-2-1-5-11.
\]
We now create the chain types for the normal form by appending the remaining maximal
chains with $d_1$ and $d_2$, again keeping the common parts above and below. In this way,
we create six chains $g_{i,j}$, where the  first index $i=1,2,3$ denotes the
the starting main branch and the second index $j=1,2$ indicates whether we used $d_1$ or
$d_2$. Again, there is some freedom in the order of the chains. We obtain
\begin{align*}
g_{1,1}&=\emptyset-12-6-2-1-4-3-5-8-7-10-9-11\\
g_{1,2}&=\emptyset-12-6-4-3-2-1-5-8-7-10-9-11\\
g_{2,1}&=\emptyset-12-8-7-10-9-6-2-1-4-3-5-11\\
g_{2,2}&=\emptyset-12-8-7-10-9-6-4-3-2-1-5-11\\
g_{3,1}&=\emptyset-12-10-9-8-7-6-2-1-4-3-5-11\\
g_{3,2}&=\emptyset-12-10-9-8-7-6-4-3-2-1-5-11.
\end{align*}
Taking into account the signalling conditions, we obtain
\[
a_2=(g_{1,1}\vee g_{2,1}\vee g_{3,1})\wedge (g_{1,2}\vee g_{2,2}\vee
g_{3,2})=(g_{1,1}\wedge g_{1,2})\vee (g_{2,1}\wedge g_{2,2})\vee (g_{3,1}\wedge g_{3,2}).
\]
Note that this normal form uses 6 chain types, which is more than the number of maximal
chains of $\Pe_{a_2}^0$, which is 4. Using the form $a_2=(k\otimes k^*)^*$ in Example
\ref{exm:adapters} and the strategy of the proof of Proposition
\ref{prop:normalform}, we can write the decomposition using only 4 chains, as
\[
a_2=(g_{1,1}\wedge g_{1,2})\vee k_2\vee k_3,
\]
where
\begin{align*}
k_2&=\emptyset-12-8-7-10-9-\{1,6\}-4-3-\{2,5\}-11\\
k_3&=\emptyset-12-10-9-8-7-\{1,6\}-4-3-\{2,5\}-11
\end{align*}
Here some indices have to be  grouped together into the same label sets.

Observe that all these normal forms apart from the last one satisfy the minimax condition
\eqref{eq:minimax}.

\end{document}